\documentclass[11pt,letterpaper]{article}

\usepackage{amsmath,amsthm,amssymb}
\usepackage{thm-restate}
\usepackage[utf8]{inputenc}
\usepackage[dvipsnames]{xcolor}
\usepackage[shortlabels]{enumitem}
\usepackage[hypertexnames=false,colorlinks=true,urlcolor=Blue,citecolor=Blue,linkcolor=BrickRed]{hyperref}
\usepackage[capitalise]{cleveref}
\usepackage[OT4]{fontenc}
\usepackage[ruled]{algorithm2e}
\usepackage{tikz}
\usepackage[margin=1in]{geometry}

\setlist[itemize]{itemsep=1pt, topsep=2pt}
\setlist[enumerate]{itemsep=1pt, topsep=2pt}
\hypersetup{
  pdftitle={Strongly Polynomial Parallel Maximum Flow Revisited},
  pdfauthor={Adam Karczmarz and Paweł Pilarski},
  pdfkeywords={maximum flow, parallel algorithm, work-depth tradeoff, strongly polynomial}
}

\title{Strongly Polynomial Parallel Maximum Flow Revisited\thanks{Supported by the National Science Centre (NCN) grant no. 2022/47/D/ST6/02184.}}
\newcommand{\email}[1]{\href{mailto:#1}{#1}}
\author{Adam Karczmarz\thanks{University of Warsaw, Poland. \email{a.karczmarz@mimuw.edu.pl}.} \and
Paweł Pilarski\thanks{University of Warsaw, Poland. \email{pawel.piilarski@gmail.com}.}}
\date{}

\theoremstyle{plain}
\newtheorem{theorem}{Theorem}[section]
\newtheorem{lemma}[theorem]{Lemma}
\newtheorem{corollary}[theorem]{Corollary}
\newtheorem{definition}[theorem]{Definition}
\Crefname{algocf}{Algorithm}{Algorithms}

\newcommand{\reverse}{\overleftarrow}
\newcommand{\eps}{\varepsilon}
\newcommand{\RR}{\mathbb{R}}
\newcommand{\Ot}{\widetilde{O}}
\newcommand{\poly}{\operatorname{poly}}
\newcommand{\ex}{\mathrm{ex}}
\newcommand{\val}{\mathrm{val}}
\newcommand{\Eess}{E_{\mathrm{ess}}}
\newcommand{\Ep}{E_{\mathcal{P}}}
\newcommand{\changed}[1]{\textcolor{red}{#1}}
\newcommand{\ExactFlow}{\mathsf{ExactFlow}}
\newcommand{\cost}{\mathrm{cost}}

\newcommand{\codeStyle}[1]{\mathsf{#1}}
\newcommand{\dsStyle}[1]{\mathsf{#1}}
\newcommand{\approxFlow}{\codeStyle{ApproxFlow}}
\newcommand{\RemoveRoot}{\codeStyle{RemoveRoot}}
\newcommand{\PostProcess}{\codeStyle{PostProcess}}
\newcommand{\CreateCompact}{\codeStyle{Create\-Aux\-Instance}}
\newcommand{\Initialize}{\codeStyle{Initialize}}
\newcommand{\ConvertBasic}{\codeStyle{ConvertBasic}}
\newcommand{\MaxCap}{\codeStyle{MaxCap}}
\newcommand{\SendFlow}{\codeStyle{SendFlow}}
\newcommand{\RerouteS}{\codeStyle{RerouteShortcut}}
\newcommand{\Fe}{F_{\mathrm{e}}}
\newcommand{\Fs}{F_{\mathrm{s}}}
\newcommand{\ab}{\Gamma}
\newcommand{\roots}{\mathcal{R}}
\newcommand{\Vess}{\roots_{\mathrm{ess}}}
\newcommand{\CP}{\mathcal{P}}
\newcommand{\ACP}{E_{\CP}}
\newcommand{\Eaux}{E_{\mathrm{ess}}}
\newcommand{\rev}[1]{\overleftarrow{#1}}
\newcommand{\cu}{\bar{u}}
\newcommand{\cG}{\bar{G}}
\newcommand{\gV}{V}
\newcommand{\gEi}{E^\circ}
\newcommand{\gGi}{G^\circ}
\newcommand{\ui}{u^{\circ}}
\newcommand{\gEStr}{E_{\mathrm{gap}}}
\newcommand{\gEA}{E_{\mathrm{abd}}}
\newcommand{\gEF}{E_{\mathrm{free}}}
\newcommand{\gEP}{\ACP}
\newcommand{\nr}{r}
\newcommand{\res}[1]{u^{#1}}
\newcommand{\supp}{\mathrm{supp}}
\newcommand{\gEbd}{E_{\mathrm{bound}}}
\newcommand{\defeq}{\stackrel{\scriptscriptstyle{\mathrm{def}}}{=}}
\newcommand{\Fone}{F_{\mathrm{one}}}
\newcommand{\Ftwo}{F_{\mathrm{two}}}
\newcommand{\dsTcInit}{\dsStyle{TcInit}}
\newcommand{\dsTcAdd}{\dsStyle{TcAdd}}
\newcommand{\dsTcWit}{\dsStyle{TcWitList}}
\newcommand{\dsTcCover}{\dsStyle{TcCover}}

\newcommand{\WitRoute}{\dsStyle{WitRoute}}

\tikzstyle{tikzfig}=[baseline=-0.25em,scale=0.5]
\pgfdeclarelayer{edgelayer}
\pgfdeclarelayer{nodelayer}
\pgfsetlayers{edgelayer,nodelayer,main}
\tikzstyle{none}=[inner sep=0mm]
\tikzstyle{default diedge}=[->]
\tikzstyle{peretz}=[fill=white,draw=black,shape=circle,text width=0.6cm,align=center]

\begin{document}
\maketitle

\begin{abstract}
We study the maximum flow problem in directed networks with real capacities in the parallel setting. For a network with $n$ vertices and $m$ arcs, we show that a randomized parallel implementation of a variant of the strongly polynomial max-flow algorithm of Dadush, Orlin, Sidford, and Végh~\cite{DBLP:conf/soda/DadushOSV26} runs in
$\Ot(mn)$ work and $\Ot(m)$ depth.
This improves upon the previously described tradeoffs between work and depth for strongly polynomial parallel maximum flow algorithms: earlier $\Ot(n^3)$-work algorithms have $\Ot(n^2)$ depth~\cite{DBLP:journals/jacm/GoldbergT88,DBLP:journals/jal/ShiloachV82a}, while the known $\Ot(m)$-depth approach uses $\Ot(mn^3)$ work~\cite{fastermincostflow}.
\end{abstract}

\section{Introduction}
Maximum flow is one of the most fundamental and well-studied polynomial-time problems on graphs.
Given a graph $G=(V,E)$ with $n$ vertices and $m$ arcs with non-negative capacities, a source $s\in V$, and a sink $t\in V$, the problem asks to send as much flow from $s$ to $t$ as possible while maintaining flow conservation at every $v\in V\setminus\{s,t\}$ and not exceeding arc capacities.

Numerous max-flow algorithms have been described since the problem was first shown to be polynomial-time solvable~\cite{DBLP:journals/jacm/EdmondsK72,dinic1970algorithm}.
These early algorithms~\cite{DBLP:journals/jacm/EdmondsK72,dinic1970algorithm} were \emph{strongly polynomial}, meaning that (1) their running time, measured in the number of elementary arithmetic operations, depends only on the graph size parameters $n$ and $m$, and (2) they use polynomial space on a Turing machine under the standard binary encoding of capacities.
In contrast, the running time of \emph{weakly polynomial} methods may also depend polynomially on the encoding length of the capacities.
Typically, weakly polynomial max-flow algorithms are studied for integer capacities, and their efficiency is analyzed as a function of $n$, $m$, and $\log U$, where $U$ denotes the maximum capacity in the instance.

Weakly polynomial methods are currently the fastest for integer capacities of subexponential magnitude.
The first max-flow algorithm running in $\Ot((nm)^{1-\delta}\log U)$ time for some $\delta>0$ appeared in the late 1990s~\cite{DBLP:journals/jacm/GoldbergR98}.
Recent years have brought striking improvements, including $m^{1+o(1)}\log U$-time algorithms~\cite{DBLP:journals/jacm/ChenKLPGS25,DBLP:conf/focs/Brand0PKLGSS23,DBLP:conf/focs/Brand0KLMGS24}, as well as even faster methods tailored to denser graphs~\cite{DBLP:conf/stoc/BrandLLSS0W21,DBLP:conf/focs/BernsteinBLST25}.
Notably, the interior-point-based developments~\cite{DBLP:journals/jacm/ChenKLPGS25,DBLP:conf/focs/Brand0PKLGSS23,DBLP:conf/focs/Brand0KLMGS24,DBLP:conf/stoc/BrandLLSS0W21} also solve the more general min-cost flow problem within similar bounds.

At the same time, there has been no polynomial improvement in strongly polynomial max-flow running time, as a function of $n$ and $m$ alone, since the 1980s, when $\Ot(nm)$-time algorithms were obtained~\cite{DBLP:journals/jcss/GalilN80,DBLP:journals/jcss/SleatorT83} by accelerating the blocking-flow method~\cite{dinic1970algorithm} with dynamic trees.
More recent work on strongly polynomial max-flow~\cite{orlin2013max,DBLP:conf/soda/DadushOSV26} focused on matching the $O(nm)$ bound, understanding the barrier behind it, and exploiting additional structure.
In particular,~\cite{orlin2013max,DBLP:conf/soda/DadushOSV26} give faster strongly polynomial algorithms for dense instances with a limited number of \emph{capacitated} (that is, finite-capacity) arcs.
If $m_c=O(m)$ denotes the number of vertices plus the number of capacitated arcs, then one algorithm in~\cite{DBLP:conf/soda/DadushOSV26} runs in $\Ot(n^{\omega-1}m_c)$ time, where $\omega<2.372$ is the matrix multiplication exponent~\cite{DBLP:conf/soda/AlmanDWXXZ25}.

\paragraph{Parallel setting.}
Our focus is on parallel shared-memory algorithms for max flow.
We use the standard \emph{work-depth} viewpoint (e.g.,~\cite{blelloch1996programming,jaja1992introduction,DBLP:journals/jal/ShiloachV82a}) when measuring the efficiency of a parallel algorithm.
The \emph{work} is the total number of operations performed, and the \emph{depth} is the length of the longest chain of sequential dependencies.
Since we care about polynomial speedups only, we state all work and depth bounds in $\Ot(\cdot)$ notation.
At that level of granularity, there is no need to distinguish between standard shared-memory models such as EREW and CREW, because they are equivalent up to polylogarithmic overheads~\cite{jaja1992introduction}.

Parallel computation of max flow has also been studied extensively.
The known algorithms, however, generally exhibit a trade-off between work and depth: lower depth is obtained at the price of higher work.
This applies to both weakly and strongly polynomial algorithms.
The recent almost-linear-time max-flow algorithms~\cite{DBLP:journals/jacm/ChenKLPGS25,DBLP:conf/focs/Brand0PKLGSS23,DBLP:conf/focs/Brand0KLMGS24} are highly sequential.
On the weakly polynomial side, a recent parallel min-cost flow algorithm of~\cite{parallel_int_flow} yields improved bounds for max flow on sufficiently dense instances, achieving near-linear work there together with $\Ot(\sqrt{n}\poly(\log U))$ depth.
Older randomized parallel results also give very low depth in capacity-dependent regimes, but only with high polynomial work and with running times that still depend explicitly on $\log U$ or on related encoding parameters~\cite{DBLP:journals/combinatorica/KarpUW86,DBLP:journals/orl/OrlinS93}.

Strongly polynomial parallel max-flow received most attention in the 1980s.
Blocking-flow-based methods~\cite{dinic1970algorithm} generally perform $\Theta(n)$ blocking-flow augmentations in sequence.
The $\Ot(m)$-time blocking-flow algorithm for acyclic networks~\cite{DBLP:journals/jcss/SleatorT83} seems difficult to parallelize in a non-trivial way; however, one can compute a blocking flow in $\Ot(n^2)$ work and $\Ot(n)$ depth~\cite{DBLP:journals/jal/ShiloachV82a,DBLP:journals/jal/Vishkin92}, which yields $\Ot(n^3)$ work and $\Ot(n^2)$ depth overall~\cite{DBLP:journals/jal/ShiloachV82a}.
A comparable work-depth pair is also achieved by the push-relabel framework~\cite{DBLP:journals/jacm/GoldbergT88}.
A near-linear depth bound can be extracted from Orlin's strongly polynomial min-cost flow framework~\cite{fastermincostflow}: the algorithm is presented as a sequence of $\Ot(m_c)$ shortest-path computations, so plugging in a polylog-depth parallel shortest-path routine yields $\Ot(m_c)$ depth at the cost of $\Ot(m_cn^3)$ work.
Parallel max-flow was also investigated in~\cite{DBLP:journals/jpdc/PeretzF22}\footnote{Peretz and Fischler~\cite{DBLP:journals/jpdc/PeretzF22} describe a relatively simple algorithm and argue that it can be implemented in $\Ot(n^4)$ work and $\Ot(n)$ depth. However, we were unable to verify the claimed $O(n)$-iteration bound as stated. We discuss this issue, and give a counterexample to the argument from~\cite{DBLP:journals/jpdc/PeretzF22}, in~\Cref{sec:error}.}.

There is strong evidence that strongly polynomial $\Ot(1)$ depth may be out of reach.
For binary-encoded capacities, max-flow is known to be $\mathrm{P}$-complete under log-space reductions, and thus also under $\mathrm{NC}$ reductions~\cite{DBLP:journals/tcs/GoldschlagerSS82}.
The relevant reduction to a Circuit Value Problem (see also~\cite{limits_book}) uses a sparse network with capacities of order $2^{\Theta(n)}$.
Since arithmetic on polynomial-bit integers has polylogarithmic depth (see, e.g.,~\cite{DBLP:journals/combinatorica/KarpUW86}), a strongly polynomial $\Ot(1)$-depth max-flow algorithm would imply algorithms even for such exponential-capacity networks, proving $\mathrm{P}=\mathrm{NC}$.
Achieving truly sublinear $O(n^{1-\epsilon})$ depth (where $\epsilon>0$) for strongly polynomial max-flow would also constitute a breakthrough, as no truly sublinear-depth parallel algorithms are known for Circuit Value Problems (cf.~\cite{limits_book}).

To sum up, some of the known strongly polynomial parallel algorithms achieve a natural near-linear depth barrier.
This raises the question whether one can reach near-linear depth without giving up near-state-of-the-art (sequential) strongly polynomial work.

\subsection{Our results}
We answer this question in the affirmative.
Our main result is the following.

\begin{theorem}\label{thm:main}
There is a Monte Carlo randomized strongly polynomial max-flow algorithm with $\Ot(mn)$ work and $\Ot(m_c)$ depth.
The returned result is correct with high probability.
\end{theorem}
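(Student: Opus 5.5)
The plan is to follow the framework of Dadush, Orlin, Sidford, and Végh~\cite{DBLP:conf/soda/DadushOSV26}. That framework reduces strongly polynomial max flow to $O(m_c)$ rounds, up to logarithmic factors. In each round we solve an approximate max-flow (or approximate min-cost-like) subproblem on an auxiliary instance, and we identify at least one arc that can be \emph{fixed}: it is either permanently saturated or permanently abundant. A fixed arc is then contracted or removed, which reduces the number of capacitated arcs or vertices. So the outer loop has $\Ot(m_c)$ iterations. For the depth bound it suffices that each iteration has polylogarithmic depth. For the work bound, the total work over all iterations must be $\Ot(mn)$. A naive bound of $\Ot(m)$ per iteration times $m_c$ iterations is too large, so the work has to be charged more carefully.

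The first step is to choose the subroutine used inside each round. Dadush et al.\ use a weakly polynomial approximate flow solver on an instance whose capacities are polynomially bounded after rounding. Because the rounding error can be tolerated, we can instead use a parallel approximate solver. The candidate is the parallel min-cost/max-flow algorithm of~\cite{parallel_int_flow}, but its depth is $\sqrt{n}$, which is too much. So we will use a $(1-\eps)$-approximate max flow with polylogarithmic depth, for example via a parallel multiplicative-weights or electrical-flow based method. Alternatively, we will arrange the rounding so that only a constant-factor approximation is needed, which can then be boosted. The randomization and the Monte Carlo guarantee in \Cref{thm:main} come from this approximate solver (e.g., randomized parallel Laplacian solvers or low-diameter decompositions). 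By a union bound over the $\Ot(m_c)$ calls, every call succeeds with high probability.

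The second step is the work accounting. Each round is run on an auxiliary instance with $O(n)$ vertices and a sparsified arc set. Arcs already known to be abundant form strongly connected pieces that can be contracted. A vertex can be contracted at most $n$ times and a capacitated arc fixed at most $m_c$ times. The key point is to charge the per-round work to the progress made in that round. Either we fix a batch of arcs proportional to the size of the auxiliary instance, or the auxiliary instance has size $\Ot(n)$ and the round costs only $\Ot(n)$ work. With $m_c$ rounds of $\Ot(n)$ work, plus $\Ot(m)$ work per contraction phase over $n$ phases, the total is $\Ot(mn)$. To keep maintenance within polylogarithmic depth, we will maintain contracted components with parallel union-find or connectivity primitives and recompute reachability among roots (transitive-closure-like witness lists) in polylog depth.

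The third step is the postprocessing. After all arcs are fixed, the exact flow must be recovered on the original graph by undoing the contractions, with flows routed inside contracted abundant components. Routing inside an abundant component can be done by sending flow along a spanning in-tree and out-tree through a root, which takes polylogarithmic depth per component and $O(m)$ work per level. However, the contractions are nested, and unrolling them naively would take depth proportional to the nesting depth, which could be $n$. Since the nesting depth is at most $n\le m_c$, this still fits within the budget. I expect the main obstacle to be the second step: guaranteeing $\Ot(n)$-size auxiliary instances, or amortizing their size, so that the total work stays at $\Ot(mn)$ rather than $\Ot(m\cdot m_c)$ while the per-round depth remains polylogarithmic. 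I would first try to bound the auxiliary instance by the $O(n)$ arcs of a forest of non-fixed essential arcs together with the current root set.
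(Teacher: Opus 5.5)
Your plan depends on running each of the $\Theta(m_c)$ rounds in polylogarithmic depth. What breaks this is a step you treat as routine: answering the transitive-cover queries, i.e., maintaining reachability through the abundant subgraph among the essential roots. With known techniques, recomputing that reachability in polylogarithmic depth means repeated Boolean matrix squaring. That costs roughly $n^{\omega}$ work per round and $\Ot(m_c n^{\omega})$ in total. The incremental transitive-closure structure behind the sequential $O(nm)$ bound instead extends paths one arc at a time and may need $\Omega(n)$ depth for a single batch.

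The paper never achieves polylogarithmic depth per round. It combines two ingredients that are absent from your proposal. The first is a batch-incremental transitive closure (\Cref{thm:inc-tc}) with $\Ot(mn)$ total work and $\Ot(n^{1/3})$ depth per batch. It uses a random hitting set of size $\Theta(n^{2/3}\log n)$, parallel BFS truncated at distance $n^{1/3}$, an $\Ot(1)$-depth repeated-squaring structure on the hitting set, and shortcut arcs. The second is a change to the framework (\Cref{def:essential_arc}). The essential-arc threshold $\Gamma^{-(k+2)}\eps$ is lowered until at least $2n^{1/3}$ arcs are essential, and the successor condition becomes $\eps'\le\Gamma^{-k}\eps$. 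The $O(m_c)$ bound on the total number of essential arcs is then re-proved, so there are only $O(m_c/n^{1/3})$ iterations and the closure depth sums to $\Ot(m_c)$. Progress in this framework is counted in essential arcs, not in arcs fixed and contracted one per round; that is what makes the batching possible.

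Your work accounting is also ungrounded. The auxiliary instances are transitive covers of size up to $\min(m_i^2,m)$, totalling $O(m_c\sqrt{m})$. A forest of essential arcs plus the root set does not preserve reachability along abundant paths, so it cannot replace these covers.

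The flow subroutine is the second gap. No directed solver with near-linear work and polylogarithmic depth is known even for constant-factor approximation. Deciding whether the max-flow value is positive is already $s$-$t$ reachability, which has no such algorithm. So neither the electrical-flow or multiplicative-weights methods you mention nor boosting rescues the plan. Moreover, \textsf{ApproxFlow} must return an acyclic basic flow together with the arc $e^*$. The known conversions (dynamic-tree cycle cancelling and basic-flow conversion) are sequential, and your proposal does not address this requirement.

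Your reason for discarding \cite{parallel_int_flow} is mistaken: its $\Ot(\sqrt{n})$ depth fits the budget. When $m_i\le\sqrt{m}$, the auxiliary graph has $\hat n_i=O(m_i)$ vertices, so the depth $\sqrt{\hat n_i}$ can be charged to the $m_i$ essential arcs. Only $O(m_c/\sqrt{m})$ iterations have $m_i>\sqrt{m}$. The paper runs this solver with random polynomially bounded costs. By the isolation lemma, the min-cost maximum flow on the rounded instance is then unique, hence acyclic and basic. It caps capacities at $m^3M+1$ so that infinite-capacity arcs are never saturated. After scaling back, it restores basicness by inserting only the finite-capacity arcs into the forest, in $\Ot(m_c')$ depth.

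Finally, the widened essential range is no longer polynomially bounded. So iterations with $k>3$ call Orlin's exact strongly polynomial min-cost flow (again with random costs). This is affordable only because those auxiliary instances have $O(n^{1/3})$ vertices. Your postprocessing also omits the $\Ot(n)$-depth removal of flow from the auxiliary arcs $(v,s)$ and $(t,v)$; done sequentially, this step relies on dynamic-tree flow decomposition.
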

By verifying the output and rerunning on failure, we obtain a Las Vegas algorithm with expected $\Ot(mn)$ work and expected $\Ot(m_c)$ depth. Verification checks the output flow's feasibility and then tests if $t$ is unreachable from $s$ in the residual network.
A naive parallel BFS-based reachability test requires $\Ot(m)$ work and $\Ot(n)$ depth, which is dominated by the stated bounds.

In the strongly polynomial regime, the algorithm behind Theorem~\ref{thm:main} matches the best known work and the best known depth simultaneously, up to polylogarithmic factors.
To prove Theorem~\ref{thm:main}, we describe a parallel implementation of a variant of the recent algorithm of Dadush, Orlin, Sidford, and V\'egh~\cite{DBLP:conf/soda/DadushOSV26}.

An incremental transitive closure data structure with $O(nm)$ total update time~\cite{DBLP:journals/tcs/Italiano86} is a crucial element of state-of-the-art strongly polynomial max-flow algorithms for sparse graphs~\cite{orlin2013max,DBLP:conf/soda/DadushOSV26}.
However, that data structure may need $\Omega(n)$ depth to process a single batch of updates.
Along the way, we obtain an auxiliary result that may be of independent interest: a parallel batch-incremental data structure for transitive~closure.

\begin{theorem}\label{thm:inc-tc}
Let $G$ be a directed graph subject to batches of arc insertions.
There is a Monte Carlo randomized data structure that explicitly maintains the reachability matrix of $G$ with $\Ot(mn)$ total work, where $m$ is the number of arcs in the final graph.
Each batch is processed in $\Ot(n^{1/3})$~depth.
The maintained matrix is correct with high probability.
\end{theorem}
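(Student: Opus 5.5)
Our plan is to parallelize Italiano's incremental transitive closure, which keeps, for every source $r\in V$, a reachability tree $T_r$ (the set $R(r)$ of vertices reachable from $r$). When an arc $(x,y)$ is inserted with $x\in R(r)$ and $y\notin R(r)$, the structure runs a search from $y$ that visits only vertices not yet in $R(r)$. Each vertex enters $R(r)$ at most once and scans its out-arcs once. This gives $O(m)$ work per source and $O(nm)$ in total.

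For a batch, we process all $n$ sources independently in parallel, so it suffices to handle one fixed source $r$. Let $B$ be the batch, let $A=R(r)$ before the batch, and let $G'$ be the graph after inserting $B$. The task is to compute $A'=R_{G'}(r)$ with work $\Ot(\sum_{v\in A'\setminus A}\deg(v)+|B|)$ and depth $\Ot(n^{1/3})$. The newly reached vertices $A'\setminus A$ are exactly those reachable in $G'[V\setminus A]$ from the heads of arcs in $B$ (and of arcs scanned from newly reached vertices) whose tail lies in $A$.

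Doing this with plain parallel BFS could take $\Theta(n)$ depth, because the newly reached region may contain long paths. This is the main obstacle: we need reachability with work linear in the explored region but depth only $\Ot(n^{1/3})$. The exponent $1/3$ points to a hopset/shortcutting argument. Our approach is to sample each vertex as a hub independently with probability about $n^{-1/3}\log n$. We would maintain, amortized across batches, reachability among hubs and from hubs, so that every reachability path can be cut into segments of $\Ot(n^{1/3})$ hops between consecutive hubs with high probability. This is the source of the Monte Carlo guarantee and the ``correct with high probability'' claim.

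Concretely, for each source $r$ we run the parallel BFS from the new frontier for only $\Ot(n^{1/3})$ rounds. Whenever it reaches a hub $h$, we union in the already maintained set $R(h)$ in one step. Here we use the fact that $R(h)$ is stored explicitly as a row of the matrix, and that any vertex in $R(h)$ not already in $R(r)$ must be added. A batch may create new reachability among hubs, so we would process the hubs first, level by level. Among the $\Ot(n^{2/3})$ hubs we compute the closure of the hub-to-hub graph, whose arcs are hub pairs joined by a short path. We do this with repeated squaring in polylog depth, and its cost, $\Ot(n^{2/3\cdot\omega})$ per batch or a charged amortized bound, must be checked against $\Ot(mn)$ in total.

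For work accounting, each vertex $v$ is added to each $R(r)$ at most once, and we scan its out-arcs only at that moment. Hub-merge steps cost $O(n)$ per merge and are charged to newly added vertices via word-level or set-difference operations that touch only vertices outside $R(r)$. We expect the delicate points to be the amortization for unions with hubs and the handling of ``already reached'' duplicates under concurrent BFS, which we resolve with standard priority/arbitrary-CRCW writes, at polylog overhead.
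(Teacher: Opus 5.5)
\textbf{Gap 1: merging whole rows $R(h)$ breaks the work bound.} Your architecture matches the paper's: hubs sampled at rate about $n^{-1/3}\log n$, per-source BFS truncated to $\Ot(n^{1/3})$ rounds, and a closure over the hub graph whose arcs come from short paths. The problem is how you exploit hubs. Merging the whole row $R(h)$ into $R(r)$ cannot meet the $\Ot(nm)$ work bound, and no amortization rescues it. A merge costs $\Theta(n/\log n)$ word operations (or $\Theta(|R(h)|)$ with lists), however few vertices of $R(h)\setminus R(r)$ it contributes. There is no general way to enumerate only that difference, so your charge to newly added vertices fails.

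A concrete bad instance: let $n/4$ sources $r_i$ each have a single arc to the root $\rho$ of a complete binary out-tree on $n/4$ vertices, whose leaves all point into a path on $n/4$ vertices. Insert the arcs $(r_i,\rho)$ in the last batch. Then:
\begin{itemize}
\item $m=\Theta(n)$, and every tree hub $h$ has $|R(h)|\ge n/4$;
\item in expectation, $\Theta(n^{2/3}\log n)$ tree hubs have no hub as a proper ancestor;
\item the BFS from each $r_i$ reaches each of these hubs within $O(\log n)$ rounds and merges its row.
\end{itemize}
That is $\Omega(n^{5/3})$ work per source and $\Omega(n^{8/3})$ overall, against the budget $\Ot(nm)=\Ot(n^2)$. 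Building the hub rows themselves by unions has the same defect.

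\textbf{Gap 2: the hub closure cost is unresolved.} Recomputing the hub closure by repeated squaring in every batch costs $\Ot(n^{2\omega/3})$ per batch. Over $m$ single-arc batches this is up to $\Ot(m\,n^{2\omega/3})$, which is not $\Ot(nm)$ since $2\omega/3>1$. The ``charged amortized bound'' you leave open is exactly what is needed here.

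\textbf{What the paper does instead.} The paper never merges rows.
\begin{itemize}
\item \emph{Hub closure.} It keeps a transitively closed $|H|\times|H|$ matrix. Per batch, it runs $O(\log n)$ doubling rounds that extend only the newly discovered pairs $(a,b)$ by every hub $c$: add $(a,c)$ when $M_{b,c}=1$, and $(c,b)$ when $M_{c,a}=1$. Each pair is discovered once and charged $O(|H|)$ work, giving $\Ot(|H|^3)=\Ot(n^2)$ total work and $\Ot(1)$ depth per batch.
\item \emph{Shortcut arcs.} Hub reachability is then used as shortcut arcs inside resumable truncated BFS runs. For each hub $h$, a BFS in $G$ plus arcs $(h,h')$ to all hubs $h'$ reachable from $h$ yields $H\times V$ reachability. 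Running the same construction on the reversed graph yields, for every $v$, the set of hubs reachable from $v$. Finally, for each $v\in V$, a BFS in $G$ plus arcs $(v,h)$ to those hubs computes the row of $v$.
\item \emph{Correctness.} It rests on a fixed family of paths determined by the update sequence: for each pair, a path in the graph at the moment the pair first becomes reachable. If such a $v\to w$ path has more than $n^{1/3}$ vertices, its last $n^{1/3}$ vertices contain a hub $h'$ with high probability. Hence $w$ is within $n^{1/3}$ hops of $v$ via the shortcut $(v,h')$.
\item \emph{Work.} Each BFS visits every vertex at most once and scans at most $m+|H|$ arcs. The total work is therefore $\Ot(n(m+|H|))=\Ot(nm)$, with $\Ot(n^{1/3})$ depth per batch.
\end{itemize}
The direct $v\to h$ shortcuts, obtained via the reversed graph, are the idea your proposal is missing.
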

Thus, we preserve near $O(nm)$ total work while reducing the depth per batch from $O(n)$ to $\Ot(n^{1/3})$.
To the best of our knowledge,
no prior work gives nontrivial work–depth bounds for batch-incremental transitive closure while retaining near-optimal total work.
While there is a substantial body of work on batch-dynamic parallel graph data structures, e.g.,~\cite{DBLP:conf/spaa/AcarABD19,DBLP:conf/spaa/AndersonBT20,DBLP:conf/spaa/LiuSYDS22}, much of it concerns undirected graphs and polylogarithmic-depth regimes.

\subsection{Roadmap} In~\Cref{sec:preliminaries} we set some notation and introduce the preliminary
notions needed to describe the algorithm we build on. In~\Cref{sec:alg_overview} we give a high-level overview of the algorithm of~\cite{DBLP:conf/soda/DadushOSV26}.
Building on that, in~\Cref{sec:technical_overview} we describe the challenges in parallelizing its framework and sketch how they are addressed in our parallel implementation.
This order lets us present our contribution at a high level before turning to its technical justification.

The following part of the paper provides that justification through the details of our parallelization of~\cite{DBLP:conf/soda/DadushOSV26}.
We do not give a complete exposition of the algorithm of Dadush et al.~\cite{DBLP:conf/soda/DadushOSV26}; instead we focus on the needed adjustments.
In~\Cref{sec:additional} we cover some further details of the algorithm of~\cite{DBLP:conf/soda/DadushOSV26} that are easily parallelizable.
\Cref{sec:changes} is devoted to the adjustments we make to the framework of~\cite{DBLP:conf/soda/DadushOSV26}.
In~\Cref{sec:parallelizing_basics} we deal with the parallel bottlenecks of the adjusted algorithm, except for the batch-incremental parallel incremental transitive closure data structure, which is presented in~\Cref{sec:hitting_set_itco}.
Finally,~\Cref{sec:removing_v_s_and_t_v} addresses parallelizing a certain preconditioning step of~\cite{DBLP:conf/soda/DadushOSV26} (discussed in~\Cref{sec:preliminaries}).

\section{Preliminaries}\label{sec:preliminaries}
We follow the notation of~\cite{itco_flow}.\footnote{In the following, we often refer to the full version~\cite{itco_flow} of~\cite{DBLP:conf/soda/DadushOSV26}.}
Let $G = (V, E)$ be a directed multigraph where $V$ is a set of vertices and $E$ is a set of distinguishable arc objects; in particular, parallel arcs are allowed. Let $n = |V|$ and $m = |E|$. For an arc $e$ with tail $i$ and head $j$, we write $e = (i, j)$; this notation does not identify $e$ with the ordered pair $(i, j)$. Let $u$ be the capacity vector, so that $u_e\in \RR_{\geq 0}\cup \{\infty\}$ represents the capacity of an arc $e\in E$. Let $m_c$ denote the number of arcs in $E$ with positive finite capacities plus $n$.

We will discuss maximum flow instances on different graphs and capacity vectors.
We denote by $(G, u)$ a maximum flow instance where $G=(V,E)$ is the underlying network and $u$ is a capacity vector.
The graph $G$ has a distinguished source $s\in V$ and sink $t\in V$.

For a vector $f\in \RR_{\geq 0}^E$, let the excess of a vertex $v\in V$ be defined as
\[\ex_f(v) = \sum_{e\in E:\,\operatorname{head}(e)=v}f_e - \sum_{e\in E:\,\operatorname{tail}(e)=v}f_e.\]
$f$ is a \emph{feasible flow} or \emph{flow} if for every arc $e$, it satisfies $0 \le f_e \le u_e$ and for every vertex $v \in V \setminus\{s, t\}$ it satisfies $\ex_f(v) = 0$. The \emph{value} of flow $f$ is defined as
\[\val(f) = -\ex_f(s) = \ex_f(t).\]
The value of the maximum flow for an instance $(G, u)$ is denoted by $\nu(G, u)$.

As in~\cite{itco_flow}, we make the following additional assumptions about the input network~$G$:
\begin{itemize}
   \item $n\geq 4$ and $m\geq n-1$. This is easy to guarantee by adding and/or removing isolated vertices and possibly adding arcs to make the resulting graph weakly connected.
   \item $G$ includes distinguished infinite-capacity auxiliary arcs, denoted $(v, s)$ and $(t, v)$, for all ${v\in V\setminus\{s,t\}}$.
Clearly, the auxiliary arcs do not change the maximum flow value.
However, an optimum flow might use these arcs.
In~\Cref{sec:removing_v_s_and_t_v}, we show how to convert any flow into a flow whose value is at least the original value and that does not use the additional arcs, with $\Ot(m)$ work and $\Ot(n)$ depth.
Since $m\geq n-1$, introducing the auxiliary arcs can increase the network size by a constant factor.
\item $E$ contains no arc from $s$ to $t$. Every such direct arc is saturated in any optimum flow and can be removed after accounting for its capacity.
\item There is no $s\to t$ path in $G$ consisting of infinite-capacity arcs only.
If this were the case, the solution is unbounded.
A graph search detects such a path in $\Ot(m)$ work and $\Ot(n)$ depth.
\item
Finally, each $e\in E$ has a distinct reverse arc $\reverse{e}\in E$, where $\reverse{\reverse{e}} = e$ and, if $e = (i, j)$, then $\reverse{e} = (j, i)$. Oppositely directed input arcs may be paired arbitrarily; for every as-yet-unpaired input arc, we add a distinct reverse arc of capacity $u_{\reverse{e}} = 0$ and pair the two. This can increase the number of arcs by at most a factor of 2.
\end{itemize}

For a flow vector $f$, let $u^f_e$ be the \emph{residual capacity} of arc $e$, defined as
\[u^f_e = u_e - f_e + f_{\reverse{e}}.\]

\subsection{Acyclic basic flows}\label{sec:preliminaries_acyclic_basic}
An important ingredient of the algorithm presented in \cite{itco_flow} is the notion of an \emph{acyclic basic~flow}.

\begin{definition}[Acyclic flow]
   A feasible flow $f$ is acyclic if the set of arcs $e$ such that $f_e > 0$ forms an acyclic graph.
\end{definition}
A basic feasible flow is a basic feasible solution to a max-flow linear program. The following definition contains two equivalent conditions characterizing a basic flow without referring to the theory of linear programming.
\begin{definition}[Basic flow]\label{def:basic_flow}
   A feasible flow $f$ is called \emph{basic} if one of the following equivalent definitions holds:
   \begin{enumerate}
      \item\label{basic_forest} The multiset of \emph{undirected} arcs $\{u,v\}$ obtained from the arcs $e=(u, v)$ satisfying $0 < f_e < u_e$ forms a forest where $s$ and $t$ are in different trees.
      \item\label{basic_convex} $f$ is not a non-trivial convex combination of some two feasible flows.
   \end{enumerate}
\end{definition}

Note that in item~\ref{basic_forest} of~\Cref{def:basic_flow}, a flow cannot be basic if there is an arc $e$ such that $0 < f_e < u_e$ and $0 < f_{\reverse{e}} < u_{\reverse{e}}$: the undirected arcs obtained from $e$ and $\reverse{e}$ would form a cycle and thus the considered multiset would not be a forest.

In \cite{itco_flow}, a basic flow is defined using item~\ref{basic_convex} of~\Cref{def:basic_flow} and it is proven in~\cite[Lemma~2.2]{itco_flow} that for each basic flow, item~\ref{basic_forest} holds as well.
For completeness, we prove the other implication in~\Cref{lem:basic_flow} in~\Cref{sec:omitted}.

\section{Overview of the algorithm of Dadush, Orlin, Sidford, and Végh}\label{sec:alg_overview}
In this section we give a high-level sketch of the $\Ot(nm)$-time algorithm from \cite{itco_flow} with a special focus on the ingredients that need to be taken care of in the parallel implementation.
An informal skeleton of the framework of~\cite{itco_flow} outlined in this section can be found in~\Cref{alg:framework-skeleton}.
There, the parallel bottlenecks of the framework have been marked red.
Further details of the algorithm from~\cite{itco_flow} are covered in~\Cref{sec:additional}.

The algorithm of~\cite{itco_flow} uses a~\emph{scaling} approach, i.e., it operates in iterations that gradually improve the approximation error of the constructed solution. Specifically, it maintains a feasible flow $f$ along with an estimate $\eps$ that upper-bounds the flow left to be sent in the residual network $(G, u^f)$.
In each iteration the error~$\eps$ either becomes at most $\Gamma^{-3}$ times its previous value, where $\Gamma := 4n^2$, or vanishes completely. This is achieved as follows.

The arcs of $(G, u^f)$ are divided into categories.
First, \emph{abundant} arcs are arcs with such large capacity that they can be treated like infinite-capacity arcs. More precisely, $e$ is called abundant if $u_e^f \ge \Gamma\eps$. If an arc becomes abundant at some point, it stays abundant in subsequent iterations.
Note that, by the definition of $\eps$, $G$ does not contain an $s\to t$ path composed of abundant arcs.
If both $e$ and $\reverse{e}$ are abundant, $e$ is called \emph{free}.

Abundant and free arcs define a partition of $(G,u^f)$ into \emph{free components}.
The free component containing $s$ is a set of vertices $S$ reachable from $s$ using paths composed of abundant arcs only.
Symmetrically, the free component of $t$ consists of vertices $T$ that can reach $t$ via abundant arcs.
Recall that $S\cap T=\emptyset$.
Finally, the free components of $V\setminus (S\cup T)$ are the same as the connected components of the subgraph of free arcs induced on $V\setminus (S\cup T)$.

Define $\Ep\subseteq E$ to be the arcs that connect distinct free components.
The arcs outside~$\Ep$ and all but $O(n)$ abundant arcs can be effectively ignored, as each free component $Y$ is ``abundantly'' strongly connected, that is, enough flow can be routed freely within $G[Y]$ via~$O(|Y|)$ abundant arcs only.
Now, the non-abundant arcs in $\Ep$ are partitioned into

\begin{enumerate}
   \item \emph{Essential} arcs $\Eess$ that satisfy $\Gamma\eps > u_e^f \ge \Gamma^{-5}\eps$.
   \item \emph{Small} arcs that satisfy $u_e^f < \Gamma^{-5}\eps$.
\end{enumerate}

Roughly speaking, in a single iteration, the flow $f$ is augmented by a $\Gamma^{-5}$-approximate maximum flow in an \emph{auxiliary graph} $\bar{G}$ obtained from the residual network $(G, u^f)$ by
\begin{enumerate}[(1)]
   \item ignoring arcs $E\setminus \Ep$ and the small arcs,
   \item preserving the essential arcs, and
   \item compressing connections between $s$, $t$, and the endpoints of $\Eess$ via abundant paths.
\end{enumerate}
In~\cite{itco_flow}, it is proven that, in this way, both the number of iterations and the total number of essential arcs over all iterations are $O(m_c)$.

For correctness and efficiency, the computed approximate maximum flow needs to satisfy additional requirements. One uses an \emph{approximate flow solver} $\approxFlow$ defined as follows.

\begin{definition}\label{def:approx_flow}An approximate flow solver $\approxFlow(G, u, M)$ is a procedure that, for an instance $(G, u)$ with $m:=|E(G)|$ and a precision parameter $M\in\RR_{>0}$, computes an acyclic basic flow $f$ and $e^*\in E(G)\cup\{\bot\}$, where we put $u^f_\bot:=0$, such that
   \[ \nu(G, u^f) \le m\,u^f_{e^*} \le m\nu(G, u^f) \le \frac{\nu(G, u)}{M}. \]
\end{definition}

\begin{algorithm}[t!]
    \caption{\label{alg:framework-skeleton}The high-level skeleton of the max-flow algorithm of Dadush et al.~\cite{itco_flow}. The main parallel bottlenecks -- the iteration count, $\approxFlow$, and updating the transitive closure -- are highlighted in red.}
   Maintain a feasible flow $f$, an error bound $\epsilon$, and an incremental set of abundant arcs\;
   Maintain the free components of $(G,u^f)$ and the subset $\Ep$\;
   Set up an incremental transitive closure data structure $\mathcal{D}$ for the abundant subgraph\;
   {\color{red}\While{$\epsilon>0$}{
      {\color{black}
      Identify the current iteration's essential arcs $\Eess\subseteq \Ep$\ and their endpoints $X_i$\;
      Compress the reachability via abundant arcs between $X_i\cup\{s,t\}$ into $H$ using $\mathcal{D}$\;
      Build an auxiliary instance $(\bar{G},\bar{u})$ from essential arcs and $H$\;
      }
      $f'\gets \approxFlow(\bar{G},\bar{u},\poly(n))$\;
      {\color{black}
      Augment $f$ by $f'$ (potentially introducing auxiliary arcs to $G$)\;
      Decrease $\epsilon$ and identify new abundant arcs and free components\;
      }
      Insert the new abundant arcs into $\mathcal{D}$\;
   }}
   Reroute flow through auxiliary arcs in $f$ to original arcs of $G$\;
   \Return{$f$}\;
\end{algorithm}

Above, producing an arc $e^*$ whose residual capacity approximates the error serves as a mechanism to help guarantee low bit complexity of the values computed by the algorithm.

As shown in~\cite{itco_flow}, $\approxFlow$ can be implemented by solving a certain \emph{exact} max-flow instance with integral capacities bounded by $\poly(n,M)$ (obtained from $(G,u)$ by scaling, rounding, and capping) followed by an $\Ot(m)$-time strongly polynomial step converting the flow into a basic acyclic one~\cite{DBLP:journals/jcss/SleatorT83,basic_flow_algorithm_paper}.
If one uses the state-of-the-art weakly polynomial algorithm~\cite{DBLP:journals/jacm/ChenKLPGS25} for the former, $\approxFlow$ runs in $m^{1+o(1)}$ time.

Let us now discuss in more detail how abundant subgraph compression is performed in the auxiliary network $\bar{G}$ to guarantee the algorithm's efficiency.
It is useful to see why the compression is necessary at all.
In principle, in every iteration one could simply include all the abundant and essential arcs in $\bar{G}$. However, in each of $\Theta(m_c)$ iterations we could possibly have just a single essential arc and $\Theta(m)$ abundant arcs, which would make the total cost of $\approxFlow$ calls $\Omega(m_cm)$ even if a linear-time implementation of $\approxFlow$ were used.

In~\cite{itco_flow}, the problem of abundant arc compression is reduced to the following \emph{transitive cover} problem.
Given a graph $G=(V,E)$ and a subset $S\subseteq V$, a graph $H=(V',E')$ is a transitive cover of $S$ with respect to $G$ if $S\subseteq V'\subseteq V$, $E'$ is a subset of the transitive closure of $G$ induced on~$V'$, and pairwise reachability in $G$ between vertices in~$S$ is preserved in $H$.
With that notion in hand, in iteration $i$ with $m_i$ essential arcs $\Eess$ whose endpoint set is $X_i$, it would be enough (roughly speaking) if the auxiliary graph $\bar{G}_i$ contained $\Eess$, plus a transitive cover $H_i$ of $X_i\cup\{s,t\}$ with respect to the abundant subgraph $G^*$ of $G$.\footnote{Actually, for technical reasons, the auxiliary network in~\cite{itco_flow} is constructed in a slightly more sophisticated manner; see~\Cref{sec:additional}.}

To implement this approach, one needs an efficient way to compute possibly small~transitive covers of subsets $X_i\cup \{s,t\}$.
Note that the sets $X_i$ are revealed online, and recall that the abundant subgraph changes over time: it is subject to $O(m)$ arc~insertions throughout.
Dadush et al.~\cite{itco_flow} capture what the algorithm needs here using the following \emph{incremental transitive cover} (ITCO) data structure problem.
\begin{definition}\label{def:itco}
An ITCO data structure supports the following~operations\footnote{In~\cite{itco_flow}, the ITCO data structure also supports the $\dsTcWit$ operation returning a so-called \emph{witness list}. This object is used to  eventually reroute the obtained flow back to original edges of $G$. We present an alternative way of rerouting the flow, avoiding the explicit use of witness lists, in~\Cref{sec:witroute}.}:
\begin{itemize}
   \item $\dsTcInit(V)$: Initialize with a graph $G^*=(V,E)$, where $E$ is an empty set.
   \item $\dsTcAdd(E'\subseteq V'\times V')$: Add arcs $E'$ to $E$.
   \item $\dsTcCover(S\subseteq V)$: Return a transitive cover $(V',E')$ of $S$ wrt. $G^*$.
\end{itemize}
\end{definition}

Dadush et al.~\cite{itco_flow} implement the ITCO data structure as follows.
The graph $G^*$ is maintained in an \emph{incremental transitive closure} data structure~\cite{DBLP:journals/tcs/Italiano86} that supports arc insertions and $O(1)$-time reachability queries, provides access to the paths certifying the pairwise reachability, and has $O(nm)$ total update time.
$\dsTcAdd$ simply passes insertions to that data structure.
$\dsTcCover(S)$, on the other hand, returns either the entire graph $G^*$ or a graph $(S,F)$, where $F$ contains one arc $uv$ for each pair $u,v\in S$ such that there exists a path $u\to v$ in $G^*$ --- \emph{whichever is smaller}.
Note that in the latter case, the graph has size $O(|S|^2)$ and can be obtained in $O(|S|^2)$ time using reachability queries.
Observe that if $m^*$ arcs are added to $G^*$ eventually and transitive cover queries are issued for sets of sizes $s_1,\ldots,s_k$ with $\sum_{i=1}^k s_i=s^*$,
then the total size of transitive covers produced is at most $\sum_{i=1}^k \min(s_i^2,m^*)$.
By summing separately for $s_i<\sqrt{m^*}$ and $s_i\geq \sqrt{m^*}$, we can bound the sum by $O(s^*\sqrt{m^*})$.
With the ITCO data structure, the total size of auxiliary networks $\bar{G}_i$ produced over all iterations is $O(m_c\sqrt{m})$.
This allows bounding the total time spent on $\approxFlow$ calls by $(m_c\sqrt{m})^{1+o(1)}$, which is $O(nm)$ for $m=O(n^{1.99})$.\footnote{For dense graphs $m=\Theta(n^2)$, using the dense solver~\cite{DBLP:conf/stoc/BrandLLSS0W21} in place of~\cite{DBLP:journals/jacm/ChenKLPGS25} yields $\Ot(nm)$ time.}

\section{High-level overview of the parallel implementation}\label{sec:technical_overview}
In this section we discuss the main challenges in parallelizing the algorithm of~\cite{DBLP:conf/soda/DadushOSV26} outlined in~\Cref{sec:alg_overview} and sketch how these challenges are addressed in our algorithm.

\subsection{Challenges in parallel implementation}
The framework of~\cite{DBLP:conf/soda/DadushOSV26} does not parallelize well out of the box and would have $\Ot(nm)$ depth if implemented literally.
The first obstacle is the number of iterations of the main loop (see~\Cref{alg:framework-skeleton}): the framework may perform $\Theta(m_c)$ of them, as each iteration processes a set of essential arcs, and $O(m_c)$ essential arcs arise in total~\cite{DBLP:conf/soda/DadushOSV26}.
Since the number of iterations is a lower bound on the depth of any parallel implementation, obtaining $\Ot(m_c)$ depth without reducing the iteration count would require each iteration to run in $\Ot(1)$ depth.

There are two main parallel bottlenecks inside an iteration.
The first one is $\approxFlow$.
An implementation of $\approxFlow$ using the almost-linear weakly polynomial max-flow algorithm of~\cite{DBLP:journals/jacm/ChenKLPGS25} for the rounded instance would have depth linear in the number of arcs, which is too much because the total size of all auxiliary instances in~\cite{DBLP:conf/soda/DadushOSV26} is $O(m_c\sqrt{m})$.
While there are lower-depth weakly polynomial approaches~\cite{parallel_int_flow,DBLP:journals/combinatorica/KarpUW86,DBLP:journals/orl/OrlinS93}, they come with higher work.
Fortunately, the recent weakly polynomial parallel algorithm of~\cite{parallel_int_flow} provides a good enough work-depth trade-off to fit in our desired work/depth budgets.
There is another problem, though: the known efficient methods for enforcing the additional structural requirements of~$f'$ -- acyclicity~\cite{DBLP:journals/jcss/SleatorT83} and basicness~\cite{basic_flow_algorithm_paper} -- are inherently sequential.

The second bottleneck is incremental transitive closure.
Updating the transitive closure of a dense graph in $\Ot(1)$ depth per inserted arc is easy if one is willing to spend, say, $\Ot(n^2)$ work per update.
The difficulty is to achieve the $\Ot(nm)$ total work bound needed here, or $\Ot(n)$ amortized work per inserted arc.
The algorithm of~\cite{DBLP:journals/tcs/Italiano86} and other natural approaches detect new reachability relations by extending already-discovered paths one arc at a time.
Such an approach can easily use $\Theta(n^2)$ depth when processing $O(n)$ insertions in sequence.

\subsection{Adjustments to the framework of~\cite{DBLP:conf/soda/DadushOSV26}}
Our first modification is to reduce the number of iterations.
Recall that the original main loop may perform $\Theta(m_c)$ iterations, each of which triggers one batch update to the transitive closure structure.
Designing a structure with $\Ot(nm)$ total work and $\Ot(1)$ depth per batch proved challenging, and our incremental transitive closure data structure (\Cref{thm:inc-tc}) only gives $\Ot(n^{1/3})$ depth per batch without increasing the total work.
To compensate for that, we incorporate an idea already present in Orlin's framework~\cite{orlin2013max}: we ensure that iterations process a larger set of essential arcs at once so that the total number of iterations is reduced to $O(m_c/n^{1/3})$.
The price is that the residual capacity range used to define the essential arcs in an iteration has to become wider.
We show that this does not affect correctness.

However, the widened range is no longer guaranteed to be polynomially bounded in $n$ as in~\cite{DBLP:conf/soda/DadushOSV26}, so it is unclear whether using a weakly polynomial max-flow algorithm such as \cite{DBLP:journals/jacm/ChenKLPGS25,parallel_int_flow} could still be efficient enough inside the $\approxFlow$ subroutine (see~\Cref{def:approx_flow}).
In those iterations we therefore fall back to a low-depth (but high-work) exact strongly polynomial algorithm~\cite{fastermincostflow}.
To prevent the high $\Ot(m_cn^3)$ work of Orlin's parallel algorithm~\cite{fastermincostflow} from becoming the sole work bottleneck, we ensure that whenever we use the strongly polynomial subroutine~\cite{fastermincostflow}, the auxiliary network has $O(n^{1/3})$ vertices in the worst case and $O(n^{1/3})$ capacitated arcs on average.

\subsection{Making an approximate flow acyclic and basic}
The near-linear procedures~\cite{basic_flow_algorithm_paper,DBLP:journals/jcss/SleatorT83} used in~\cite{DBLP:conf/soda/DadushOSV26} to turn an approximate max-flow obtained via scaling, rounding, and capping in the first step of~$\approxFlow$ into an acyclic one and then into a basic one seem difficult to parallelize.
Within the original framework they create an~$\Ot(m_c\sqrt{m})$ depth bottleneck.
We therefore avoid them altogether.
Instead, we enforce the required structure already on the rounded instances by means of an isolation argument for flows~\cite{uniqueness_lemma_paper}.
By sampling polynomially bounded positive integral arc costs, one can guarantee with high probability that the resulting minimum-cost flow routing a prescribed demand vector is unique.
A unique minimum-cost flow is automatically acyclic and basic.
Thus, the computed approximate flows satisfy the structural requirements for the scaled, rounded, and capped capacities.

The framework still requires these properties with respect to the original residual capacities.
Acyclicity is preserved when we translate the rounded solution back to the original instance, but basicness may be lost.
We show that the resulting flow is nevertheless close enough to basic that it can be repaired in depth linear in the number of capacitated arcs of the auxiliary network.
The total number of such arcs over the entire algorithm is $O(m_c)$.

\subsection{Incremental transitive closure with improved depth per batch insertion}
To prove Theorem~\ref{thm:inc-tc}, we combine repeated matrix squaring with the shortcut-edges technique of Bernstein~\cite{DBLP:journals/siamcomp/Bernstein16}.
Related combinations of these ideas have previously proved useful in static parallel APSP computation~\cite{DBLP:conf/soda/KarczmarzS21} and in sequential approximate incremental APSP data structures~\cite{DBLP:conf/esa/KarczmarzL19}.

Let $H\subseteq V$ be a random subset of size $\Theta(n^{2/3}\log n)$, called the \emph{hitting set}.
A standard fact~\cite{DBLP:journals/siamcomp/UllmanY91} implies that, with high probability, $H$ contains a vertex of every path on $n^{1/3}$ vertices in any fixed collection of $\poly(n)$ paths.
As long as $H$ is chosen independently~of the updates, this remains true for paths coming from any intermediate version of the evolving~graph.

Our data structure maintains four interlocking components:
\begin{enumerate}[(1)]
\item Short paths: for every $v\in V$, it keeps a subset $S_v$ of vertices currently reachable from $v$ that includes all vertices within distance $n^{1/3}$ from $v$.
\item $H\times H$ reachability: for pairs of hitting-set vertices, it maintains reachability in a dense auxiliary graph $G'=(H,E')$, where $uv\in E'$ iff $v\in S_u$.
\item $V\times H$ reachability: equivalently, it maintains reachability from $H$ to all of $V$ in the reverse graph by adding shortcut arcs to the underlying incremental graph.
\item $V\times V$ reachability: finally, it extends the same idea to all source vertices in $V$.
\end{enumerate}
The first, third, and fourth components are implemented using a collection of parallel BFS runs resumed after each batch insertion and truncated after performing $\Ot(n^{1/3})$ steps; this guarantees $\Ot(n^{1/3})$ depth per batch. The dense graph $G'$ on $H$ can be updated in $\Ot(1)$ depth per batch with $\Ot(|H|^3)=\Ot(n^2)$ total work by using a deterministic repeated-squaring-based incremental transitive closure data structure.
Altogether, this yields $\Ot(mn)$ total work and~$\Ot(n^{1/3})$ depth per insertion batch.

A final issue is adaptivity.
When the data structure is plugged into the max-flow algorithm of~\cite{DBLP:conf/soda/DadushOSV26}, its~randomness remains independent of the updates it receives, at least~until the data structure errs, in which case we simply count this as a failure of the overall Monte Carlo algorithm.
This is because the max-flow algorithm only queries the maintained reachability relation during its main loop, and that relation is uniquely determined by the~updates.

\section{Additional details of the algorithm}\label{sec:additional}
In this section, we review some more details of the algorithm of~\cite{itco_flow} that have been omitted in the high-level overview in~\Cref{sec:alg_overview}.
We will need that to argue that these details do not pose a challenge from the parallelization standpoint.

We start with a few more details regarding the free components. Each free component has some chosen root and is internally stored as a rooted reachability tree. For the free components of $s$ and~$t$, the roots must always be $s$ and $t$, respectively. The set of roots is denoted by~$\mathcal{R}$.

We now describe how the auxiliary network is constructed for a given set of essential arcs $\Eess$.
\begin{enumerate}
   \item The algorithm computes the set of \emph{essential roots} $\Vess$, which contains the roots of the free components corresponding to the endpoints of the essential arcs.
   \item $\Vess$ is passed as input to the $\dsTcCover$ query of the incremental transitive cover data structure, which returns a set of vertices $V' \supseteq \Vess$ and a set of arcs $E'$.
   \item The set of vertices of the resulting auxiliary network $\bar{G}$ consists of $V'$ and the endpoints of~$\Eess$.
   \item The set of arcs of the resulting auxiliary graph consists of $\Eess$, $E'$ and free arcs connecting the endpoints of essential arcs to their respective roots. Arcs from $E'$ have infinite capacities. For arcs from $\Eess$, instead of their original capacities, the algorithm uses a so-called \emph{safe capacity vector}~$r$, which is maintained throughout the algorithm.
\end{enumerate}
The safe capacity vector, used and maintained for technical reasons (see~\cite{itco_flow}), is defined as a set of capacities $r$ such that $r_e \le u^f_e$ for all $e \in E$, but the value of the maximum flow for instances $(G, r)$ and $(G, u^f)$ is the same. The algorithm starts with $r := u^f$.

The computation of a flow approximation (via $\approxFlow$) on the auxiliary network is followed by a post-processing step called $\PostProcess$, necessary to maintain the algorithm's invariants.

To explain the most important non-trivial invariant, we first need to define \emph{boundary arcs} as arcs $e = (i, j)$ such that $\{i, j\} \cap \{s, t\} \neq \emptyset$, that is, one of the two endpoints is $s$ or $t$. The invariant says that before and after every iteration, each arc $e$ must satisfy one of the following.
\begin{itemize}
   \item $e$ is a boundary arc.
   \item $u^f_e = 0$ or $u^f_{\reverse{e}} = 0$.
   \item $e$ is a so-called \emph{gap arc}.
\end{itemize}
To ensure this invariant, the procedure $\PostProcess$ modifies the flow on the \emph{non-boundary} arcs of the auxiliary network in the following way:
\begin{itemize}
   \item If an arc $e = (i, j)$ has a very small residual capacity in one of the directions, then, assuming w.l.o.g. that $u^f_e$ is small, $\PostProcess$ modifies the flow $f_e$ and $f_{\reverse{e}}$ so that $u^f_e = 0$ is satisfied. This change makes the flow excess positive for vertex $j$ and negative for vertex~$i$. To fix this, the positive excess is rerouted by passing as much flow as possible using the arc $(j, t)$,
   and the remaining flow is routed via the infinite-capacity arc $(j, s)$. To fix the negative excess, we send as much flow as possible to $i$ using the arc $(s, i)$, and the remaining flow is sent via the infinite-capacity arc $(t, i)$. This rerouting might decrease the value of the flow by $u^f_e$, which is small.
   \item If an arc has too much residual capacity in both directions, then the algorithm modifies the safe capacity vector in such a way that the arc becomes a \emph{gap arc}, informally meaning that its residual capacity in each direction is slightly larger than its safe capacity.
\end{itemize}

The main motivation for maintaining the invariant by post-processing is to limit the number of essential arcs throughout the algorithm. For arcs such that $u^f_e = 0$, we have $u^f_{\reverse{e}} = u_e + u_{\reverse{e}}$. Because the value of $u_e + u_{\reverse{e}}$ remains constant throughout the algorithm, and because $\eps$ in each iteration is multiplied by a factor of at most $\Gamma^{-3}$, these arcs can be in the essential arc range $\Gamma^{-5}\eps \le u^f_{\reverse{e}} = u_e + u_{\reverse{e}} < \Gamma\eps$ at most a constant number of times. Gap arcs, on the other hand, have a useful property that they turn into free arcs in the auxiliary network after one iteration, meaning that each gap arc can only be essential once. \cite{itco_flow} proves a slightly stronger fact that across all the iterations there are in total $O(m_c)$ gap arcs and essential arcs. It can be shown for boundary arcs that, among them, there can be at most $O(n)$ essential arcs in total.

After post-processing, the algorithm needs to translate the flow from the auxiliary graph onto the original one. Translating the flow on arcs from $\Eess$ is trivial. If an arc from $E'$ carries positive flow, we add it to the graph $G$ as an \emph{extension arc}. Dadush et al.~\cite{itco_flow} prove, using the properties of basic flows, free components and post-processing, that there are at most $2n$ extension arcs introduced throughout.

To translate the flow from the arcs between the root of the free component and the endpoint of an essential arc,~\cite{itco_flow} use \emph{shortcut arcs}. Shortcut arcs are free arcs from every vertex of the component to its root. When merging two components, neither of which is the component of $s$ or~$t$, the smaller component is merged into a bigger one. The root of the bigger component becomes the root of the combined component. For each vertex in the smaller component we add a new shortcut arc to account for the root change. When merging a component of $s$ or $t$ with any other component, the latter is always merged into the former. This way of combining components guarantees that there will be $O(n \log n)$ shortcut arcs.

To eventually reroute the flow on the shortcut arcs, the algorithm first converts the flow from shortcut arcs in the component into required excesses for each vertex. The algorithm then, for each component, eliminates the leaves of the component's tree one by one by deciding how much flow to pass using the removed arc to make the excess of the removed vertex 0.

To reroute the extension arcs, the algorithm gradually replaces their flow by a flow on corresponding paths of abundant ares that had existed in the graph when a corresponding extension arc was created.
We discuss this in more detail in~\Cref{sec:witroute}.

A subtlety in this algorithm is worth noting.  Let the set of extension arcs be denoted by $\Fe$ and let the set of shortcut arcs be denoted by $\Fs$. The arcs of the graph used throughout the algorithm are defined as a disjoint sum $E \uplus \Fe \uplus \Fs$. This can create multi-arcs as long as the multi-arcs come from different sets of this sum; by the convention from~\Cref{sec:preliminaries}, they remain distinct arc objects. Each of these sets individually does not contain multi-arcs.

\section{Adjustments to the algorithm of Dadush, Orlin, Sidford, and Végh}\label{sec:changes}

In this section we describe the adjustments we make to the algorithm~\cite{itco_flow} outlined in Sections~\ref{sec:alg_overview}~and~\ref{sec:additional}.
\Cref{alg:max-flow} contains the pseudocode of that algorithm, reproduced verbatim (see \cite[Algorithm~1]{itco_flow}), with our changes marked.
Note that we have not defined much of the notation used in the pseudocode, since we only focus on the changes needed by our parallelization.
Refer to~\cite{itco_flow} for the missing notation.

\subsection{Definition changes}\label{sec:definition_changes}
Our main change to the framework~\cite{itco_flow}
is a more general definition of an essential arc.
\begin{definition}[cf. {\cite[Definition~4.7]{itco_flow}}]\label{def:essential_arc}
   Let $f$ be an $\eps$-optimal flow in $(G, u)$ for $\eps > 0$ and let $k\in \RR\cup \{\infty\}$ be defined as
   \[k = \min\{k \in \RR : k \ge 3, |\{e\in E_\mathcal{P}:\Gamma^{-(k + 2)}\eps \le u_e^f < \Gamma\eps\text{ or }\Gamma^{-(k + 2)}\eps \le u_{\reverse{e}}^f < \Gamma\eps \}| \ge 2n^{1/3}\}.\]
   We say that $k$ is an \emph{essential arc exponent}.
   We say that $e \in E$ is an \emph{essential arc} with respect to~$k$ if $e \in E_{\mathcal{P}}$
   and
\[\Gamma^{-(k+2)}\eps < u^f_e < \Gamma \eps \quad \text{or} \quad \Gamma^{-(k+2)}\eps < u^f_{\reverse{e}} < \Gamma \eps.\]
   Additionally, if there are fewer than $2n^{1/3}$ essential arcs for $k < \infty$, we arbitrarily select some of the arcs that satisfy
\[u^f_e = \Gamma^{-(k+2)}\eps \quad \text{or} \quad u^f_{\reverse{e}} = \Gamma^{-(k+2)}\eps.\]
   We consider those arcs and their reversals essential and select enough of them so that the number of essential arcs is exactly $2\lceil n^{1/3}\rceil$.
\end{definition}

The formula for $k$ in~\Cref{def:essential_arc} guarantees that for $k < \infty$ there are at least $\Omega(n^{1/3})$ essential arcs in a single iteration. If $k>3$, then $\Gamma^{-(k+2)}\eps$ is the residual capacity of some arc, and if $k=\infty$, then this threshold is 0. It also allows us to prove there are $\Ot(m_c / n^{1/3})$ iterations.

This alteration also influences the definition of small arcs so that they satisfy $u^f_e \le \Gamma^{-(k + 2)}\eps$ and are non-essential. The flow missed by ignoring the small arcs can still be bounded by $n^2\Gamma^{-(k+2)}\eps$.

From the definition of the essential arcs we immediately get the following lemma.
\begin{lemma}\label{lem:essential_implies_interval}
   Every essential arc $e$ satisfies \[\Gamma^{-(k + 2)}\eps \le u_e^f < \Gamma\eps\quad\text{or}\quad\Gamma^{-(k + 2)}\eps \le u_{\rev{e}}^f < \Gamma\eps.\]
\end{lemma}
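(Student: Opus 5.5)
The lemma follows by a direct case analysis on the two ways an arc can be declared essential in \Cref{def:essential_arc}. No earlier result is needed beyond the definition itself.

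\emph{Case 1: $e$ satisfies the strict condition.} Suppose $e\in\Ep$ satisfies $\Gamma^{-(k+2)}\eps < u^f_e < \Gamma\eps$ or $\Gamma^{-(k+2)}\eps < u^f_{\rev{e}} < \Gamma\eps$. Relaxing the strict lower inequality to a non-strict one immediately gives the claimed disjunction. When $k=\infty$ we read $\Gamma^{-(k+2)}\eps$ as $0$, as the text after the definition indicates. The claim then becomes $0\le u^f_e<\Gamma\eps$ or $0\le u^f_{\rev{e}}<\Gamma\eps$. This follows in the same way, since residual capacities are nonnegative for a feasible flow.

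\emph{Case 2: $e$ was added by the padding rule.} This rule applies only for $k<\infty$. The definition then selects an arc $e'$ with $u^f_{e'}=\Gamma^{-(k+2)}\eps$ or $u^f_{\rev{e'}}=\Gamma^{-(k+2)}\eps$, and declares both $e'$ and $\rev{e'}$ essential. So $e\in\{e',\rev{e'}\}$, and because $\rev{\rev{e'}}=e'$, one of $u^f_e$, $u^f_{\rev{e}}$ equals $\Gamma^{-(k+2)}\eps$. The lower bound $\Gamma^{-(k+2)}\eps\le\cdot$ therefore holds with equality. For the upper bound it suffices that $\Gamma^{-(k+2)}\eps<\Gamma\eps$, which holds because $\eps>0$, $\Gamma=4n^2>1$ and $k\ge 3$.

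The only point needing care is this last observation. The padded arcs sit exactly at the lower threshold, which is the reason the lemma states a non-strict lower inequality. They are automatically below $\Gamma\eps$ because the threshold lies strictly below $\Gamma\eps$. Combining the two cases proves the lemma.
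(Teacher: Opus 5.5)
Your proof is correct and follows the paper's argument: the same two-case split between arcs meeting the strict condition and arcs added by the padding rule, where padded arcs attain the lower threshold with equality. You are slightly more explicit than the paper, which leaves two points implicit: the upper bound for padded arcs via $\Gamma^{-(k+2)}\eps<\Gamma\eps$, and the reversal $\rev{e'}$ via $\rev{\rev{e'}}=e'$.
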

\begin{proof}
   In at least one direction, an essential arc either satisfies $\Gamma^{-(k + 2)}\eps < u_e^f < \Gamma\eps$ or is selected from the set of arcs satisfying $u_e^f = \Gamma^{-(k + 2)}\eps$. Combining the two cases gives us the inequality from the lemma.
\end{proof}

Naturally, we also need to adjust item (A) in the definition of valid successors~\cite[Definition~4.9]{itco_flow} from $\eps' \le \Gamma^{-3}\eps$ to $\eps' \le \Gamma^{-k}\eps$.

{
\begin{algorithm}[t!]
    \caption{\label{alg:itco_flow_changed}The pseudocode for the max-flow algorithm of~\cite{itco_flow} with our changes marked red.}\label{alg:max-flow}
    \KwData{An instance $(\gGi,\ui)$ with $\gGi=(\gV,\gEi)$ and $\nu(\gGi,u)<\infty$.}
    \KwResult{A maximum flow $f^\star$ in  $(\gGi,\ui)$.}
    $(G,u)\gets (\gGi,\ui)$ ; $E\gets \gEi$ ;
    $\Fe\gets\emptyset$ ; $\Fs\gets\emptyset$ \;
    $(f,\nr,\eps,\roots)\gets\Initialize$ \tcp*{Obtain initial valid iterate}
    ${\cal T}_i\gets(\{i\},\emptyset)$ for all $i\in V$ \;
   $\dsTcInit(V)$ ; $\dsTcAdd(\gEA)$    \tcp*{Initialize transitive cover data structure}
    \While{$\eps>0$}{
       \changed{
       $\Gamma^{-(k+2)}\eps\gets$ computed according to \Cref{def:essential_arc}\;
       }
       $\Eaux\gets\left\{e \in \gEP\,:\,  \Gamma^{\changed{-(k+2)}}\eps \changed{<} \res{f}_e< \ab\eps \text{ or } \Gamma^{\changed{-(k+2)}}\eps \changed{<} \res{f}_{\rev{e}}< \ab\eps\right\}$\;
       \changed{\lIf{$k<\infty\text{ and }|\Eaux| < 2n^{1/3}$}{$\Eaux \gets \Eaux\cup \Eaux' \cup \rev{\Eaux'}$ where $\Eaux'$ is any subset of arcs such that $|\Eaux \cup \Eaux' \cup \rev{\Eaux}'| =  2\lceil n^{1/3} \rceil, \forall_{e \in \Eaux'} u^f_e = \Gamma^{\changed{-(k+2)}}\eps$}}
   $\Vess\gets$ roots of components with essential arcs incident\;
   \lIf{$\{e\in \gEP\changed{\backslash \Eaux}\,:\,\res{f}_e\changed{\le}\Gamma^{\changed{-(k+2)}}\eps\}\neq\emptyset$}{$\delta_1\gets n^2\max\left\{\res{f}_e: e\in \gEP\changed{\backslash \Eaux}, \res{f}_e\changed{\le}\Gamma^{\changed{-(k+2)}}\eps\right\}$}\lElse{$\delta_1\gets 0$ }
      $(\cG,\cu,H)\gets\CreateCompact(\Vess,\Eaux)$ \;
    \If{$s,t\in\Vess$}{
       \changed{
         \lIf{$\Gamma^{-(k+2)}\eps = \Gamma^{-5}\eps$}{
   $(y,e^\star)\gets \approxFlow(\cG,\cu,\Gamma^{5})$
   $\delta_2\gets |E(\cG)| \cu^y_{e^\star}$
         }
         \lElse{
            $y\gets \ExactFlow(\cG,\cu)$
            $\delta_2\gets 0$
         }
       }
        \lFor{$e\in \Eaux$}{$\SendFlow(e,y_e)$ \tcp*[f]{Add $y$ to flow} \label{line:sendflow}}
    \lFor{$e=(i,j)\in H$}{
        $E\gets E\uplus\{e,\rev{e}\}$;
        $u_e\gets\infty$; $u_{\rev{e}}\gets 0$; $f_e\gets y_e$; $f_{\rev{e}}\gets 0$  \label{line:sendflow2}}
   }
   \lElse{$\delta_2\gets 0$}
       $\delta\gets\delta_1+\delta_2$ \tcp*{current flow is $\delta$-optimal}
    $\PostProcess(\delta,(\Eaux\setminus\gEbd)\uplus H)$ \;
    $E\gets E\setminus\{e,\rev{e}\,:\,e\in H, f_e=0\}$ \tcp*{\hspace{-2mm}Keep only new arcs with positive flow}
    $\Fe\gets \Fe\cup (H\cap\supp(f))$ \;

      $\eps\gets 2n^2\delta$ \;
        Update $\gEA$, $\gEF$, $\CP$, $\gEP$ and $\roots$ \;
    $J\gets$ set of newly abundant arcs ;
    $\dsTcAdd{(J\setminus H)}$   \tcp*{Update transitive cover}
  \lFor{$k$ removed from $\roots$}{$\RemoveRoot(k)$}
        }
    $\RerouteS(\Fs)$ \;
    $h\gets \changed{\WitRoute(f|_{\gEA})}$ \tcp*{Map flow back to original arcs}
     \lFor{$e\in\gEi$}{
        $f^\star_e\gets h_e$ if $e \in \gEA$ and $f^\star_e\gets f_e$ if $e \notin \gEA$}
  \Return{$f^\star$}
\end{algorithm}
}
\subsection{Algorithm changes}
\begin{definition}[Exact flow]
   An \emph{exact flow solver} is a procedure $\ExactFlow(G, u)$ that computes, for an instance $(G, u)$, a maximum flow $f$ that is also acyclic and basic.
\end{definition}

The algorithm changes in the following way. At the beginning of the iteration we check if the value $k$ from~\Cref{def:essential_arc} equals $3$.
If this is the case, we use $\approxFlow$ to compute the approximate flow.
Otherwise,
there are $O(n^{1/3})$ essential arcs,
and we use $\ExactFlow$ instead of $\approxFlow$.

\subsection{Additional background}
Before we analyse \Cref{alg:itco_flow_changed}, we will first present the notation and definitions from~\cite{itco_flow} that will be used in the lemmas, lemma changes and proof changes.

Most sets of arcs are defined based on some parameters. $E_{\text{abd}}(f, \eps)$ and $E_{\text{free}}(f, \eps)$ denote respectively the sets of abundant arcs and free arcs for the flow $f$ and for the given value of~$\eps$. The set of arcs between different components $\Ep$ is defined in \cite{itco_flow} as $\Ep(f, \eps)$. The need for $f$ and $\eps$ comes down to the fact that the free components depend only on the free arcs. The set of gap arcs is defined as
\[E_{\text{gap}}(f, r, \eps) \defeq \{e \in \Ep(f, \eps) : u^f_e \ge r_e + 2\Gamma^{-2}\eps\ \text{and}\ u^f_{\reverse{e}} \ge r_{\reverse{e}} + 2\Gamma^{-2}\eps\}.\]
The set of boundary arcs is defined as
\[E_{\text{bound}} \defeq \{(s, i), (i, s), (t, i), (i, t) : i \in \mathcal{R} \backslash \{s, t\}\}.\]

We will recall a definition of valid iterates from \cite[Definition~4.6]{itco_flow}.
\begin{definition}[{\cite[Valid Iterate]{itco_flow}}] \label{def:proper}
We say that $(f,\nr,
\varepsilon,\roots)$ is a \emph{valid iterate} in the instance $(G,u)$ if
\begin{enumerate}[label=(\Alph*)]
   \item\label{it:eps-opt} $f$ is an $\varepsilon$-optimal feasible flow in $(G,u)$;
   \item\label{it:safe}  $\nr$ is a safe capacity vector for $f$.
   \item\label{it:bounds} For every arc $e\in \gEP(f,\varepsilon)$, at least one of the following holds:  $\res{f}_e=0$;  $\res{f}_{\rev{e}}=0$; $e\in\gEStr(f,\nr,\varepsilon)$;  or $e\in\gEbd(\roots)$.
   \item\label{it:oneway}  $\res{f}_{si}=0$ or $\res{f}_{it}=0$ for $i\in\roots$.
   \item\label{it:roots} $\roots$ forms a set of roots of the free components, i.e., every component of $\CP_{f,\varepsilon}$ contains exactly one node from $\roots$, and $s,t\in \roots$.
\end{enumerate}
\end{definition}
\Cref{it:bounds} from the above definition will be especially important for us.

Let us recall the complete \cite[Definition 4.9]{itco_flow} with the changes from \Cref{sec:definition_changes} applied. The \emph{admissible extension} $(G, u) \oplus (F_{\text{one}}, F_{\text{two}})$ used in the definition can be interpreted as a graph $(G, u)$ with $F_{\text{one}}$ as the set of newly added extension arcs and $F_{\text{two}}$ as the set of newly added shortcut arcs.
\begin{definition}[{\cite[Valid Successors]{itco_flow} with a change from \Cref{sec:definition_changes}}] \label{def:successor}
Let \linebreak${(G',u')=(G,u)\oplus (\Fone,\Ftwo)}$ be an admissible extension of $(G,u)$.
Let $(f,\nr,\varepsilon,\roots)$ be a valid iterate in $(G,u)$, and let $k$ be its essential arc exponent according to~\Cref{def:essential_arc}. Let $(f',\nr',\varepsilon',\roots')$ be a valid iterate in $(G',u')$. We say that $(f',\nr',\varepsilon',\roots')$ is a \emph{valid successor} of $(f,\nr,\varepsilon,\roots)$ if the following hold:
\begin{enumerate}[label=(\Alph*)]
\item\label{it:eps-decrease}
$\varepsilon'\le \ab^{-k}\varepsilon$.
\item\label{it:arc-prox} For every ${e\in E(G)}$, we have
${f'_e-f_e+f_{\rev{e}}\le \min\{\nr_e,\varepsilon\}+\ab^{-2}\varepsilon<2\varepsilon}$,
and
for every \linebreak ${e\in E(G')\setminus E(G)}$, we have ${f'_e\le 3\varepsilon}$.
\end{enumerate}
\end{definition}

\subsection{Lemma and proof changes}\label{sec:lemmachanges}
Because of our changes to the definitions of essential arcs and valid successors, we need to slightly redefine and/or reprove lemmas that use the essential arc definition. Most of these proofs are nearly identical to the original proofs and we will only describe the differences.

We will start with a lemma that is the main reason why gap arcs are useful.
\begin{lemma}[{cf.~\cite[Lemma~4.10]{itco_flow}}]
   Let $(f, r, \eps, \mathcal{R})$ be a valid iterate in $(G, u)$. Let $(G', u')$ be an admissible extension of $(G, u)$ with a valid iterate $(f', r', \eps', \mathcal{R}')$ that is a valid successor of $(f, r, \eps, \mathcal{R})$.
   \begin{enumerate}[label=(\roman*)]
      \item $E_{\text{abd}}(f, \eps) \subseteq E_{\text{abd}}(f', \eps')$ and $E_{\text{free}}(f, \eps) \subseteq E_{\text{free}}(f', \eps')$. \label{4.10:i}
      \item Every arc $e \in E_{\text{gap}}(f, r, \eps)$ is $(f', \eps')$-free. \label{4.10:ii}
   \end{enumerate}
\end{lemma}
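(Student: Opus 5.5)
The plan is to follow the original argument of~\cite[Lemma~4.10]{itco_flow} and check that the only changed ingredient, item~\ref{it:eps-decrease} of \Cref{def:successor}, is used only in the direction that helps us. Since $k\ge 3$ by \Cref{def:essential_arc}, we have $\eps'\le \Gamma^{-k}\eps\le\Gamma^{-3}\eps$. So every inequality in the original proof that relied on $\eps'\le\Gamma^{-3}\eps$ still holds. The case $k=\infty$ forces $\eps'=0$; there all thresholds $\Gamma\eps'$ vanish and the claims become trivial or degenerate, so I would treat it separately, with the convention that arcs of positive, possibly infinite, residual capacity count as abundant.

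For~\ref{4.10:i}, take $e\in E_{\text{abd}}(f,\eps)$, so $u^f_e\ge\Gamma\eps$. Item~\ref{it:arc-prox} of \Cref{def:successor} applied to $\rev{e}$ bounds how much the residual capacity of $e$ can drop. We have $u^{f'}_e-u^f_e = -(f'_e-f_e)+(f'_{\rev e}-f_{\rev e})$, and the proximity bound on $f'_{\rev e}-f_{\rev e}+f_e$ gives $u^{f'}_e\ge u^f_e-2\eps$. I would double-check this sign bookkeeping against the original proof. It follows that $u^{f'}_e\ge(\Gamma-2)\eps\ge\Gamma\cdot\Gamma^{-3}\eps\ge\Gamma\eps'$. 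Arcs of $G'$ not in $G$ are irrelevant, because the claim concerns arcs of $E_{\text{abd}}(f,\eps)\subseteq E(G)$. The statement for free arcs follows by applying the abundant case to both $e$ and $\rev{e}$.

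For~\ref{4.10:ii}, take $e\in E_{\text{gap}}(f,r,\eps)$, so $u^f_e\ge r_e+2\Gamma^{-2}\eps$, and the same holds for $\rev e$. The proximity bound $f'_{\rev e}-f_{\rev e}+f_e\le \min\{r_{\rev e},\eps\}+\Gamma^{-2}\eps$, after the same rewriting as above, should give $u^{f'}_e\ge u^f_e-r_e-\Gamma^{-2}\eps\ge\Gamma^{-2}\eps$. I expect the main obstacle to be matching exactly which of $r_e$ or $r_{\rev e}$ appears in that rewriting, mirroring the original proof. Once this is settled, $\Gamma^{-2}\eps\ge\Gamma\cdot\Gamma^{-3}\eps\ge\Gamma\eps'$ because $k\ge3$. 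Hence $e$ is abundant for $(f',\eps')$, and by symmetry so is $\rev e$. Thus $e$ is free, which completes the plan, since all numeric steps reduce to $\Gamma^{-k}\le\Gamma^{-3}$.
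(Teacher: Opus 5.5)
Your proposal is correct and follows the paper's route: the paper's proof only observes that the original argument of \cite[Lemma~4.10]{itco_flow} relies on $\eps'\le\Gamma^{-3}\eps$, which still follows from $k\ge 3$, and your computations reconstruct that argument. One bookkeeping fix: since $u^f_e-u^{f'}_e\le f'_e-f_e+f_{\rev{e}}$, it is item~\ref{it:arc-prox} of \Cref{def:successor} applied to $e$ itself (not to $\rev{e}$) that gives $u^{f'}_e\ge u^f_e-\min\{r_e,\eps\}-\Gamma^{-2}\eps$, which also settles the $r_e$ versus $r_{\rev{e}}$ question; and the case $k=\infty$ needs no special convention, because $\eps'=0$ satisfies the same inequalities.
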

\begin{proof}[Proof change]
   The proof is the same as it relies on the fact that $\eps' \le \Gamma^{-3}\eps$, which is still true because the essential arc exponent $k \ge 3$.
\end{proof}
To prove \cite[Lemma~6.2]{itco_flow}, which shows that the algorithm produces valid successors in each iteration, the authors of~\cite{itco_flow} first prove \cite[Lemma~6.3]{itco_flow}, which needs to include our changes.
\begin{lemma}[{cf.~\cite[Lemma~6.3]{itco_flow}}]
   Given an instance $(G, u)$ with a valid iterate $(f, r, \eps, \mathcal{R})$ and some essential arc exponent $k \ge 3$, let $\mathcal{R}_{\text{ess}}$ be the set of essential vertices with respect to $k$; let $E_{\text{ess}}$ be the set of essential arcs and let
   \[\delta_1 = n^2\max \left(\left\{u^f_e : e \in \gEP\backslash \Eaux,\ u^f_e \le \Gamma^{-(k + 2)}\eps\right\}\cup\{0\}\right).\]
   For the instance $(\bar{G}, \bar{u})$ returned by $\CreateCompact(\mathcal{R}_{\text{ess}}, E_{\text{ess}})$, the following holds: \[\text{val}(\bar{G}, \bar{u}) \ge \text{val}(G, u^f) - \delta_1.\] If $s \notin \mathcal{R}_{\text{ess}}$ or $t \notin \mathcal{R}_{\text{ess}}$, then $\text{val}(\bar{G}, \bar{u}) = 0$.
\end{lemma}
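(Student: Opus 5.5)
We follow the original proof of \cite[Lemma~6.3]{itco_flow} and only track where the threshold $\Gamma^{-5}\eps$ is replaced by $\Gamma^{-(k+2)}\eps$ and the strict inequality becomes a weak one. The argument is a cut argument. Fix any minimum $s$-$t$ cut of $(\bar G,\bar u)$ (or, equivalently, take a maximum flow in $(G,u^f)$) and transfer it back to a cut in $(G,u^f)$. We then show that the capacity lost is at most $\delta_1$.

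First we recall the structure of $\bar G$. Free components are collapsed through abundant/free arcs of capacity at least $\Gamma\eps$. These are effectively infinite, since $\nu(G,u^f)\le\eps$ by $\eps$-optimality, which is \Cref{it:eps-opt} of \Cref{def:proper}. The transitive cover $H$ preserves abundant reachability among $\Vess\cup\{s,t\}$. Essential arcs are kept with safe capacities $r_e$, which do not change the max-flow value by \Cref{it:safe}.

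Now take an $s$-$t$ cut $C$ in $\bar G$ of finite capacity, and extend it to $V$. Put a vertex $v$ on the $s$-side iff some vertex of $C\cap\Vess$ reaches $v$ via abundant arcs in $G$. Because $H$ is a transitive cover, this extension cuts no abundant arc leaving the $s$-side, since $C$ had finite capacity. Hence the arcs of $\Ep$ crossing the extended cut fall into three classes. Arcs with $u^f_e\ge\Gamma\eps$ cannot cross, since they are abundant. Essential arcs cross exactly as in $\bar G$ and contribute the same capacity. The remaining non-essential arcs of $\Ep$ with $u^f_e<\Gamma\eps$ satisfy, by \Cref{def:essential_arc}, $u^f_e\le\Gamma^{-(k+2)}\eps$. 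This is where the weak inequality appears: an arc with $u^f_e=\Gamma^{-(k+2)}\eps$ that was not selected as essential is still counted in $\delta_1$, so we need $\le$ in its definition.

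At most $n^2$ arcs of $\Ep$ cross between component pairs, after using multiplicity arguments exactly as in the original proof, so these arcs cost at most $\delta_1$ in total. This gives $\mathrm{val}(G,u^f)\le \mathrm{val}(\bar G,\bar u)+\delta_1$.

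For the second claim, suppose $s\notin\Vess$. Then no essential arc touches the component of $s$, so in $\bar G$ the vertex $s$ is either isolated or reaches only vertices via abundant arcs. Since there is no abundant $s\to t$ path, $t$ is unreachable and the value is $0$; the case $t\notin\Vess$ is symmetric. This is unchanged from the original proof.

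The main obstacle is checking that the original counting still holds under the new definition. Specifically, two points must be verified: that every non-essential, non-abundant arc of $\Ep$ has residual capacity at most $\Gamma^{-(k+2)}\eps$, including the ties at equality when only some of them are selected, and that the selected arcs' reversals, added to the essential set, do not break the correspondence between cuts. We would verify both directly from \Cref{def:essential_arc} and \Cref{lem:essential_implies_interval}.
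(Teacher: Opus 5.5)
\textbf{Comparison with the paper.} Your route differs from the paper's, and it is sound in outline. The paper stays inside the original proof of \cite[Lemma~6.3]{itco_flow}. It only changes which arcs are removed when forming the intermediate instance $(G',u')$: arcs outside $\Eaux$ with $u_e+u_{\rev{e}}\le\Gamma^{-(k+2)}\eps$ are deleted, and non-essential arcs $e$ with $\rev{e}$ abundant and $u^f_e\le\Gamma^{-(k+2)}\eps$ get $u'_e=0$. It then notes that decreases inside free components cannot affect a finite cut, while the remaining decreases lie in $\Ep\setminus\Eaux$ and total at most $\delta_1$. Finally, it inherits verbatim both the comparison of $(G',u')$ with $(\bar G,\bar u)$ and the degenerate case. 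You instead lift a finite cut of $\bar G$ directly to a cut of $G$ and bound the crossing arcs using only the definition of essential arcs: a non-essential, non-abundant arc of $\Ep$ must have $u^f_e\le\Gamma^{-(k+2)}\eps$, so it is charged to $\delta_1$. This is self-contained and never asks which invariant category a small arc falls into. The price is that you must re-derive the structural facts the paper inherits, and as written some are missing or only asserted.

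\textbf{What is missing.}

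(i) The first inequality is unconditional. So when $s\notin\Vess$ or $t\notin\Vess$ you must show $\mathrm{val}(G,u^f)\le\delta_1$, but you only prove $\mathrm{val}(\bar G,\bar u)=0$. Your lifting is seeded by $C\cap\Vess$, which need not contain $s$ when $s\notin\Vess$ (it may even be empty), so it does not produce an $s$-$t$ cut. To close this, cut at the free component $S$ of $s$ (resp.\ at $V\setminus T$). Every arc leaving it is non-abundant by closure, lies in $\Ep$, and is non-essential because no essential arc touches that component. Hence it has residual capacity at most $\Gamma^{-(k+2)}\eps$.

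(ii) Evaluate the lifted cut in $(G,r)$, not in $(G,u^f)$. A crossing essential arc contributes $\bar u_e=r_e$ in $\bar G$ but $u^f_e\ge r_e$ in $(G,u^f)$, so ``contribute the same capacity'' is not true there. With capacities $r$ it is true, non-essential crossing arcs still have $r_e\le u^f_e\le\Gamma^{-(k+2)}\eps$, and $\mathrm{val}(G,r)=\mathrm{val}(G,u^f)$ by property~(B).

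(iii) The claims that no arc outside $\Ep$ crosses, and that essential arcs cross exactly as in $\bar G$, carry the real content and need an argument.
\begin{itemize}
\item The abundant closure of the seeds is a union of free components. $S$ is fully reached from the seed $s$. Components other than $S$ and $T$ are abundantly strongly connected. $T$ is never reached, because any vertex abundantly reaching $t$ lies in $T$, whose only root is $t\notin C$.
\item An endpoint $x$ of an essential arc is in the closure iff its root is in $C$. This uses the transitive cover on $\Vess$, the free endpoint--root connections in $\bar G$, and finiteness of $C$.
\end{itemize}
Also, ``no abundant arc leaves the $s$-side'' holds by closure, not because $H$ is a transitive cover; the cover is what you need in this point.
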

\begin{proof}[Proof change]
   The proof is the same as the original except that, when defining the graph $G'$, (1) we delete all arcs $e \in E \backslash \Eaux$ such that $u_e + u_{\reverse{e}} \le \Gamma^{-(k+2)}\eps$ and (2) we set $u_e'=0$ for all non-essential arcs $e \in E$ such that $\reverse{e} \in E_{\text{abd}}(f, \eps)$ and $u^f_e \le \Gamma^{-(k+2)}\eps$. Apart from that, $u'$ is defined in the same way. Again, the following holds:
   \[\text{val}(G, u^f) \ge \text{val}(G', u') \ge \text{val}(G, u^f) - \delta_1\]
   because capacity decreases on arcs internal to a free component cannot affect a finite cut in $G'$, since every such cut respects the free components. All remaining capacity decreases are on arcs in $\gEP\backslash\Eaux$, and their total is at most $\delta_1$. The rest of the proof stays the same.
\end{proof}

\begin{lemma}[{cf.~\cite[Lemma~6.2]{itco_flow}}]\label{lem:lemma62}
   Assume $(f, r, \eps, \mathcal{R})$ is a valid iterate in $(G, u)$ at the beginning of an iteration of the while loop in Algorithm 1, and let $(f', r', \eps', \mathcal{R}')$ denote the corresponding quantities at the beginning of the next iteration in the instance $(G', u')$. Then, $(f', r', \eps', \mathcal{R}')$ is a valid iterate and $(f', r', \eps', \mathcal{R}')$ is a valid successor of $(f, r, \eps, \mathcal{R})$. Moreover, the following stronger version of \cite[Definition~4.9(B)]{itco_flow} holds:
   \begin{itemize}
      \item For every $e \in E(G), f'_e - f_e + f_{\reverse{e}} \le \min\{r_e, \eps\} + \Gamma^{-3}\eps$, and for every $e \in E(G') \backslash E(G)$, $f'_e \le 2\eps$.
   \end{itemize}
\end{lemma}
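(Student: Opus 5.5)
The plan is to follow the structure of the proof of \cite[Lemma~6.2]{itco_flow} and change only the places that use the old essential-arc range $[\Gamma^{-5}\eps,\Gamma\eps)$ or the decrease $\eps'\le\Gamma^{-3}\eps$. There are two claims to establish: the properties of a valid iterate for $(f',r',\eps',\roots')$, namely \ref{it:eps-opt}--\ref{it:roots}, and the two properties \ref{it:eps-decrease}, \ref{it:arc-prox} of a valid successor, including the strengthened form of \ref{it:arc-prox}.

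\textbf{Step 1: optimality and the decrease of $\eps$.} By the modified \cite[Lemma~6.3]{itco_flow} proved above, $\nu(\bar G,\bar u)\ge\nu(G,u^f)-\delta_1$. I split into two cases.
\begin{itemize}
\item If $k=3$, the call $\approxFlow(\cG,\cu,\Gamma^5)$ leaves a residual value of at most $\delta_2$, and $\delta_2\le \nu(\bar G,\bar u)/\Gamma^{5}\le n^2\Gamma\eps/\Gamma^5$.
\item If $k>3$, $\ExactFlow$ gives $\delta_2=0$.
\end{itemize}
In both cases the analysis of $\PostProcess$ from \cite{itco_flow} then shows that $f'$ is $\delta$-optimal, so $2n^2\delta$-optimal, with $\delta=\delta_1+\delta_2$.

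For \ref{it:eps-decrease}, the term $\delta_1$ is bounded by $n^2\Gamma^{-(k+2)}\eps$. This holds because every arc counted in $\delta_1$ is non-essential with residual capacity at most $\Gamma^{-(k+2)}\eps$, which uses \Cref{def:essential_arc} and in particular the tie-breaking at the threshold. Hence
\[\eps'=2n^2\delta\le 2n^4\Gamma^{-(k+2)}\eps\le\Gamma^{-k}\eps,\]
using $2n^4\le\Gamma^2$. For $k=3$, $\delta_2$ contributes $n^2\Gamma^{-4}\eps$, so $2n^2\delta_2\le 2n^4\Gamma^{-4}\eps\le\Gamma^{-3}\eps$. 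The two contributions together need a slightly careful constant check, but there is room because $\Gamma=4n^2$. If $\eps'=0$ the claim is trivial.

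\textbf{Step 2: the invariants \ref{it:safe}--\ref{it:roots} and the arc-proximity bound.} The original arguments go through essentially verbatim. They rely only on three facts:
\begin{itemize}
\item every arc whose flow changes is either essential (capacity below $\Gamma\eps$ in one direction, by \Cref{lem:essential_implies_interval}) or an extension/shortcut arc;
\item $\PostProcess$ is applied to $(\Eaux\setminus\gEbd)\uplus H$;
\item $\eps'\le\Gamma^{-3}\eps$, which still holds since $k\ge 3$.
\end{itemize}
For the strengthened \ref{it:arc-prox}, I reuse the bound $y_e\le \bar u_e= r_e$ on essential arcs, together with the $\PostProcess$ perturbation of at most $\Gamma^{-3}\eps$. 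I also need $y_e\le \eps$: the flow on any single arc is at most the maximum flow value $\nu(\bar G,\bar u)\le\nu(G,u^f)\le\eps$. This holds for $\ExactFlow$ as well, because an acyclic flow puts at most its value on any arc. The bound $f'_e\le 2\eps$ on new arcs follows the same way.

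\textbf{Expected obstacle.} The main obstacle is the exact-flow case, where $\bar u$ may now range over many orders of magnitude. I must check that no step of the original proof used the fact that the essential capacities are at least $\Gamma^{-5}\eps$. The place I expect this to matter is the $\PostProcess$ rounding of tiny residuals and the requirement that gap arcs satisfy $u^f_e\ge r_e+2\Gamma^{-2}\eps$. I would re-derive \ref{it:bounds} by noting that $\PostProcess$ thresholds are expressed relative to $\delta$, not to $\Gamma^{-5}\eps$, so the wider range does not interfere.
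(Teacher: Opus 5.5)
You have a genuine gap in Step~1, in the $k=3$ case of property~(A). This computation is the only place where the proof really differs from \cite[Lemma~6.2]{itco_flow}. You bound $\delta_2\le\nu(\cG,\cu)/\Gamma^{5}\le n^2\Gamma\eps/\Gamma^{5}=n^2\Gamma^{-4}\eps$ and then assert $2n^2\delta_2\le 2n^4\Gamma^{-4}\eps\le\Gamma^{-3}\eps$. The last inequality is equivalent to $2n^4\le\Gamma=4n^2$, i.e.\ $n^2\le 2$, which is false since $n\ge 4$. In fact $2n^4\Gamma^{-4}\eps=\Gamma^{-2}\eps/8$, so your estimate only gives $\eps'=O(\Gamma^{-2}\eps)$ when $k=3$. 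That proves neither $\eps'\le\Gamma^{-k}\eps$ nor the bound $\eps'\le\Gamma^{-3}\eps$ that Step~2 takes for granted, both for \cite[Lemma~4.10]{itco_flow} and for the $\Gamma^{-3}\eps$ slack in the strengthened (B). The remark that ``there is room because $\Gamma=4n^2$'' does not hold for this bound: it is off by a factor of $n^2/2$. Your treatment of $\delta_1$ and of the case $k>3$ is fine.

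The missing ingredient is the sharper estimate $\nu(\cG,\cu)\le\nu(G,u^f)\le\eps$. You state it in Step~2 but do not use it in Step~1. To justify it: a minimum cut of $(G,u^f)$ has capacity at most $\eps<\Gamma\eps$, so no abundant path leaves its source side. Restricted to the vertices of $\cG$, it therefore cuts only essential arcs, whose capacities in $\cG$ are $r_e\le u^f_e$. Plugging this into the guarantee of $\approxFlow$ gives, for $k=3$,
$\delta_2=|E(\cG)|\,\cu^y_{e^\star}\le\nu(\cG,\cu)/\Gamma^{5}\le\Gamma^{-5}\eps=\Gamma^{-(k+2)}\eps$,
and $\delta_2=0$ for $k>3$. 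Hence $\delta\le(n^2+1)\Gamma^{-(k+2)}\eps\le\Gamma^{-(k+1)}\eps$ and $\eps'=2n^2\delta\le\Gamma^{-k}\eps\le\Gamma^{-3}\eps$, using $n^2+1\le\Gamma$ and $2n^2\le\Gamma$. This holds uniformly in $k$, with no separate constant check for the two contributions. This is exactly the paper's argument. Once it is in place, the rest of your outline matches what the paper does: reusing the original proof of the invariants and of the strengthened (B) via $\eps'\le\Gamma^{-3}\eps$.
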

\begin{proof}[Proof change]
   The proof is almost the same. When proving property~\cite[Definition~4.9(A)]{itco_flow}, we have
   \[\delta = \delta_1 + \delta_2 \le n^2\Gamma^{-(k + 2)}\eps + \Gamma^{-(k + 2)}\eps \le \Gamma^{-(k+1)}\eps,\]
   \[\eps' = 2n^2\delta \le \Gamma^{-k}\eps,\]
   and the proof of~\cite[Definition~4.9(B')]{itco_flow} is the same because \[\eps' \le \Gamma^{-k}\eps \le \Gamma^{-3}\eps.\]
\end{proof}
\Cref{lem:lemma62} says that $\tau + 1$ is a valid successor of iteration $\tau$. Using this, the authors can prove \cite[Lemma~6.4]{itco_flow}, which says that for any iterations $\tau$ and $\tau'$ such that $\tau < \tau'$, $\tau'$ is a valid successor of $\tau$. Property \cite[Definition~4.9(A)]{itco_flow} is obvious; property \cite[Definition~4.9(B)]{itco_flow} can be proved in the same way since the proof uses only the stronger version of property \cite[Definition~4.9(B)]{itco_flow} from \Cref{lem:lemma62} and the original version of \cite[Definition~4.9(A)]{itco_flow}, which is weaker.

The following lemmas \cite[Lemma~4.8]{itco_flow}, \cite[Lemma~6.5]{itco_flow} and \cite[Lemma~6.7]{itco_flow} are used in \cite[Lemma~6.10]{itco_flow} to show that there are $O(m_c)$ essential arcs. \cite[Lemma~6.5]{itco_flow} states that throughout the algorithm $u_{\reverse{e}}^{f^{(\tau)}} = 0$ for non-boundary arcs $e = (i, j) \in E^\circ$ such that $u_e = \infty$, where $E^\circ$ is the set of arcs from the original instance. The proof of this lemma assumes that extension arcs can create multi-arcs; in particular, there could be an essential extension arc $(i, j)$ which is different from arc $e = (i, j) \in E^\circ$. \cite[Lemma~6.7]{itco_flow} shows that there are at most $2n$ extension arcs and at most $n\log_2n$ shortcut arcs, and its proof does not change. \cite[Lemma~4.8]{itco_flow} does change:
\begin{lemma}[{cf.~\cite[Lemma~4.8]{itco_flow}} with changes]\label{lemma:4.8_with_changes}
   Let $(f, r, \eps, \mathcal{R})$ be a valid iterate in $(G, u)$ and let $k$ be an essential arc exponent. For every essential arc $e$, either $e \in E_{\text{gap}}(f, r, \eps)$, or $e \in E_{\text{bound}}(\mathcal{R})$, or $\Gamma^{-(k+2)}\eps \le u_e + u_{\reverse{e}} < \Gamma\eps$.
\end{lemma}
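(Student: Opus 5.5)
Since essential arcs lie in $\Ep(f,\eps)$, the plan is to start from the invariant of a valid iterate, \Cref{def:proper}\ref{it:bounds}, and split into cases. For an essential arc $e$, that item guarantees one of four things: $u^f_e=0$; $u^f_{\rev{e}}=0$; $e\in\gEStr(f,\nr,\eps)$; or $e\in\gEbd(\roots)$. The last two cases are exactly two of the three alternatives in \Cref{lemma:4.8_with_changes}, so they need no further work. What remains is to handle an essential arc with $u^f_e=0$ or $u^f_{\rev{e}}=0$ and show that $\Gamma^{-(k+2)}\eps\le u_e+u_{\rev{e}}<\Gamma\eps$.

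The key identity is this. Assume $f$ uses at most one of $e,\rev{e}$, which can be normalized exactly as in \cite{itco_flow}, or note it holds in general since $u^f_e+u^f_{\rev{e}} = u_e+u_{\rev{e}}$ directly from the definition $u^f_e=u_e-f_e+f_{\rev{e}}$. Then whenever one direction has zero residual capacity, the other direction has residual capacity exactly $u_e+u_{\rev{e}}$. Concretely, if $u^f_e=0$ then $u^f_{\rev{e}}=u_e+u_{\rev{e}}$, and symmetrically if $u^f_{\rev{e}}=0$.

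Next we apply \Cref{lem:essential_implies_interval}. Some direction $d\in\{e,\rev{e}\}$ satisfies $\Gamma^{-(k+2)}\eps\le u^f_d<\Gamma\eps$. The lower bound in this interval is strictly positive whenever $k<\infty$, so the direction with zero residual capacity cannot be $d$. Therefore $d$ is the other direction, and $u^f_d=u_e+u_{\rev{e}}$ lands in the required interval. The subcase $k=\infty$ is the one real point to check, since then the lower bound is $0$ and $d$ could be the zero direction. Here the selection clause of \Cref{def:essential_arc} is not used, so $e$ is essential only via the strict inequality $0<u^f_d<\Gamma\eps$. This again forces $d$ to be the nonzero direction, and the bound $0\le u_e+u_{\rev{e}}<\Gamma\eps$ follows.

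The main obstacle is making sure the identity $u^f_e+u^f_{\rev{e}}=u_e+u_{\rev{e}}$ applies to every arc type in the current graph, including extension and shortcut arcs and their paired reverses. This holds because every arc object has a designated reverse partner, so the identity is purely algebraic. Apart from that, the argument is identical to the original proof of \cite[Lemma~4.8]{itco_flow}, with $\Gamma^{-5}$ replaced by $\Gamma^{-(k+2)}$ and strict inequalities relaxed to $\le$ to account for the arcs added by the selection rule.
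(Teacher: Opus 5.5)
Your proof is correct and follows the paper's route: a case split on item (C) of \Cref{def:proper}, using $u^f_e+u^f_{\rev{e}}=u_e+u_{\rev{e}}$ in the two zero-residual cases, combined with \Cref{lem:essential_implies_interval}. Your explicit check that the zero-residual direction cannot be the witnessing direction is a step the paper's one-line proof leaves implicit. For $k<\infty$ this follows from $\Gamma^{-(k+2)}\eps>0$. For $k=\infty$ you rightly go back to the strict inequality in \Cref{def:essential_arc}, because \Cref{lem:essential_implies_interval} alone gives no information about $u^f_{\rev{e}}$ when $u^f_e=0$.
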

\begin{proof}
   In~\Cref{it:bounds} of~\Cref{def:proper}, the cases $u^f_e = 0, u^f_{\reverse{e}} = u_e + u_{\reverse{e}}$ and $u^f_e = u_e + u_{\reverse{e}}, u^f_{\reverse{e}} = 0$ are combined with \Cref{lem:essential_implies_interval} to get $\Gamma^{-(k+2)}\eps \le u_e + u_{\reverse{e}} < \Gamma\eps$.
\end{proof}

We are now ready to discuss the key lemma bounding the number of essential arcs.
\begin{restatable}[{cf.~\cite[Lemma~6.10]{itco_flow}}]{lemma}{essentialarcsbound}\label{lem:essential_bound}
   The total number of essential arcs throughout all iterations is $O(m_c)$.
\end{restatable}
\begin{proof}[Proof change]
   By~\cite[Lemma~6.7]{itco_flow}, there are $O(n)$ extension arcs, and every extension arc will turn into a gap arc in the iteration after it is essential and will be a free arc in the following iteration. Shortcut arcs cannot be essential. It remains to bound the number of essential arcs among the arcs from the original instance.

   There are 3 types of essential arcs according to \Cref{lemma:4.8_with_changes}:
   \begin{itemize}
      \item Regular arcs: $\Gamma^{-(k+3)} \eps \le u^f_e + u^f_{\reverse{e}} = u_e + u_{\reverse{e}} < \Gamma \eps$.
      \item Gap arcs.
      \item Boundary arcs.
   \end{itemize}
   Just as in the original proof, we show that each arc can be of each type at most a constant number of times.

   An arc can be regular in at most 3 iterations.
   Let $\eps^{(i)}$ and $k^{(i)}$ be the values of $\eps$ and $k$, respectively, in iteration $i$. For every iteration $i$ we have $\eps^{(i + 1)} \le \Gamma^{-k^{(i)}}\eps^{(i)}$. Let $\tau$ be the first iteration such that the arc $e$ is regular. If iteration $\tau+3$ does not exist, the claim follows immediately. Otherwise, we have
   \[ u_e + u_{\reverse{e}} \ge \Gamma^{-(k^{(\tau)} + 3)}\eps^{(\tau)} \ge \Gamma^{-3 + k^{(\tau + 1)} + k^{(\tau + 2)}}\eps^{(\tau + 3)} \ge \Gamma^3\eps^{(\tau + 3)}. \]
   Thus, the arc is no longer regular at the start of iteration $\tau + 3$. Only two-way finite arcs can be regular, which bounds the number of regular arcs by $O(m_c)$.

   The proof that there are $O(m_c)$ gap arcs is the same.

   Finally, the proof for the boundary arcs is almost the same except that \[\alpha \defeq u^f_e \ge \Gamma^{-(k^{(\tau)}+2)}\eps \ge \Gamma^{4}\eps'\] since \[\eps' \le \Gamma^{-k^{(\tau)} - k^{(\tau+1)} - k^{(\tau+2)}}\eps \le \Gamma^{-(k^{(\tau)} + 6)}\eps.\] We can bound the number of boundary arcs by $O(n)$.
\end{proof}

In~\cite{itco_flow}, the algorithm's correctness (i.e., that the algorithm computes a maximum flow) is shown in~\cite[Lemma~6.9]{itco_flow}.
Its proof is unchanged except for its use of the helper~\cite[Lemma~6.8]{itco_flow} to justify \textsf{WitRoute}; our replacement is justified in~\Cref{sec:witroute}.

The last important lemma shows that the algorithm is strongly polynomial.
\begin{lemma}[{cf.~\cite[Lemma~6.13]{itco_flow}}]
   The algorithm is strongly polynomial. For rational input, all values of $f$, $\eps$, and $r$ remain polynomially bounded in the input encoding length.
\end{lemma}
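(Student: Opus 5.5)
This is a "proof change" lemma in the same style as the others. The plan is to reuse the argument of \cite[Lemma~6.13]{itco_flow} and check the few places where our modifications enter.

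There are two such places:
\begin{itemize}
\item the threshold $\Gamma^{-(k+2)}\eps$ for a non-polynomially-bounded $k$;
\item the use of $\ExactFlow$ in place of $\approxFlow$.
\end{itemize}

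\textbf{Strong polynomiality of the operation count.} The original argument bounds the number of iterations, the arithmetic per iteration, and the bit sizes. By~\Cref{lem:essential_bound}, there are $O(m_c)$ essential arcs in total. Whenever $k<\infty$, each iteration has at least $2n^{1/3}$ essential arcs. When $k=\infty$, the small-arc threshold is $0$, so $\delta_1=0$. Also $\delta_2=0$ in this case, because we then call $\ExactFlow$ (the threshold differs from $\Gamma^{-5}\eps$). Hence $\eps'=0$ and the loop stops. The number of iterations is therefore $O(m_c/n^{1/3}+1)$.

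The threshold can be found without exponentiating $\Gamma$ to a possibly huge power. If $k>3$, \Cref{def:essential_arc} forces $\Gamma^{-(k+2)}\eps$ to equal the residual capacity $u^f_e$ of some arc in $\Ep$. So the threshold is obtained by sorting the candidate values $u^f_e,u^f_{\rev e}$ below $\Gamma^{-5}\eps$ and taking the appropriate order statistic. The value $k$ itself never needs to be computed; the algorithm uses only $\Gamma^{-(k+2)}\eps$. Every other step is one of the following:
\begin{itemize}
\item the original operations;
\item $\ExactFlow$, which is strongly polynomial~\cite{fastermincostflow}.
\end{itemize}

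\textbf{Bit-size bounds.} We follow the original proof that $f$, $\eps$, $r$ stay polynomially bounded. That proof expresses each new value as a short arithmetic expression in the following quantities:
\begin{itemize}
\item previous values;
\item a residual capacity $\bar u^y_{e^\star}$ from $\approxFlow$;
\item the value $\delta_1$, which is $n^2$ times an existing residual capacity.
\end{itemize}
It then argues that every number lies in a set generated from input capacities with a bounded number of operations, e.g., of the form $\sum\pm u_e$ times a polynomially bounded factor.

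Under our changes, the new $\eps=2n^2\delta$ is $2n^2(\delta_1+\delta_2)$. Here $\delta_1$ is $n^2$ times an existing residual $u^f_e$, and $\delta_2$ is either the original $|E(\cG)|\,\cu^y_{e^\star}$ or $0$. So $\eps$ keeps the same structural form as in the original proof.

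\textbf{Main obstacle: the flow $y$ returned by $\ExactFlow$.} Because $y$ is a basic acyclic maximum flow in $(\cG,\cu)$, its entries are determined by a forest on which flow is tight. Concretely, each $y_e$ is a signed sum of capacities $\cu_{e'}=r_{e'}$ of saturated arcs, with $\pm1$ coefficients and at most $n$ terms. This follows from~\Cref{def:basic_flow}(\ref{basic_forest}) by peeling leaves of the forest.

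This replaces the scaled-and-rounded structure of $\approxFlow$ outputs with something at least as well behaved. The subsequent updates to $f$ (via $\SendFlow$) and to $r$ (via $\PostProcess$) add such sums. I expect the hard part to be verifying that this preserves the invariant from the original proof that each number is an integer combination of input capacities and $\eps$-type quantities, with polynomially bounded coefficients and depth. I would first check how the original proof handles $\approxFlow$'s output. If it tracks it only through $\bar u^y_{e^\star}$ and basicness, the same forest argument covers $\ExactFlow$ verbatim.
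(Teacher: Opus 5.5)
Your proposal matches the paper's proof. Like you, the paper defers to the original argument of \cite[Lemma~6.13]{itco_flow} and checks the same points: the threshold $\Gamma^{-(k+2)}\eps$ is a current residual capacity or $0$ when $k>3$, and equals $\Gamma^{-5}\eps$ (only $O(\log n)$ bits longer than $\eps$) when $k=3$. The auxiliary flow is still acyclic and basic even when $\ExactFlow$ is used, and $\delta_1$ and $\delta_2$ keep their original form.

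One small correction: leaf-peeling writes each $y_e$ as a $\pm1$ combination of capacities of saturated arcs. The number of terms is therefore bounded by the number of finite-capacity (essential) arcs of $\cG$, not by $n$ in general. This is harmless here, since $\ExactFlow$ iterations have only $O(n^{1/3})$ such arcs.
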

\begin{proof}[Proof change]
   The value $\Gamma^{-(k+2)}\eps$ is the current residual capacity of some arc or $0$ when $k > 3$, and is $O(\log n)$ bits longer than the previous epsilon when $k = 3$.

   Apart from that, the proof does not change: note that in every iteration for the auxiliary network $(\bar{G}, \bar{u})$ we still return an acyclic basic flow (possibly an exact one) and for $\delta = \delta_1 + \delta_2$ we still have that $\delta_2$ is equal to some residual capacity in the auxiliary network $\bar{G}$ times $|E(\bar{G})|$ and $\delta_1$ is equal to some residual capacity in the network times $n^2$.
\end{proof}

\subsection{Analysis of changes}

\begin{lemma}\label{lem:iteration_progress}
   The number of essential arcs in each iteration, except the last one, is at least $2n^{1/3}$.
\end{lemma}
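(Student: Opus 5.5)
The plan is to split into the cases $k<\infty$ and $k=\infty$ of \Cref{def:essential_arc}, and to show that $k=\infty$ forces the current iteration to be the last one.

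\textbf{Case $k<\infty$.} The minimum in the definition of $k$ is attained, since the count is a right-continuous step function of the threshold, with jumps only at residual capacities. So at $k$ the set
\[A=\{e\in\Ep:\ \Gamma^{-(k+2)}\eps\le u^f_e<\Gamma\eps \text{ or } \Gamma^{-(k+2)}\eps\le u^f_{\rev{e}}<\Gamma\eps\}\]
has at least $2n^{1/3}$ elements. Both the condition and $\Ep$ are symmetric under $e\mapsto\rev{e}$, so $A$ is closed under reversal. The essential arcs (with strict lower inequality) form a subset of $A$. If there are at least $2n^{1/3}$ of them, we are done. Otherwise, the padding rule adds arcs with $u^f_e=\Gamma^{-(k+2)}\eps$ together with their reversals from $A$, until the count reaches exactly $2\lceil n^{1/3}\rceil\ge 2n^{1/3}$. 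This is possible because $|A|\ge 2n^{1/3}$ and $A$ is reversal-closed. The one point to check is parity: the count must be able to reach an even target while adding pairs. Strict essential arcs also come in reversal pairs, so their count is even, and each added pair increases it by 2.

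\textbf{Case $k=\infty$.} Here no finite $k\ge 3$ has the property, so even the full set $\{e\in\Ep: 0<u^f_e<\Gamma\eps \text{ or } 0<u^f_{\rev{e}}<\Gamma\eps\}$ has fewer than $2n^{1/3}$ arcs. (Letting $k\to\infty$ exhausts every positive residual capacity.) The threshold is $0$, so every non-abundant arc of $\Ep$ with positive residual capacity is essential, and $\delta_1=0$ because small arcs have zero residual capacity. The algorithm then calls $\ExactFlow$ (or skips it when $s$ or $t\notin\Vess$), which gives $\delta_2=0$. Hence $\delta=0$, and $\eps\gets 2n^2\delta=0$, so the while loop terminates after this iteration. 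This iteration is therefore the last one, and the statement only concerns the others.

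The main obstacle is the second case. I have to justify that $\delta_2=0$ and $\delta_1=0$ truly hold when the threshold is $0$, using the per-iteration optimality argument adapted in \Cref{lem:lemma62} and the adapted \cite[Lemma~6.3]{itco_flow}. These give $\val(\bar G,\bar u)\ge\val(G,u^f)-\delta_1$ with $\delta_1=0$, so an exact flow on $\bar G$ leaves the residual value at $0$. Along the way I must also confirm that the algorithm's branch condition "$\Gamma^{-(k+2)}\eps=\Gamma^{-5}\eps$" is false for $k=\infty$, so that $\ExactFlow$ rather than $\approxFlow$ is used.
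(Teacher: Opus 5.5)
Your proof is correct and follows the same route as the paper's: for $k<\infty$ the bound comes straight from \Cref{def:essential_arc}, with your reversal-closure and parity remarks making the paper's ``immediately'' explicit, and for $k=\infty$ the iteration ends with $\eps=0$ and is therefore the last. The ``main obstacle'' you flag is not needed, since termination only uses that the code assigns $\delta_1=\delta_2=0$, so $\eps\gets 2n^2\delta=0$; the optimality argument via the adapted Lemma~6.3 of~\cite{itco_flow} concerns correctness, not the iteration count. Also, the count is right-continuous in $k$, not in the threshold, and that is what makes the minimum attained.
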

\begin{proof}
   This follows immediately from \Cref{def:essential_arc}. If $k = \infty$ then at the end of the iteration we have $\eps = 0$, meaning it is the last iteration.
\end{proof}

\begin{corollary}\label{cor:iterations}
   \Cref{alg:max-flow} performs $O(m_c / n^{1/3})$ iterations.
\end{corollary}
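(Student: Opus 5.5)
This is a counting argument that combines \Cref{lem:iteration_progress} with \Cref{lem:essential_bound}. Let $I$ denote the number of iterations of the while loop of \Cref{alg:max-flow}. By \Cref{lem:iteration_progress}, each of the first $I-1$ iterations has at least $2n^{1/3}$ essential arcs. Each of these arcs is counted in the total of \Cref{lem:essential_bound}, which is over all iterations, with multiplicity across iterations. So I would write
\[
(I-1)\cdot 2n^{1/3} \;\le\; \sum_{\tau} |\Eaux^{(\tau)}| \;=\; O(m_c),
\]
which gives $I \le 1 + O(m_c/n^{1/3}) = O(m_c/n^{1/3})$. The additive $1$ is absorbed because $m_c \ge n \ge n^{1/3}$.

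The only step needing care is that the quantities being compared are the same. The essential-arc count in \Cref{lem:essential_bound} must be a sum over iterations of the sizes of the sets $\Eaux$ actually used by the algorithm. This includes the arcs added to reach exactly $2\lceil n^{1/3}\rceil$ when $k<\infty$, since those are declared essential by \Cref{def:essential_arc}. I would check that the proof of \Cref{lem:essential_bound} covers these padded arcs. It does: \Cref{lem:essential_implies_interval} includes the boundary case $u^f_e=\Gamma^{-(k+2)}\eps$, and \Cref{lemma:4.8_with_changes} is stated for every essential arc.

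The remaining point is the last iteration. If $k=\infty$ then $\delta_1=0$ and the flow is computed exactly, so $\delta_2=0$ as well. Hence $\eps$ becomes $0$ and the loop terminates, so there is at most one iteration in which fewer than $2n^{1/3}$ essential arcs may occur. I expect no real obstacle here; the corollary is a direct division once the counting convention is fixed.
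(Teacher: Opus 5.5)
Your proposal is correct and is essentially the paper's argument: combine \Cref{lem:iteration_progress} (every non-final iteration has at least $2n^{1/3}$ essential arcs) with \Cref{lem:essential_bound} ($O(m_c)$ essential arcs in total, counted with multiplicity across iterations), then divide. Your extra checks make explicit what the paper leaves implicit in those two lemmas: that the padded arcs are covered, and that a $k=\infty$ iteration forces $\delta_1=\delta_2=0$ and hence $\eps=0$.
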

\begin{proof}
   By \Cref{lem:iteration_progress}, in every iteration, except the last one, we process at least $2n^{1/3}$ essential arcs, and by \Cref{lem:essential_bound}, there are in total $O(m_c)$ essential arcs throughout all iterations. This bounds the number of iterations by $O(m_c / n^{1/3})$.
\end{proof}

\section{Parallelizing Important Parts}\label{sec:parallelizing_basics}
In this section we describe how to parallelize the important
parts of~\Cref{alg:max-flow}, i.e., the algorithm of~\cite{itco_flow} outlined in~\Cref{sec:alg_overview}, with the changes described in~\Cref{sec:changes}. We omit the implementation of the ITCO data structure, which is deferred to~\Cref{sec:hitting_set_itco}.

\subsection{$\Initialize$ and $\PostProcess$}
For each pair of non-free arcs $e, \reverse{e}$, the algorithm sets $f_e = 0, f_{\reverse{e}} = u_{\reverse{e}}$ or $f_e = u_e, f_{\reverse{e}} = 0$ depending on those arcs.
This can change the flow excess $\ex_f(v)$ to a non-zero value for some vertices $v \notin \{s, t\}$, which requires rerouting the flow using the boundary arcs of those vertices. Because modified boundary arcs are distinct across all $v$, this rerouting can be done in $\Ot(1)$ depth.

$\PostProcess$ is very similar: first, independently for each reverse pair, we process the at most two orientations in an arbitrary fixed order, and then we have to fix the balance, which can be done in parallel per vertex.
\subsection{Maintaining abundant arcs, free arcs and free components}
At initialization, we can check in parallel whether each arc is abundant or free and add it to the corresponding set or sets.
This step takes $\Ot(1)$ depth and $\Ot(m)$ work. To construct the initial free components of $s$ and $t$, we can run a parallel BFS twice,
with each run taking $\Ot(m)$ work and $\Ot(n)$ depth. We then iterate through the rest of the vertices, and if a vertex has no component assigned, we use a parallel BFS on free arcs in order to compute its initial free component. Each BFS has depth $\Ot(\#\text{vertices assigned to a component})$, which in total gives us depth $\Ot(n)$.

We can also store non-abundant arcs in a parallel balanced search tree (parallel BST; see, e.g.,~\cite{BlellochFS16,GilMV91,PaulVW83}) ordered by their residual capacity throughout the algorithm, which allows us to detect new abundant arcs and free arcs in $\Ot(1)$ depth per iteration. Every time we detect a new abundant arc, we check if it connects the free component of $s$ or $t$ to some new component.
\begin{lemma}\label{lem:abundant_bound}
   The number of initially non-abundant arcs that become abundant is $O(m_c)$.
\end{lemma}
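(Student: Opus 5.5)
We want to bound the number of arcs that are non-abundant at initialization but become abundant at some later iteration by $O(m_c)$. The plan is to classify such an arc $e$ by the arc types the algorithm distinguishes: original arcs of $E^\circ$, extension arcs in $\Fe$, and shortcut arcs in $\Fs$.

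By \cite[Lemma~6.7]{itco_flow}, there are at most $2n$ extension arcs and $O(n\log n)$ shortcut arcs. Each such arc has a reverse arc, so together they contribute $O(n\log n)$ arcs. This is $\Ot(m_c)$ rather than $O(m_c)$. To remove the logarithm, I would use that shortcut arcs are free, hence abundant, at the moment they are created, so they are never initially non-abundant arcs that later become abundant. Only the $O(n)$ extension arcs and their reverses then remain in this group, and $O(n)\subseteq O(m_c)$ since $m_c\ge n$.

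For original arcs $e$, I would split by capacity.
\begin{enumerate}
\item If $u_e=\infty$, then $u^f_e=\infty\ge\Gamma\eps$ always, so $e$ is abundant from the start and is not counted.
\item If $u_e=0$ and $u_{\rev{e}}=0$, then $u^f_e=0$ always, so $e$ never becomes abundant.
\item If $u_e=0$ and $u_{\rev{e}}=\infty$, then $e$ is the reverse of an infinite-capacity arc. Its residual capacity $u^f_e=f_{\rev{e}}$ can grow, so it may become abundant. Such arcs are not counted by $m_c$ directly, and there may be up to $\Theta(m)$ of them. This is the main obstacle.
\end{enumerate}
Every remaining arc has $u_e$ or $u_{\rev{e}}$ positive and finite, and each capacitated arc accounts for at most two such arcs (itself and its reverse). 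This case therefore contributes $O(m_c)$.

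For arcs in case~3, I would invoke \cite[Lemma~6.5]{itco_flow}: for a non-boundary original arc $\rev{e}=(j,i)$ with $u_{\rev{e}}=\infty$, we have $u^{f}_{e}=0$ throughout the algorithm. Since $\eps>0$ during every iteration, such an $e$ can never satisfy $u^f_e\ge\Gamma\eps$, so it never becomes abundant. The remaining boundary arcs have an endpoint in $\{s,t\}$, so there are $O(n)$ of them. Here I would also count the auxiliary arcs $(v,s)$ and $(t,v)$ and their reverses, which add $O(n)$.

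Summing all cases gives $O(n)+O(m_c)=O(m_c)$. The delicate points are checking that \cite[Lemma~6.5]{itco_flow} still holds under our modified definitions, which should follow since its proof only uses the invariant of \Cref{def:proper} and the post-processing, and confirming that shortcut arcs are created abundant.
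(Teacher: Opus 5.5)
Your proposal is correct and follows the same route as the paper's proof. Both use \cite[Lemma~6.5]{itco_flow} to rule out the zero-capacity reverses of non-boundary infinite-capacity original arcs, and both note that shortcut arcs are free when created. The remaining candidates are then counted in the same way: pairs with a positive finite capacity give $O(m_c)$, boundary arcs give $O(n)$, and extension arcs give $O(n)$ by \cite[Lemma~6.7]{itco_flow}. Your extra case split, showing that infinite-capacity arcs are always abundant and zero/zero pairs never are, only makes explicit what the paper leaves implicit.
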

\begin{proof}
   By~\cite[Lemma~6.5]{itco_flow}, if an original non-boundary arc $e$ satisfies $u_e=\infty$, then $u^{f}_{\reverse{e}}=0$ throughout the algorithm. Thus, the zero-capacity reverse arcs of such arcs never become abundant. The remaining arcs that may become abundant consist of $O(m_c)$ original arcs whose pair has positive finite capacity, $O(n)$ boundary arcs, and $O(n)$ extension arcs by~\cite[Lemma~6.7]{itco_flow}. Shortcut arcs are free, and hence abundant, when they are introduced. Since $m_c\ge n$, the claimed bound follows.
\end{proof}
By \Cref{lem:abundant_bound}, at most $O(m_c)$ initially non-abundant arcs become abundant throughout the algorithm. For every resulting component merge, we use $\textsf{RemoveRoot}$, which takes $\Ot(n)$ total work.
\subsection{Computing $k$, \texorpdfstring{$\delta_1$}{delta1} and sampling essential arcs}
Let $q=\lceil n^{1/3}\rceil$. Recall that non-abundant arcs are stored in a parallel BST ordered by residual capacity. For this computation, we restrict the BST to arcs in $E_{\mathcal P}$ and regard $e$ and $\rev e$ as one reverse pair. The key of a pair is the largest residual capacity below $\Gamma\eps$ among its two orientations.

If there are fewer than $q$ pairs with a positive key, we set $k=\infty$ and $\Gamma^{-(k+2)}\eps=0$. Otherwise, let $\beta$ be the $q$-th largest positive pair key and set
\[\Gamma^{-(k+2)}\eps=\min\{\Gamma^{-5}\eps,\beta\}.\]
This determines $k\ge 3$. We include in $\Eaux$ every reverse pair whose key is strictly larger than this threshold. If $k<\infty$ and this gives fewer than $2q$ arcs, we add arbitrary reverse pairs whose key equals the threshold until $|\Eaux|=2q$. This is exactly the selection rule in~\Cref{def:essential_arc}.

To compute $\delta_1$, we temporarily remove the pairs in $\Eaux$ from the BST and query the largest remaining residual capacity not exceeding $\Gamma^{-(k+2)}\eps$, taking zero if none exists; $\delta_1$ is $n^2$ times this value. We then reinsert the removed pairs.

The order-statistic, range-reporting, deletion, and reinsertion operations take $\Ot(m_i)$ work and depth in iteration $i$, where $m_i=|\Eaux|$. Since $\sum_i m_i=O(m_c)$ by~\Cref{lem:essential_bound}, their total work and depth are $\Ot(m_c)$.

\subsection{Exact Flow}\label{sec:batch_algorithm}
\begin{lemma}[Parallel Exact Flow]\label{lemma:exact_flow}
   There exists a Monte Carlo randomized parallel implementation of $\ExactFlow$ with $\Ot(n^3m_c)$ work and $\Ot(m_c)$ depth. The algorithm's output is correct with high probability.
   More precisely, for any supplied integer $N\geq 2$ and a fixed constant $\gamma\geq 1$ such that $2mN^\gamma$ fits in $O(1)$ machine words, its failure probability can be made at most $N^{-\gamma}$.
\end{lemma}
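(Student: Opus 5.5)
We implement $\ExactFlow$ by reduction to minimum-cost flow with random costs, solved by Orlin's strongly polynomial min-cost flow algorithm~\cite{fastermincostflow} run in parallel. First, on $(G,u)$ we compute the maximum flow value $\nu$, or at least a demand vector, with one call to~\cite{fastermincostflow}. We add an infinite-capacity return arc $(t,s)$ of cost $-C$ for a large $C$, or equivalently fix the demand $b$ with $b_s=-\nu$, $b_t=\nu$. Next we sample independent uniformly random integer costs $c_e\in\{1,\dots,2mN^\gamma\}$ on the arcs and solve the min-cost flow problem routing $b$ under capacities $u$ with costs $c$. Orlin's algorithm performs $\Ot(m_c)$ shortest-path computations and otherwise cheap augmentation/contraction steps. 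Each shortest-path computation (with nonnegative reduced costs, i.e., Dijkstra-type, or a Bellman--Ford-free variant) is replaced by a polylog-depth parallel APSP/SSSP routine based on repeated min-plus squaring with $\Ot(n^3)$ work. This gives $\Ot(m_c)$ depth and $\Ot(n^3 m_c)$ work. The costs are integers of magnitude $2mN^\gamma$, which fit in $O(1)$ words by assumption. Orlin's algorithm is strongly polynomial in the capacities, so real capacities cause no issue.

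For structure we invoke the isolation lemma for flows~\cite{uniqueness_lemma_paper}. The feasible region $\{f: 0\le f\le u,\ \text{excess}=b\}$ is a polytope whose vertices are basic flows. Random costs from a range of size $R=2mN^\gamma$ make the optimum unique with probability at least $1-m/R\ge 1-N^{-\gamma}$, by the standard argument: for each arc $e$, fix all other costs; at most one value of $c_e$ makes $e$ "ambiguous", meaning $e$ has different flow in two optimal solutions. The step needing care is that uniqueness of the optimum among all feasible flows, not only among vertices, is what the argument gives. The LP isolation version says that with that probability the optimal face is a single point, so the optimal flow $f$ is a vertex, hence basic by \Cref{def:basic_flow}(\ref{basic_convex}). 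It is also acyclic: if $\supp(f)$ contained a directed cycle, cancelling $\epsilon$ flow around it would stay feasible and strictly decrease cost because all costs are positive, contradicting optimality. The flow routes demand $\nu$, so it is maximum.

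The main obstacle I anticipate is making the isolation argument rigorous for the continuous polytope with real capacities and $\infty$ arcs. I would handle unbounded arcs by noting that, with positive costs, an optimal solution never exceeds $\sum$ finite capacities on any arc, so $\infty$ can be capped without changing optima. I would then apply the isolation lemma to the finite vertex set of the resulting polytope. Since any optimum set is a face, uniqueness over vertices implies uniqueness overall. A secondary check is that the parallel shortest-path substitutions inside~\cite{fastermincostflow} preserve its $\Ot(m_c)$ count of sequential phases; this is taken from the discussion in the introduction.
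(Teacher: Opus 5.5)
Your plan follows the paper's proof. You make one min-cost-flow call to obtain $\nu$. You make a second call with demand $b_s=-\nu$, $b_t=\nu$ and costs drawn uniformly from $\{1,\dots,2mN^\gamma\}$. Then uniqueness comes from the isolation lemma for flows, uniqueness gives basicness, positive costs give acyclicity, and the bounds come from \Cref{lemma:parallel_mincost}. Capping infinite capacities at the total finite capacity is also valid; the paper caps at $\sum_v|b_v|$.

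The part that does not hold up is your own justification of isolation. The Mulmuley--Vazirani--Vazirani one-value argument relies on every coordinate being $0$ or $1$. With real capacities, the claim that, for fixed $c_{-e}$, at most one value of $c_e$ lets $f_e$ differ between two optimal flows is false. Take parallel $s\to t$ arcs $e,a_1,\dots,a_k$ with $u_{a_j}=1$, $c_{a_j}=j$, $u_e=D\ge k$, and demand $D$. For every $c_e=j\in\{1,\dots,k\}$, saturate $a_1,\dots,a_{j-1}$ and set $(f_e,f_{a_j})$ to either $(D-j+1,0)$ or $(D-j,1)$. Both flows are optimal, so $e$ is ``ambiguous'' for $k$ values of its cost. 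For the same reason you cannot apply the set-family isolation lemma to the vertex set of the polytope, whose coordinates are arbitrary reals. The cited theorem instead exploits the network structure: it charges each failure to an arc through an event that excludes at most two values of that arc's cost. Your $1-m/R$ bound is therefore unsupported.

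The repair is to use \Cref{lemma:uniqueness} as a black box. After capping, it bounds the failure probability by $2m/C$, and $C=2mN^\gamma$ gives exactly $N^{-\gamma}$, which is all the statement needs. The paper notes that the cited proof, written for $C=4m$, is a union bound over per-arc events of probability $2/C$ and so works for any $C$.

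Two smaller points should be made explicit.
\begin{itemize}
\item A vertex of $\{f:0\le f\le u,\ \ex_f=b\}$ is basic in the sense of item~\ref{basic_convex} of~\Cref{def:basic_flow} only because $\nu$ is the maximum value. If $f=(1-\alpha)f_1+\alpha f_2$ with feasible flows $f_1,f_2$ of arbitrary value, then $\val(f_1),\val(f_2)\le\nu$ average to $\nu$. Hence both equal $\nu$ and both lie in that polytope.
\item The standard reduction to an uncapacitated instance has $n+k$ vertices. Plain repeated squaring would therefore cost $\Ot((n+k)^3)$ per shortest-path computation. Your $\Ot(n^3)$ bound needs the observation that the $k$ auxiliary vertices have degree $2$ and can be handled as if there were only $O(n)$ vertices.
\end{itemize}
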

This section is devoted to proving~\Cref{lemma:exact_flow}. To that end, we require a low-depth strongly polynomial min-cost flow algorithm.
We refer to the following result of Orlin~\cite{fastermincostflow}.

\begin{lemma}[\cite{fastermincostflow}]\label{lemma:parallel_mincost}
   There exists a parallel algorithm that computes a minimum-cost circulation using $\Ot(m_c\cdot n^3)$ work and $\Ot(m_c)$ depth in a graph $G=(V,E)$ whose $m$ arcs have capacities in $\RR_{\geq 0}\cup \{\infty\}$ and costs in $\RR$, where $m_c$ denotes $n$ plus the number of finite-capacity arcs.
\end{lemma}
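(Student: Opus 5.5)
The plan is to take Orlin's strongly polynomial min-cost flow algorithm~\cite{fastermincostflow} essentially as it is, and parallelize only its inner primitive. The sequential algorithm is a sequence of $O(m_c\log n)$ ``basic steps''. Each step is a single-source shortest-path computation with nonnegative reduced costs, followed by a potential update and an augmentation, and possibly followed by a contraction of an abundant arc. I would first restate the algorithm in this form and fix the invariants it maintains: $\Delta$-optimality of the pseudoflow, nonnegative reduced costs on residual arcs with residual capacity at least $\Delta$, and the contracted abundant components. The point of this step is to verify that the number of sequential shortest-path/augmentation rounds is $\Ot(m_c)$ rather than $\Ot(m)$. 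Infinite-capacity arcs never need to be split into the node-arc incidence form. Orlin's bound on the number of scaling phases and augmentations depends only on the number of finite-capacity arcs plus $n$: each arc with finite capacity becomes abundant after $O(\log n)$ phases, and the number of augmentations per phase is charged to nodes with large excess. So the sequential depth count is $\Ot(m_c)$ rounds.

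Next, I would implement each round in $\Ot(1)$ depth. The shortest-path computation with nonnegative reduced costs is performed on the current contracted residual graph, which has at most $n$ vertices. Orlin's paper runs Dijkstra on an auxiliary graph with $n+m$ nodes. Instead, I would observe that its distances coincide with distances in the ordinary residual graph of the contracted network, keeping only the minimum-cost parallel arc. This lets me form an $n\times n$ weight matrix and compute all needed distances by $O(\log n)$ min-plus squarings. That costs $\Ot(n^3)$ work and $\Ot(1)$ depth. Shortest-path trees are read off from witness indices of the squarings, which also gives the augmenting paths. The remaining per-round operations are:
\begin{enumerate}
\item updating potentials $\pi\gets\pi-d$;
\item selecting the excess/deficit pair and pushing $\Delta$ units along a path of length at most $n$, done with a prefix computation along the tree path;
\item detecting arcs whose flow exceeds the abundance threshold;
\item contracting the corresponding components with a parallel connectivity/union step.
\end{enumerate}
Each of these takes $\Ot(m+n^2)$ work and $\Ot(1)$ depth. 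Phase changes of $\Delta$ are handled similarly; when no active node has excess at least $\Delta$, the new value is computed as a maximum over excesses and flows.

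Multiplying, the $\Ot(m_c)$ rounds give $\Ot(m_c)$ depth and $\Ot(m_c(n^3+m))=\Ot(m_cn^3)$ work, since $m\le n^2$ after merging parallel arcs. Finally, the optimal flow of the contracted problem is expanded back to the original arcs. Orlin recovers it from the optimal potentials by solving a max-flow/feasibility problem on the arcs with zero reduced cost. I would instead undo the contractions in reverse order, which takes depth linear in the number of contractions, $O(n)$. Alternatively, I could reuse the same round structure, since this recovery is itself one more invocation of the algorithm on an instance with zero costs; that gives an additive $\Ot(m_c)$ depth.

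The main obstacle is the round count. I need to check carefully that Orlin's analysis really bounds the number of sequential shortest-path computations by $\Ot(m_c)$ when infinite-capacity arcs are not transformed away, and that contracting abundant arcs within a phase never forces additional sequential Dijkstra calls. If the direct argument fails, my fallback is to apply Orlin's uncapacitated algorithm to the standard transformation only for finite-capacity arcs. That network has $n'=O(m_c)$ nodes, but it is bipartite-like: each new node has degree $2$. Shortest paths through those degree-$2$ nodes can be eliminated by local preprocessing, so the min-plus work stays $\Ot(n^3)$ per round.
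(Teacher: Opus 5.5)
Your proposal is correct and is essentially the paper's argument. The paper splits only the finite-capacity arcs to obtain an uncapacitated instance with $n+k=O(m_c)$ vertices, runs the enhanced capacity scaling variant of Orlin's algorithm with $\Ot(m_c)$ shortest-path computations, and performs each of them by min-plus squaring in $\Ot(n^3)$ work and $\Ot(1)$ depth after handling the degree-$2$ auxiliary vertices (citing \cite[Section~5]{fastermincostflow}); this is exactly your ``fallback'', and also what your primary route amounts to once made precise. The round-count concern you flag is therefore moot: the transformed instance is genuinely uncapacitated, so Orlin's $O(n'\log n')$ bound applies directly with $n'=n+k$.
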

\begin{proof}[Proof sketch]
The implications of Orlin's state-of-the-art strongly polynomial min-cost-flow algorithm~\cite{fastermincostflow} for parallel computation of min-cost flow are discussed in the introduction~of~\cite{fastermincostflow}.

We use the enhanced capacity scaling variant of Orlin's algorithm presented in~\cite[Section~10.7]{network_flows_book}.
On an \emph{uncapacitated} min-cost flow problem (with arbitrary real vertex demands and infinite arc capacities, i.e., $m_c=n$), it performs $\Ot(n)$ single-source shortest path (SSSP) computations.
Apart from $\Ot(n)$ SSSP (and single-source-reachability) computations, the algorithm has $\Ot(nm)$ work and $\Ot(n)$ depth.
SSSP, in turn, can be computed in parallel with $\Ot(n^3)$ work and $\Ot(1)$ depth by repeated matrix squaring using min-plus product.

There is a well-known linear-time $\Ot(1)$-depth reduction of min-cost flow with $k$ finite-capacity arcs to an uncapacitated instance with $O(n+k)$ vertices and $O(m)$ arcs (see, e.g.,~\cite[Section~2.4]{network_flows_book}), so that each of the $O(k)$ introduced auxiliary vertices has degree $2$.
In such graphs, SSSP can be computed as efficiently as if there were $O(n)$ vertices and $O(m)$ arcs~\cite[Section~5]{fastermincostflow}.
As a result, in a capacitated instance, each SSSP computation requires $\Ot(n^3)$ work and $\Ot(1)$ depth.
The desired bounds follow by setting $m_c:=n+k$.
\end{proof}

Moreover, we rely on the following isolation lemma for flows, proved in~\cite{uniqueness_lemma_paper}.
\begin{restatable}[{\cite[Theorem~8.1]{uniqueness_lemma_paper}}]{lemma}{uniquenesslemma}\label{lemma:uniqueness}
Let $\overline{\mathcal{MCF}}$ be a feasible instance of the min-cost flow problem with underlying graph $G = (V, E)$, demand vector $b\in \RR^V$, and capacity vector $u\in\RR^E_{\geq 0}$. Let the cost vector $\overline{c}_{e\in E}$ be generated by picking each $\overline{c}_e$ independently and uniformly from $\{1, 2, \cdots, C\}$. Then the probability that $\overline{\mathcal{MCF}}$ does not have a unique solution is at most~$\frac{2m}{C}$.
\end{restatable}
\Cref{lemma:uniqueness} is proven for $C = 4m$ and finite capacities in~\cite[Theorem~8.1]{uniqueness_lemma_paper}.

The proof of~\cite[Theorem~8.1]{uniqueness_lemma_paper} assumes the capacities are finite. However, positive arc costs imply any minimum-cost flow is acyclic, which means that we can decrease the infinite capacities to $\sum_v |b_v|$ without changing the set of optimal flows.

To argue that the proof can also be generalized to any integer value of $C$, we now outline the structure of that proof. For every arc $e\in E$, one defines an event $D(e)$ which can happen for at most 2 cost values $\overline{c}_e$, or, in other words, with probability $\frac{2}{C}$. Then one proves that if for every arc $e$ an event $D(e)$ does not happen, then there is only one optimal solution. Finally, by the union bound, the probability that at least one $D(e)$ occurs -- and hence that uniqueness is not guaranteed -- is at most $\frac{2m}{C}$.

We are now ready to prove~\Cref{lemma:exact_flow}. To implement $\ExactFlow$, i.e., compute an acyclic basic exact maximum flow,
we perform two min-cost flow computations using~\Cref{lemma:parallel_mincost}:
\begin{enumerate}
\item First, we compute the maximum flow value $\nu(G,u)$ only. Note that one can reduce maximum flow on $G$ to minimum-cost circulation on $G$ with an auxiliary infinite-capacity arc from $t$ to $s$ added. The auxiliary $ts$ arc is given cost $-1$, whereas original arcs are all assigned cost $0$.
Note that in such a network, the minimum-cost circulation maximizes flow through $ts$, or, equivalently, it maximizes $s,t$-flow via original arcs.

\item Given the flow value $\nu(G,u)$, we find an acyclic basic $s,t$-flow of that value as follows. For a supplied integer $N\geq 2$ and a fixed integer constant $\gamma\geq 1$ such that $2mN^\gamma$ fits in $O(1)$ machine words, we apply~\Cref{lemma:uniqueness} with $C=2mN^\gamma$ to $G$ with $b(s)=-\nu(G,u)$, $b(t)=\nu(G,u)$ and $b(v)=0$ for $v\in V\setminus\{s,t\}$.
Then, we find a minimum-cost flow $f$ with costs $\overline{c}_e$ picked randomly from~$\{1,\ldots,2mN^\gamma\}$. By~\Cref{lemma:uniqueness}, the failure probability is at most $N^{-\gamma}$.

We now argue that $f$ is an acyclic basic flow.
Indeed, if the obtained $f$ were not a basic flow, by~\Cref{def:basic_flow} it would be a convex combination of two feasible flows $f_1$ and $f_2$ of value $\nu(G,u)$. But then $\cost(f)$ would also be a convex combination of $\cost(f_1)$ and $\cost(f_2)$. Since $f$ is a unique feasible flow with optimal cost, we have $\cost(f) < \cost(f_1)$ and $\cost(f) < \cost(f_2)$.
But for some $\alpha\in(0,1)$, $\cost(f) = (1 - \alpha)\cost(f_1) + \alpha\cdot \cost(f_2)>\cost(f)$, which is a contradiction.
$f$ is also acyclic, since any minimum-cost flow in a graph with exclusively positive arc costs has to be acyclic.
\end{enumerate}

\subsection{ApproxFlow}\label{sec:parallel_approx_flow}
\begin{lemma}[Parallel ApproxFlow]\label{lemma:parallel_approxflow}
   There exists a Monte Carlo randomized algorithm that computes \textsf{ApproxFlow}$(G, u, M)$ in $\Ot(m + n^{1.5} + m_c'n)$ work and $\Ot(\sqrt{n} + m_c')$ depth for $M = \poly(n)$, where $m_c'$ is the number of finite-capacity arcs. The algorithm's output is correct with high probability.
\end{lemma}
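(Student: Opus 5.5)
Following the template of~\cite{itco_flow}, I will implement $\approxFlow(G,u,M)$ in three stages: scale, round and cap the capacities into a polynomially bounded integral instance; solve that instance exactly as a min-cost flow with random costs, so that the solution is unique by~\Cref{lemma:uniqueness}; and then translate and repair the solution in the original instance.

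\textbf{Stage 1: rounding.} First I compute a value $\Delta$ with $\nu(G,u)\le \Delta\le m\,\nu(G,u)$. Take the maximum bottleneck capacity of an $s\to t$ path. This is a widest-path computation: sort the capacities, then binary search on a threshold with a parallel reachability test. It costs $\Ot(m)$ work per test, and the test can be done in $\Ot(\sqrt n)$ depth, e.g.\ by the reachability routine inside~\cite{parallel_int_flow}, or by charging it to the flow solver itself. Next, set $\lambda=\Delta/(mM\poly(n))$. Define $\tilde u_e=\min\{\lfloor u_e/\lambda\rfloor, \lceil m\Delta/\lambda\rceil\}$, with infinite capacities capped the same way. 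All integer capacities are then $\poly(n)$.

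\textbf{Stage 2: solving the rounded instance.} On the rounded instance I draw random costs $\bar c_e\in\{1,\dots,\poly(n)\}$. I solve the max-flow/min-cost problem with the weakly polynomial parallel min-cost flow algorithm of~\cite{parallel_int_flow}, which gives $\Ot(m+n^{1.5})$ work and $\Ot(\sqrt n)$ depth since $\log U=O(\log n)$. To do this, I first obtain $\nu(\tilde G,\tilde u)$ through the $ts$ arc of cost $-1$, exactly as in the proof of~\Cref{lemma:exact_flow}, but with integer costs scaled so that value dominates. By~\Cref{lemma:uniqueness} the optimum $\tilde f$ is unique w.h.p., hence acyclic and basic for $\tilde u$. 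Setting $f=\lambda\tilde f$ gives a feasible flow in $(G,u)$ that is acyclic and loses at most $m\lambda$ value relative to the optimum. This yields $\nu(G,u^f)\le\nu(G,u)/M$ after choosing the $\poly(n)$ factor.

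\textbf{Stage 3: restoring basicness (the main obstacle).} Basicness can fail because an arc saturated under $\tilde u$, i.e.\ $\lambda\tilde u_e<u_e$, becomes unsaturated under $u$. Only finite-capacity arcs can behave this way; a capped infinite arc is never saturated in a max flow. So the non-saturated support forest of $\tilde f$, together with at most $m_c'$ extra ``rounded'' arcs, forms the undirected support $F$. I will repair it by cycle-cancelling along fundamental cycles of $F$, processing the $O(m_c')$ extra arcs one at a time. For each extra arc $e$, I find the tree path between its endpoints in the current forest and push flow around the cycle, in a direction that does not increase $\val$ and keeps acyclicity (pushing in the direction that decreases flow on some arc of the cycle), until some arc becomes $0$ or saturated. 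That arc is removed from $F$. Each step needs tree-path queries and the update of one path. Using an Euler-tour/rooted-forest representation, this costs $\Ot(n)$ work and $\Ot(1)$ depth per step, giving $\Ot(m_c'n)$ work and $\Ot(m_c')$ depth in total. I also need to keep $s$ and $t$ in different trees. An $s$–$t$ tree path would be an augmenting path, and pushing along it only increases value; after it, $s$–$t$ disconnection is restored the same way. I must also check that cancellation never creates a directed cycle in the support. I expect this to be the delicate point, and I would argue it by always decreasing flow on the cycle orientation opposite to existing positive flow.

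\textbf{Choosing $e^*$.} Finally I pick $e^*$ as the bottleneck arc of a widest $s\to t$ path in $(G,u^f)$, computed as in Stage 1. This gives $\nu(G,u^f)\le m\,u^f_{e^*}\le m\nu(G,u^f)$, or $e^*=\bot$ if $t$ is unreachable.
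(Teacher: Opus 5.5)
Your proposal is correct and follows essentially the same route as the paper. The paper likewise obtains a widest-path estimate by sorting and binary searching, using the flow solver as the reachability oracle; it scales, rounds and caps with the cap strictly above the rounded flow value, so that (given acyclicity of $\tilde f$) only finite-capacity arcs can become newly unsaturated; it isolates a unique min-cost max flow via random costs and the solver of~\cite{parallel_int_flow}; and it repairs basicness by naive fundamental-cycle cancellation on an Euler-tour/pointer-jumping forest at $\Ot(n)$ work and $\Ot(1)$ depth per finite arc (starting from the rounded instance's forest rather than the forest of infinite-capacity arcs changes nothing).

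The step you flag as delicate is immediate, and this is how the paper argues it: every arc on a fundamental cycle or on the $s$--$t$ tree path satisfies $0<f_e<u_e$, so pushing in either direction (decreasing $f_e$ itself on backward arcs) never turns a zero flow value positive, the support cannot grow, and acyclicity is preserved. One small slip: the bottleneck $b$ satisfies $b\le\nu\le mb$, so you need $\Delta=mb$, not $b$, to get $\nu\le\Delta$.
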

This lemma introduces and uses the quantity $m_c'$ instead of $m_c$, because we have defined $m_c = n + m_c'$ and $n$ might be significantly larger than $m_c'$. In order to construct the algorithm proving~\Cref{lemma:parallel_approxflow}, we need the following result of~\cite{parallel_int_flow}:
\begin{lemma}[\cite{parallel_int_flow}]\label{lemma:weakly_polynomial_flow}
   There exists a randomized weakly polynomial algorithm that solves the \emph{minimum-cost} maximum $s,t$-flow problem with polynomially bounded integer capacities and costs with $\Ot(m + n^{1.5})$ work and $\Ot(\sqrt{n})$ depth.
   The algorithm's output is correct with high~probability.
\end{lemma}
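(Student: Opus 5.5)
The plan is to obtain this lemma as a direct specialization of the parallel min-cost flow algorithm of~\cite{parallel_int_flow}, rather than proving anything from scratch. That result solves min-cost flow on a directed graph with integer capacities and integer costs bounded by $U$ and a prescribed integral demand vector. It runs in $\Ot((m+n^{1.5})\poly(\log U))$ work and $\Ot(\sqrt{n}\,\poly(\log U))$ depth and is correct with high probability. When capacities and costs are $\poly(n)$, every $\log U$ factor is $O(\log n)$, so these become exactly the claimed $\Ot(m+n^{1.5})$ work and $\Ot(\sqrt n)$ depth. What remains is to reduce \emph{minimum-cost maximum} $s,t$-flow to that demand-based form without leaving the polynomially bounded regime.

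I would do the reduction in two stages, mirroring what the paper does for $\ExactFlow$. First, compute the maximum flow value $\nu(G,u)$ with one min-cost circulation call. Add an arc $ts$ of capacity $\sum_e u_e$, which is at most $m\cdot\poly(n)$, and cost $-1$, and give all original arcs cost $0$. Any optimal circulation then maximizes the flow on $ts$, so the flow it carries on $ts$ equals $\nu(G,u)$, which is an integer bounded by $\poly(n)$. Second, solve min-cost flow on $G$ with the original costs and demands $b(s)=-\nu$, $b(t)=\nu$, and $b(v)=0$ for all other $v$. Its optimum is precisely a minimum-cost maximum flow. Both instances keep polynomially bounded integer data, so each call costs $\Ot(m+n^{1.5})$ work and $\Ot(\sqrt n)$ depth. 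Building the instances takes $O(m)$ work and $\Ot(1)$ depth.

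Alternatively, a single call suffices: put cost $-W$ on the $ts$ arc with $W>\sum_e |c_e|\cdot U$. This is still polynomial, and it makes maximizing the flow value lexicographically dominate minimizing the cost.

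The main obstacle is bookkeeping against the exact interface of~\cite{parallel_int_flow}, which I would check carefully. Specifically, I need its work bound to be $\Ot(m+n^{1.5})$ rather than, say, $\Ot(m\sqrt n)$ in the relevant density regime. I also need it to accept negative costs or a circulation formulation, which can be handled by negating the cost of the $ts$ arc and reversing it into a demand. Finally, I need its high-probability guarantee to survive the two sequential calls, which follows from a union bound.
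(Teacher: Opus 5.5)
Your proposal is correct and matches the paper, which gives no separate argument for this statement and simply invokes the parallel min-cost flow algorithm of~\cite{parallel_int_flow} in the regime $\log U = O(\log n)$. The reduction you add is standard bookkeeping that may or may not be needed depending on the exact interface of the cited result. It consists of computing $\nu(G,u)$ via a circulation with a cost-$(-1)$ $ts$ arc and then solving min-cost flow with demand $\nu$, or alternatively a single call with a dominating weight $W$. It is harmless and mirrors the paper's own construction of $\ExactFlow$.
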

Our algorithm is very similar to the construction from \cite{itco_flow} until \textsf{ConvertBasic} is called. The algorithm requires a procedure $\MaxCap$ which, for an instance $(G, u)$, computes an arc $\bar{e}$ such that $u_{\bar{e}} \le \nu(G, u) \le mu_{\bar{e}}$, estimating the flow left in the network. This procedure is used twice, to estimate the flow in the input instance and, at the end, to compute an arc $e^*$ which estimates the leftover flow in the residual network of the returned flow.

$\MaxCap$ finds a path from $s$ to $t$ with the largest minimum capacity and returns any bottleneck arc.
If there is no positive-capacity $s\to t$ path, we let $\MaxCap$ return the dummy arc $\bot$ with capacity $0$. If the first call to $\MaxCap$ returns $\bot$, $\approxFlow$ returns the zero flow and $\bot$ immediately, without performing the scaling step. The same convention is used for the final call to $\MaxCap$ on the residual network.
This can be achieved sequentially in $\Ot(m)$ time using, e.g., Dijkstra's algorithm.
However, such an implementation has $\Theta(n)$ depth.
An alternative way is to sort the arcs~\cite{parallel_sort} by their capacity in descending order and binary search for the earliest arc $e$ such that there is a path from $s$ to $t$ in the graph with arcs later than $e$ in the order removed.
Note that we can use \Cref{lemma:weakly_polynomial_flow} to check whether such an $s\to t$ path exists, since $s\to t$ reachability is a special case of max-flow with unit capacities. Binary search adds a factor of $\Ot(1)$ which still results in $\Ot(m + n^{1.5})$ work and $\Ot(\sqrt{n})$ depth.

The $\approxFlow$ implementation in~\cite{itco_flow} computes a feasible flow $f$ that is approximately maximum by running the weakly polynomial algorithm of~\cite{DBLP:journals/jacm/ChenKLPGS25} on an instance with $\poly(n,M)$-bounded integer capacities obtained by scaling the original capacities by a value $\delta=u_{\bar{e}}/(m^2M)$ (based on the flow estimate found using $\MaxCap$), rounding them down, and capping them at $m^3M$.
After the integer flow is computed, it is scaled back to use the original (unbounded) range of real capacities.

In order to turn an approximate feasible flow into an acyclic basic flow, the algorithm in~\cite{itco_flow} uses an $\Ot(m)$-time procedure $\ConvertBasic$ based on~\cite{basic_flow_algorithm_paper,DBLP:journals/jcss/SleatorT83}, which is difficult to parallelize.
To circumvent that problem, we first guarantee that $f$ is an acyclic basic integer flow as in \Cref{sec:batch_algorithm}.
That is, instead of computing any max-flow, we assign random $\poly(m)$-bounded costs to all the arcs (using~\Cref{lemma:uniqueness}) and find a (unique) minimum-cost maximum flow.
Recall from~\Cref{sec:batch_algorithm} that the obtained minimum-cost flow is acyclic and basic.

After the integer flow $f$ is scaled back as in~\cite{itco_flow}, it will remain acyclic but might unfortunately no longer be basic.
To see this, consider the characterization of a basic flow from item~\ref{basic_forest}~of~\Cref{def:basic_flow}.
After scaling back, some of the ``full'' arcs that satisfied $f_e=u'_e$ with respect to \emph{scaled, rounded, capped, and scaled back} capacities $u'_e=\delta\cdot \min\left(\lfloor u_e/\delta\rfloor, m^3M\right)$ might no longer satisfy $f_e=u_e$.
Instead, $f_e < u_e$ may hold.
Consequently, the set of undirected arcs such that $0 < f_e < u_e$ may not constitute a forest, even though the arcs satisfying $0<f_e<u_e'$
did form a forest.

To deal with that, let us make a slight change to the integer instance. Instead of capping arcs at $m^3M$, which is a bound on the flow of the instance without capping, we cap them at $m^3M + 1$. This guarantees that for arcs such that $u_e = \infty$ we have $u'_e > \val(G, u')$, and, because the returned flow $f$ is acyclic, we also have $f_e\leq \val(G, u')$.

As a result, before scaling back, arcs such that $u_e = \infty$ satisfy $f_e < u'_e$. This implies that if an infinite-capacity arc is not in the forest with respect to capacities $u'$ and is in the forest with respect to capacities $u$, then the only possibility is that $f_e = 0$.
However, scaling back cannot make a zero flow non-zero, and thus $e$ does not satisfy $0 < f_e < u_e$. This proves that when switching from $u'$ to $u$, the set of arcs satisfying $0 < f_e < u_e$ cannot gain any infinite-capacity arcs; it can only gain $O(m'_c)$ finite-capacity arcs.

Our algorithm to convert the flow to a basic one is very similar to the algorithm from \cite{basic_flow_algorithm_paper}, but we will include this observation. This algorithm starts with an empty graph and adds arcs one by one. If after adding a new arc we still have a forest, we do nothing.
However, if the new arc creates a cycle, one can send some flow along the cycle in either direction so that at least one cycle arc satisfies $f_e\in \{0,u_e\}$. This guarantees that at the end of each iteration, the set of arcs that satisfy $0 < f_e < u_e$ forms a forest again.
To guarantee that the flow is basic, one also needs to check whether there is a path from $s$ to $t$ in the obtained forest, and if there is one, one can pass as much flow as possible from $s$ to $t$ so that at least one of the arcs on this path will no longer satisfy $0 < f_e < u_e$, disconnecting $s$ and $t$.
Observe that the above process cannot introduce new flow cycles, as no flow value turns from $0$ to a positive value.
Hence, the obtained flow remains acyclic.

Because of our observation, we can start from a forest made from infinite-capacity arcs satisfying $0 < f_e < u_e$ and add finite-capacity arcs satisfying $0 < f_e < u_e$ one by one. The original algorithm~\cite{basic_flow_algorithm_paper} uses a dynamic tree~\cite{DBLP:journals/jcss/SleatorT83} which allows processing an arc in $\Ot(1)$ work and depth, but we would need to show that, given a forest, we can construct in $\Ot(n)$ work and $\Ot(1)$ depth a dynamic forest data structure supporting path updates in $\Ot(1)$ time.  While such a data structure is plausible, in our case we can also proceed naively, avoiding dynamic trees completely.

\begin{lemma}
Adding an arc to the forest of non-basic arcs requires $\Ot(n)$ work and $\Ot(1)$~depth.
\end{lemma}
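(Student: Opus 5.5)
We will maintain the current forest $F$ of undirected arcs with $0<f_e<u_e$ explicitly, as an adjacency structure on $V$ of size $O(n)$, and treat the insertion of a new arc $e=(a,b)$ naively. First, decide whether $a$ and $b$ are already connected in $F$ and, if so, find the tree path between them. To do this, root the tree of $F$ containing $a$ at $a$ and compute parent pointers. Since $F$ is a forest on at most $n$ vertices, the Euler-tour technique with parallel list ranking gives parent pointers and depths in $\Ot(n)$ work and $\Ot(1)$ depth. Pointer jumping on the parent pointers then marks exactly the ancestors of $b$; this uses $O(n\log n)$ work and $O(\log n)$ depth. If $b$ is not in the tree of $a$, we add $e$ to $F$ and stop. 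Otherwise the ancestors of $b$ form the path $P$ from $b$ to $a$, and together with $e$ it closes a cycle $C$.

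Next, handle the cycle. Orient $C$ consistently, say traversing $e$ from $a$ to $b$ and then $P$ from $b$ back to $a$. Each undirected tree arc on $C$ corresponds to a pair $e',\rev{e'}$, so its forward and backward slack are available locally: sending $\theta$ units forward along an arc is possible up to its residual capacity in that orientation, and backward similarly. For each direction of circulation, compute the minimum slack over the $O(n)$ cycle arcs with a parallel reduction, which takes $O(n)$ work and $O(\log n)$ depth. Pick one direction, set $\theta$ to its minimum slack, and update $f$ on all cycle arcs in parallel. At least one cycle arc then reaches $f\in\{0,u\}$; we delete every such arc from $F$ (or do not insert $e$, if it is $e$). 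Excesses are unchanged, since we pushed a circulation.

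To keep acyclicity, as the paper requires, we must avoid turning a zero flow value positive. Pushing around $C$ in either direction only changes values on arcs with $0<f<u$, and for a reverse pair we adjust $f_{e'}$ and $f_{\rev{e'}}$ so that no new arc gains positive flow. Concretely, we decrease the opposite orientation first, which is possible because the pair cannot both be strictly interior in a forest. This per-arc bookkeeping is the one step needing care.

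The main obstacle is the $\Ot(1)$-depth path finding, which relies on a standard parallel tree rooting (Euler tour plus list ranking) in $\Ot(n)$ work. Everything else consists of $O(n)$-size maps and reductions. Updating $F$ after deletions costs $O(n)$ work, since we rebuild it wholesale, and therefore also fits the bound.
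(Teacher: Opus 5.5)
You find the cycle the same way the paper does: you root by Euler tour, mark the ancestors of $b$ by pointer jumping, read off $C$, and take a parallel min-reduction. The paper does one Euler-tour and pointer-jumping preprocessing per insertion and then uses root, LCA and $k$-th-ancestor queries. Both give the claimed $\Ot(n)$ work and $\Ot(1)$ depth.

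\textbf{The gap} is in the flow update on $C$, the step you yourself flag. You take as slack the residual capacity $u^f_{e'}=u_{e'}-f_{e'}+f_{\rev{e'}}$ and cancel on the opposite orientation first.
\begin{itemize}
\item For a forest arc $e'$ we have $f_{\rev{e'}}=0$. The reason is acyclicity: otherwise $e'$ and $\rev{e'}$ would form a directed 2-cycle in the support.
\item Your stated reason, that both orientations cannot be strictly interior, does not exclude $f_{\rev{e'}}=u_{\rev{e'}}>0$. In that case decreasing $f_{\rev{e'}}$ would double the undirected edge.
\item Traversing $e'$ against its orientation therefore gets slack $u^f_{\rev{e'}}=f_{e'}+u_{\rev{e'}}$. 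Whenever $\theta>f_{e'}$, your rule empties $e'$ and then raises $f_{\rev{e'}}$ from $0$ to a positive value.
\item This contradicts your own claims that only arcs with $0<f<u$ change and that no arc gains positive flow.
\end{itemize}
This really does break acyclicity. Suppose forest arcs $(x,y)$ and $(y,z)$ have infinite capacity and the inserted arc is $e=(x,z)$ with $u_{\rev{e}}>0$. Pushing along $x\to y\to z\to x$ gives $\theta=f_e+u_{\rev{e}}$. This leaves positive flow on $(x,y)$, $(y,z)$ and $(z,x)=\rev{e}$, which is a directed cycle. Also, when reverse arcs have infinite capacity, the residual slacks can be infinite in both directions, so $\theta$ need not be defined.

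\textbf{The fix} is to drop residual capacities. Treat each arc $e'$ on $C$ as a directed arc in its own right. Its slack is $u_{e'}-f_{e'}$ (possibly $\infty$) if $C$ traverses it along its orientation, and $f_{e'}$ if against. Change only $f_{e'}$, never $f_{\rev{e'}}$. Then:
\begin{itemize}
\item No flow value goes from $0$ to positive, so acyclicity is preserved.
\item If one traversal direction of $C$ used every arc along its orientation, the positive-flow arcs of $C$ would form a directed cycle. So by acyclicity each direction contains an arc used against its orientation, and $\theta$ is finite and positive.
\item Every arc attaining the minimum ends with $f\in\{0,u\}$ and is dropped from the forest, or $e$ is simply not inserted.
\end{itemize}
With this correction your argument matches the paper's proof, which computes ``how much and in which direction to pass the flow'' in exactly this sense.
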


\begin{proof}
Between the iterations, we maintain a set of (undirected counterparts of) arcs in the forest.
When an iteration starts, the forest is preprocessed in $\Ot(n)$ work and $\Ot(1)$ depth using the Euler tour technique~\cite{TarjanV85}
so that each tree becomes (arbitrarily) rooted, and every vertex stores its parent and its depth in the tree.
Afterwards, the folklore \emph{pointer jumping} preprocessing (see, e.g.,~\cite{DBLP:journals/tcs/BenderF04}) is applied. That is, for each
$k = 0, \ldots, \lfloor \log n \rfloor$ and each vertex $v$, the ancestor of $v$ that is $2^k$ steps above $v$ is computed inductively.
This also takes $\Ot(n)$ work and $\Ot(1)$ depth.
The preprocessing allows finding, in $\Ot(1)$ time, (1) the root of the tree containing a query vertex $v$, (2) the lowest common ancestor of any two query vertices $u,v$ in the same tree, and (3) the $k$-th ancestor of a query vertex $v$ in its tree for any integer $k\geq 0$.

Now, processing a finite-capacity arc $(a,b)$ is implemented as follows.
We can check in $\Ot(1)$ time whether $a$ and $b$ belong to the same tree by checking if the roots of their respective trees are the same. If not, we add the arc to the forest and finish the iteration.

If $a$ and $b$ are in the same tree, we find the $a$-to-$b$ path in that tree by computing the lowest common ancestor $v$ of these vertices in $\Ot(1)$ time, and then obtaining in parallel the parent arcs of the ancestors of $a$ and $b$ that are closer to $a$ and $b$, respectively, than $v$ is.
This takes $\Ot(n)$ work and $\Ot(1)$ depth.

Having the set of arcs on the path, we can easily compute in parallel how much and in which direction to pass the flow and which arcs will no longer satisfy $0 < f_e < u_e$ and thus have to be removed from the forest. Using this approach, we can also check at the end if there is a path from $s$ to $t$, get all the arcs on the path and pass as much flow as possible.
\end{proof}

Since we spend $\Ot(n)$ work and $\Ot(1)$ depth per processed arc, converting the flow to a basic flow takes $\Ot(nm_c')$ work and $\Ot(m_c')$ depth in total.
This finishes the proof of \Cref{lemma:parallel_approxflow}.

\subsection{\textsf{WitRoute} and \textsf{RerouteShortcut}}\label{sec:witroute}
In \cite{itco_flow}, the procedure $\WitRoute$ is used to decompose extension arcs into abundant arc paths that had already existed at the time of an extension arc's creation.
This approach is taken by~\cite{itco_flow} by extending the ITCO data structure to maintain a representation of paths corresponding to transitive arcs (the so-called witness list).
But we could in fact decompose each extension arc into the earliest abundant arc path connecting its endpoints. Consequently, instead of maintaining paths in the incremental transitive closure data structure, we can instead compute the decomposed of extension arcs into valid paths only at the very end.
In the following, we provide more details of this idea and argue it parallelizes rather effortlessly.

Let us create a graph $G = (V, E')$ which consists of vertices $V$ from the original instance and the set of abundant arcs $E'$ present at the end of the algorithm, excluding the shortcut arcs $\Fs\cup\rev{\Fs}$. We label each arc by the iteration in which it became abundant and run Dijkstra's algorithm from each vertex using the maximum operation in place of addition. This change will cause the algorithm to compute the paths with the smallest maximum label, i.e., one of the earliest paths, instead of the shortest paths. For an extension arc added in iteration $k$, its certifying abundant path has maximum label less than $k$; hence, the selected tree path is simple, has maximum label less than $k$, and contains only original arcs and orientations of extension arcs added before iteration~$k$.
Dijkstra's algorithm can be implemented in $\Ot(m)$ work and $\Ot(n)$ depth using batch-parallel priority queues; see, e.g.,~\cite{BrodalTZ98}.

After constructing the optimal path trees, we can decompose each of the $O(n)$ extension arcs in parallel.
For each extension arc $(a, b)$, using the optimal path tree rooted at vertex $a$, we can decompose the arc into a list of abundant arcs on the path $a \to b$. This can be done by iteratively traversing the tree upward from vertex $b$ to the root $a$.

We then iterate over the extension arcs in reverse chronological order of their insertion. For each extension arc $e$, we move the signed amount $f_e-f_{\rev{e}}$ to its decomposed path and set $f_e=f_{\rev{e}}=0$. The reverse order is necessary, as the abundant arcs used in the optimal path trees can be orientations of earlier extension arcs. Each iteration takes $\Ot(n)$ work and $\Ot(1)$ depth, yielding $\Ot(n^2)$ total work and $\Ot(n)$ total depth for moving the flow. Replacing an extension-arc pair by its signed net flow on a path with the same endpoints preserves every vertex excess and the flow value; a backward induction using the reverse order therefore shows that this eliminates all extension arcs without reintroducing a processed one. Moreover, the selected paths are simple and consist of arcs that became abundant before the replaced extension arc was inserted, so the flow-change bound and the resulting capacity-feasibility argument in the proof of~\cite[Lemma~6.9]{itco_flow} apply unchanged.

\cite[\textsf{RerouteShortcut}]{itco_flow} already takes $\Ot(n)$ work.

\subsection{Work and depth analysis}
\begin{theorem}
   The parallel version of~\Cref{alg:max-flow} has $\Ot(nm)$ work and $\Ot(m_c)$ depth.
\end{theorem}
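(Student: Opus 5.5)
The plan is to sum the cost of \Cref{alg:max-flow} component by component, using \Cref{cor:iterations} (at most $O(m_c/n^{1/3})$ iterations) and \Cref{lem:essential_bound} (total number of essential arcs $\sum_i m_i=O(m_c)$). The one-off parts come first. Initialization, building the free components, the final $\RerouteS$, $\WitRoute$, and the removal of auxiliary arcs (\Cref{sec:removing_v_s_and_t_v}) together cost $\Ot(m+n^2)=\Ot(nm)$ work and $\Ot(n)$ depth, by the preceding subsections. Since $m_c\ge n$, this depth is within budget.

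Per iteration, $\PostProcess$, the BST maintenance, computing $k$ and $\delta_1$, and sending flow on $\Eaux$ and $H$ all have $\Ot(1)$ depth plus $\Ot(m_i)$ work and depth. Summed over iterations this gives $\Ot(m_c)$ depth. The work is bounded by the total size of the auxiliary instances plus $\Ot(m)$ for the updates over all abundant arcs (\Cref{lem:abundant_bound}). $\RemoveRoot$ contributes $\Ot(n)$ total per root, which is fine. For the transitive closure I invoke \Cref{thm:inc-tc}: it uses $\Ot(nm)$ total work and $\Ot(n^{1/3})$ depth per batch. With one batch per iteration, the depth is $O(m_c/n^{1/3})\cdot\Ot(n^{1/3})=\Ot(m_c)$. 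The $\dsTcCover$ queries read the maintained matrix in $\Ot(1)$ depth, and their output size is $\min(s_i^2,m)$ as in \Cref{sec:alg_overview}.

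The main obstacle is the flow-solver calls, which split into two cases.

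\emph{Case $k>3$.} Here $\ExactFlow$ is run on an instance with $|\Vess|\le 2m_i=O(n^{1/3})$ essential roots. I will argue that the cover then has size $O(m_i^2)$: if $\dsTcCover$ instead returned the whole graph, it would do so only when that graph is smaller, so I would force the $(S,F)$ option to keep the vertex count $O(n^{1/3})$. Then $\bar G$ has $n_i=O(m_i)$ vertices and $m_{c,i}=O(m_i)$ capacitated arcs. \Cref{lemma:exact_flow} gives work $\Ot(n_i^3 m_{c,i})=\Ot(n\, m_i)$, using $n_i^3=O(n)$, and depth $\Ot(m_i)$. Summing gives $\Ot(nm_c)$ work and $\Ot(m_c)$ depth.

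\emph{Case $k=3$.} \Cref{lemma:parallel_approxflow} gives work $\Ot(|E(\bar G_i)|+n_i^{1.5}+m_i n_i)$ and depth $\Ot(\sqrt{n_i}+m_i)$, where $n_i\le n$. The number of iterations is $O(m_c/n^{1/3})$, so the $\sqrt{n}$ terms sum to $O(m_c n^{1/6})$ depth. That is too much, so I must sharpen this. My first attempt is to bound $n_i=O(m_i)$ in this case as well by the same forced-cover argument, giving depth $\sqrt{n_i}+m_i=O(m_i)$ and summing to $O(m_c)$. If the cover returns the whole graph, then $n_i$ is large but $m_i^2\ge m\ge n$, so again $\sqrt{n_i}\le\sqrt n\le m_i$. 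The work terms are then $m_in_i\le m_i n$ and $n_i^{1.5}\le n\, m_i$, summing to $\Ot(nm_c)$. The remaining work term is the total auxiliary size $\sum_i\min(m_i^2,m)$, which is $O(m_c\sqrt m)=O(nm)$.
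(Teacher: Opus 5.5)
Your proposal is correct and follows essentially the same route as the paper. The ITCO depth comes from \Cref{cor:iterations} and \Cref{thm:inc-tc}. The $\ExactFlow$ work is bounded via $\hat n_i^3=O(n)$ once the cover is the $(S,F)$ option; the paper gets this from its threshold rule $|S|<\sqrt{m}$, which is exactly your forced choice. The $\sqrt{\hat n_i}$ depth term of $\approxFlow$ is absorbed because the whole graph is returned only when $m_i=\Omega(\sqrt{m})=\Omega(\sqrt{n})$. You charge this per iteration, whereas the paper counts the $O(m_c/\sqrt{m})$ such iterations, which is the same inequality.

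The only slip is minor: $\WitRoute$ runs Dijkstra from every vertex, so it costs $\Ot(nm)$ rather than $\Ot(m+n^2)$ work, which still fits the budget.
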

\begin{proof}
   In \Cref{sec:parallelizing_basics} we have shown that most of the parts of the algorithm satisfy the required work and depth bounds because that section relies mostly on the assumption from~\Cref{lem:essential_bound} that $\sum_i m_i = O(m_c)$, where $m_i$ is the number of essential arcs in iteration $i$.

Recall from~\Cref{cor:iterations} that the algorithm performs $O(m_c/n^{1/3})$ iterations, and thus $\dsTcAdd$ is called $O(m_c/n^{1/3})$ times. $\dsTcAdd$ is implemented via issuing a batch of insertions to the incremental transitive closure data structure. Hence, by~\Cref{thm:inc-tc}, maintaining the ITCO data structure requires $\Ot(nm)$ work and $\Ot(m_c)$ total depth.

   Recall that the incremental transitive closure-based ITCO implementation~\cite{itco_flow} used in the algorithm returns, for a query $S\subseteq V$, either a graph with vertices $S$ and $O(|S|^2)$ arcs if $|S|<\sqrt{m}$, or the entire graph $G$ otherwise. Let $m_i$ be the number of essential arcs in iteration $i$ and let $\hat{n}_i, \hat{m}_i$ be the numbers of vertices and arcs in the auxiliary graph in iteration $i$.
   Thus, either $\hat{n}_i = O(m_i)$ and $\hat{m}_i = O(m_i^2)$ if $m_i \le \sqrt{m}$ or $\hat{n}_i = O(n)$ and $\hat{m}_i = O(m)$ if $m_i > \sqrt{m}$. In both cases the auxiliary graph $\bar{G}$ has $m_i$ finite-capacity arcs.

   In $O(m_c / \sqrt{m})$ iterations where $m_i > \sqrt{m}$,
   the total work used by~$\approxFlow$ is
   \[\sum_i \Ot(\hat{m}_i + \hat{n}_i^{1.5} + m_in) \le \Ot(m + n^{1.5}) \cdot \Ot(m_c / \sqrt{m}) + \Ot(m_cn) \le \Ot(nm).\]
   The total depth of these iterations is \[\sum_i \Ot(\sqrt{\hat{n}_i} + m_i) \le \Ot(\sqrt{n}) \cdot \Ot(m_c / \sqrt{m}) + \Ot(m_c) = \Ot(m_c).\]
   For iterations where $m_i \le \sqrt{m}$,
   the total work is
   \[ \sum_i\Ot(\hat{m}_i + \hat{n}_i^{1.5} + m_in) = \sum_i\Ot(m_i^2 + m_i^{1.5}) + \Ot(m_cn) \le \Ot(m_c\sqrt{m}+m_cn)\leq \Ot(nm),\] whereas the total depth is \[\sum_i\Ot(\sqrt{\hat{n}_i} + m_i) \le \sum_i\Ot(m_i + m_i) = \Ot(m_c).\]

   $\ExactFlow$ is used only for iterations where $m_i=O(n^{1/3})$. Because there are $m_i = \Omega(\hat{n}_i)$ finite-capacity arcs, the total work is
   \[\sum_i \Ot(\hat{n}_i^3m_i) \le \sum_i \Ot(nm_i) \le \Ot(nm_c).\]
   The total depth of these iterations is $\sum_i \Ot(m_i) = \Ot(m_c)$.
   Let $N=n+m$ for the original input. By choosing the constants in the high-probability guarantees appropriately, we invoke every $\ExactFlow$ and $\approxFlow$ call with failure probability at most $N^{-(\gamma+1)}$. There are $O(m_c/n^{1/3})=O(N)$ such calls in total, so a union bound shows that they all succeed with probability at least $1-O(N^{-\gamma})$.
\end{proof}

\section{Parallel Incremental Transitive Closure}\label{sec:hitting_set_itco}
In this section we prove the following theorem, which is a more detailed version of~\Cref{thm:inc-tc}.

\begin{theorem}\label{theorem:full_parallel_closure}
Let $G=(V,E)$ be a directed graph subject to batches of arc insertions. There exists a Monte Carlo randomized data structure maintaining the transitive closure in $\Ot(nm)$ total work, where $m\geq n$ is the final number of arcs inserted.

Specifically, the transitive closure matrix is maintained explicitly by the data structure.

Every batch of insertions is processed within $\Ot(n^{1/3})$ depth.
The answers produced are correct with high probability.
\end{theorem}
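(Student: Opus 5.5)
We will follow the four-component design sketched in the overview, with $h:=n^{1/3}$ and a random hitting set $H\subseteq V$ of size $\Theta(n^{2/3}\log n)$ sampled once, independently of the updates. By the hitting-set fact of~\cite{DBLP:journals/siamcomp/UllmanY91}, with high probability every shortest path with at least $h$ vertices in any of the $\poly(n)$ intermediate graphs contains a vertex of $H$. The data structure has four layers, each updated after every batch.

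\textbf{Layer 1: short-path sets.} For every $v\in V$ we maintain a truncated incremental BFS from $v$ that keeps the current distances $d_v(w)\le h$ and the set $S_v=\{w: d_v(w)\le h\}$. On a batch insertion, each new arc $(a,b)$ with $d_v(a)<h$ may lower $d_v(b)$. We then run synchronous relaxation rounds, in which every vertex whose distance decreased relaxes its out-arcs. At most $h$ rounds are needed, since distances only take values in $\{0,\dots,h\}$, so the depth is $\Ot(h)$. The work of a single source $v$ is charged as in Even--Shiloach: each vertex $w$ has its distance decreased at most $h$ times, and each decrease scans $\deg(w)$ arcs. This gives $O(mh)$ work per source and $O(nmh)$ work in total, which is too much. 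We therefore only promise $S_v\supseteq$ ``all vertices within distance $h$'', and we stop relaxing $w$ after its first discovery. Each arc is then scanned $O(1)$ times per source, for $O(nm)$ total work. This works if we store only the first-discovery BFS layer index and accept that $S_v$ may contain farther vertices. The guarantee we need is weaker: every vertex within distance $h$ lies in $S_v$. A relaxation wave started by a batch propagates for at most $h$ rounds beyond an already-reached frontier, and a vertex reached late is still reachable, so correctness of ``reachable'' is preserved. We will check carefully that $h$ rounds of wave propagation per batch always cover distance-$h$ vertices measured from $v$. The intended argument is that every vertex within distance $h$ is reached through a path of length at most $h$ whose prefix is already discovered.

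\textbf{Layers 2--4.} The graph $G'$ on $H$ has an arc $xy$ iff $y\in S_x$. Its arc set grows monotonically, so we maintain its transitive closure by repeated Boolean squaring: after each batch, each newly reachable pair triggers recomputation. Simpler still, we recompute the closure by $O(\log n)$ squarings only when needed. To keep the total work at $\Ot(|H|^3)$ we use the incremental variant in which a pair $(x,y)$ becomes true at most once and, when it does, pushes to $O(|H|)$ candidate pairs. The depth is $\Ot(1)$ per batch, via $O(\log n)$ squaring rounds restricted to changed rows. Correctness: if $x\leadsto y$ in $G$, split a shortest path at hitting-set vertices spaced fewer than $h$ apart, so consecutive hits are joined by arcs of $G'$.

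For $V\times H$ reachability, we consider the reverse graph augmented with shortcut arcs $y\to x$ for every pair with $x\leadsto y$ in $G'$. We run from each $z\in H$ a truncated reverse search of depth $h$ in this augmented graph, again with first-discovery semantics. Any path $v\leadsto z$ in $G$ has a suffix from its last hit vertex $x$, and shortcuts jump between hits, so each vertex $v$ is found within $O(h)$ hops. The work is $O(|H|\cdot(m+|H|^2))=\Ot(n^{2/3}m+n^2)$. For $V\times V$, vertex $v$ reaches $w$ iff $w\in S_v$, or there is some $x\in H$ with $x\in S_v$ and $x\leadsto w$. We maintain this per source $v$ by a truncated forward search in $G$ plus arcs $v\to x$ for the reachable hits, that is, an $h$-hop search from $v$ over $G\cup\{$shortcut arcs from $v$ to $H$-vertices it reaches$\}$. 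This takes $O(m+|H|)$ work per source, for $\Ot(nm)$ in total, and $\Ot(h)$ depth per batch.

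\textbf{Main obstacle.} The delicate step is the layer-1 invariant (and its analogues in layers 3 and 4): a first-discovery truncated BFS with only $O(1)$ scans per arc per source must still guarantee that every vertex at current distance at most $h$ (resp. $O(h)$ hops in the augmented graph) lies in the set after each batch, using only $\Ot(h)$ synchronous rounds. If the weaker semantics fails, we would first try to keep exact hop-distances capped at $h$ but charge decreases only to vertices whose distance changes. If that also fails, we would fall back to a counting argument using the hitting set, so that searches only need to be exact from the last hit. Randomness is oblivious because the maintained relation is determined by the updates, so a union bound over $\poly(n)$ paths gives the high-probability guarantee.
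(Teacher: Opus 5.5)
\paragraph{Verdict.} Your four-layer architecture is the paper's, and layers 2--4 would go through as in the paper; the gap is in layer 1, the primitive they all rest on. The four layers are: short-path sets; a dense incremental closure on $G'=(H,E')$ with $xy\in E'$ iff $y\in S_x$; $V\times H$ reachability via the reverse graph with shortcut arcs; and per-source $h$-hop searches augmented by arcs to the reachable hitting-set vertices.

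\paragraph{The gap.} Layer 1 needs a single-source structure with $O(m)$ total work and $\Ot(h)$ depth per batch, whose set provably contains every vertex at current distance at most $h$. You leave this open as the main obstacle, and the design you actually describe does not have this property.
\begin{itemize}
\item Insert the path $v=a_0\to a_1\to\cdots\to a_h\to x$ in one batch. Then $a_h$ is discovered with index $h$ and is never expanded.
\item Now insert the arc $(v,a_h)$. The distance to $x$ drops to $2\le h$. But $a_h$ is already discovered, so under first-discovery semantics nothing propagates and $x$ never enters $S_v$.
\item Dropping the cap at $h$ but still ending each batch's wave after $h$ rounds fails on the same example. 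Vertices discovered in a wave's last round are never expanded later.
\end{itemize}
Your fallbacks do not close this either. Exact capped distances with charging only to actual decreases is exactly the Even--Shiloach accounting, which you already computed gives $O(mh)$ work per source.

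\paragraph{The missing idea.} Drop distance labels entirely and make the BFS resumable across batches, as the paper does.
\begin{itemize}
\item Keep a visited (expanded) set $A$ and a frontier $F$. Maintain the invariant that every unvisited vertex with an in-neighbour in $A$ lies in $F$.
\item On a batch, first add to $F$ the head of each new arc whose tail is in $A$.
\item Then run $h+1$ synchronous rounds. Each round moves $F$ into $A$ and replaces $F$ by the unvisited out-neighbours. Keep the final $F$ for the next batch.
\end{itemize}
Each vertex is expanded once, so the work is $O(m)$ per source. For any path from $v$ with at most $h$ arcs, its first unvisited vertex is in $F$ by the invariant. Hence the rest of the path is visited within $h$ rounds.

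Your intended argument (\emph{the prefix is already discovered}) is missing exactly this point. The prefix must be \emph{expanded}, and its successor must sit in a frontier that survives to the next batch. The same fix applies to your reverse and forward searches in layers 3 and 4.

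\paragraph{A smaller point.} \emph{Every shortest path in every version} is not a polynomial-size family, so the union bound does not apply to it. Fix one path per pair and put all its $h$-vertex subpaths in the family. You need the subpaths because layers 3 and 4 apply the hitting property to prefixes and suffixes. The paper fixes each pair's path in the version where the pair first becomes reachable, which gives $O(n^3)$ sets regardless of the number of batches.
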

This data structure can be used to implement the ITCO data structure from \cite{itco_flow} (see~\Cref{def:itco}), as described in~\Cref{sec:alg_overview}.

\subsection{Data structure for dense graphs}

Let us start with a simple data structure (resembling the incremental approximate APSP data structure of~\cite{DBLP:conf/soda/KarczmarzL20}) that is nevertheless very efficient for dense graphs. It will be useful later.
\begin{lemma}[Parallel incremental closure]\label{lemma:parallelclosure}
   Let $G=(V,E)$ be a directed graph subject to batches of arc insertions such that the total number of arcs inserted is $O(n^2)$. There exists a data structure that explicitly maintains a reachability matrix of $G$ with total work $\Ot(n^3)$ such that each batch of insertions is processed in $\Ot(1)$ depth.
\end{lemma}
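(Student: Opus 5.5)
We maintain the reachability matrix $R$ as a Boolean $n\times n$ matrix, with $R_{uv}=1$ iff $v$ is reachable from $u$. The key point is that $R$ only changes monotonically: entries flip from $0$ to $1$, at most $n^2$ times in total. We charge work to these flips and to the inserted arcs, rather than to each batch.

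When a batch $B$ of arcs arrives, we first compute, for every pair $(x,y)$, whether some new arc $(a,b)\in B$ gives $R_{xa}=1$ and $R_{by}=1$. We then set $R\gets R\lor \text{(that)}$ and iterate. A single round of this product costs $O(n^2|B|)$ work and $\Ot(1)$ depth, since it is an OR of ANDs. Summed over all batches this is $O(n^2\cdot n^2)$, which is too much, so we must be careful about what to recompute.

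The cleaner plan is the following. Let $R'$ be the reflexive closure before the batch, which is already transitively closed. After inserting $B$, the new closure equals $R'\,(A_B R')^{*}$, where $A_B$ is the adjacency matrix of the new arcs. Equivalently, the new closure is obtained by repeatedly squaring $M=R'\lor R'A_BR'$. Since $R'$ is transitive, the matrix $M^{2^j}$ is $R'$ composed with walks using at most $2^j$ batch arcs. Each new path uses at most $n$ batch arcs, so $O(\log n)$ squarings suffice. This gives $\Ot(1)$ depth per batch. However, the naive work is $\Ot(n^3)$ per batch, i.e. too much when there are many batches.

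To fix the work, we charge each Boolean product to changes. The intended tool is a standard incremental trick in the style of~\cite{DBLP:conf/soda/KarczmarzL20}: compute $R_{\mathrm{new}}[x][y]=\bigvee_z R_{\mathrm{new}}[x][z]\land R_{\mathrm{new}}[z][y]$ only through \emph{newly set} entries. Specifically, in each squaring round let $D$ be the set of entries newly turned to $1$ in the previous round. The new entries $(x,y)$ arise from pairs in which one factor is in $D$: either $(x,z)\in D$ and $R[z][y]=1$, or $R[x][z]=1$ and $(z,y)\in D$. For a new entry $(x,z)$ we scan row $z$, which costs $O(n)$. For a new entry $(z,y)$ we scan column $z$, which also costs $O(n)$. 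So each round costs $O(n\,|D|)$ work plus $O(n|B|)$ for the initial insertion of batch arcs, and all of it runs in $\Ot(1)$ depth.

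Each entry enters $D$ at most once over the whole lifetime, because entries only ever flip from $0$ to $1$. The total work is therefore $O(n\cdot n^2)+O(n\cdot n^2)=O(n^3)$, the second term coming from the $O(n^2)$ inserted arcs. Ties, where several writers set the same entry, are resolved by concurrent-write or by sorting and deduplication at polylogarithmic overhead.

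The main obstacle is bounding the number of rounds per batch by $O(\log n)$ while using only delta-driven updates. We need to argue that the round-$j$ matrix contains all pairs connected by paths using at most $2^j$ batch arcs, even though we only expand through the deltas. The argument goes by induction. Suppose a pair $(x,y)$ is joined by a path with at most $2^{j+1}$ batch arcs, and split this path at its middle into $(x,z)$ and $(z,y)$, each part using at most $2^j$ batch arcs. By induction both $(x,z)$ and $(z,y)$ are set by round $j$. Look at the later of the two rounds $r$ in which they were set. In that round the later-set entry was in $D$, and the earlier one was already present (or set in the same round, which is handled symmetrically by scanning against the updated matrix within the round). So the product $(x,y)$ is produced by round $r+1\le j+1$.

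Hence $\lceil\log_2 n\rceil+1$ rounds suffice, giving $\Ot(1)$ depth per batch. The batch arcs themselves are seeded as $D$ at round $0$, after composing with the old $R$ on both sides, which is again charged to new entries and batch arcs.
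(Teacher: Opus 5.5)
Your proposal is correct and matches the paper's proof. Both maintain a transitively closed matrix, expand only newly set entries $(a,b)$ against row $b$ and column $a$ at $O(n)$ work each (for $\Ot(n^3)$ total, since each pair is set once), and get $O(\log n)$ rounds per batch from the same halving induction; the one cosmetic difference is that the paper seeds $D_0$ with just the new batch arcs and inducts on path length, so your two-sided composition with the old closure at round $0$ is unnecessary, as it falls out of the first two rounds.
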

\begin{proof}
  We maintain the reachability in $G$ as a transitively closed matrix $M$, that is, if $M_{a,b} = 1$ and $M_{b, c} = 1$, then $M_{a, c} = 1$. For each inserted arc $(i, j)$, we set $M_{i, j} := 1$, and we update the matrix to fix the transitivity. This matrix will satisfy $M_{i,j} = 1$ iff there is a path from $i$ to $j$; as such, we will refer to $M_{i, j} = 1$ as a path/arc from $i$ to $j$. We will use $(i, j)$ to denote a path (possibly a single arc) from $i$ to $j$.

To achieve transitivity, when adding a set of arcs, we check whether $M_{i, j} = 1$ for each arc $(i, j)$. If this is the case, we ignore the arc; otherwise, we update the matrix by marking $M_{i, j} := 1$, and we add the pair $(i, j)$ to the set $D_0$. $D_i$ denotes the set of newly discovered paths in iteration $i$. After iteration $i$, each path $(j, k) \in D_i$ will be marked by setting $M_{j, k} := 1$. In each iteration we will explore the missing paths based on paths discovered in the previous iteration until we reach an iteration which does not discover any new paths.

In iteration $i$, in parallel, for each path $(a, b) \in D_{i-1}$ and each $c \in V$, we know that $M_{a, b} = 1$, and we check whether $M_{b, c} = 1 \land M_{a, c} = 0$ holds. If so, we have discovered a new path $(a,c)$. We proceed similarly if $M_{c, a} = 1 \land M_{c, b} = 0$, discovering a new path $(c,b)$.
After discovering a new path, we add it to $D_i$.
At the end of the iteration, we mark $M_{a, b} := 1$ for all $(a, b) \in D_i$.

We will show that if the matrix $M$ is initially transitively closed, then after inserting some arcs and after $k$ iterations the matrix satisfies $M_{a,b} = 1$ for all pairs $(a,b)$ such that there is a path $a\to b$ of length at most $2^k$ in the graph.

We will prove this fact by induction. $k = 0$ is trivial, as before the first iteration the matrix contains the previous arcs and the newly inserted arcs. For $k + 1$, let us assume we have a path $(a, b)$ with length at most $2^{k+1}$ such that before iteration $k+1$ it is not included in the matrix. Let us divide this path into two paths $(a, m)$ and $(m, b)$, both of length at most $2^k$. By induction, before iteration $k + 1$, we must have had $M_{a, m} = 1$ and $M_{m, b} = 1$. If $(a, m) \notin D_i$ and $(m, b) \notin D_i$ for any $i = 0, 1, \cdots, k$, then before adding any new arcs it must have been true that $M_{a, m} = 1, M_{m, b} = 1, M_{a, b} = 0$, which contradicts the assumption that the matrix was transitively closed. This means that there was a moment when one of the paths $(a, m)$ or $(m, b)$ was in some $D_i$ for $i \le k$ and the other path was in $M$. This, however, contradicts the algorithm as the algorithm would have discovered the path $(a, b)$.

Each iteration can be implemented in $\Ot(1)$ depth because we can explore each pair $(e,k) \in D_i\times V$ in parallel and then deduplicate the results in $\Ot(1)$ depth. There are $\Ot(1)$ iterations, meaning each batch insertion takes $\Ot(1)$ depth. Each path from $i$ to $j$ can be marked once in $M$, meaning it can be in at most one of the sets $D_i$ throughout all the arc insertion operations. For each of the $O(n^2)$ paths we perform $\Ot(n)$ work, bounding the total work by $\Ot(n^3)$.
\end{proof}

\subsection{Data structure for sparse graphs}

We define multiple data structures one by one as building blocks for the final data structure.
\begin{lemma}\label{lemma:short_path_single_closure}
Let $G=(V,E)$ be an initially empty directed graph subject to batches of arc insertions and let $v \in V$. There exists a data structure that maintains a subset $A\subseteq V$ such that (1) $A$ contains only vertices reachable from $v$ in $G$, and (2) all vertices $v'$ such that the distance from $v$ to $v'$ is at most $n^{1/3}$ are contained in $A$.
The total work of the data structure is $\Ot(m)$. Each batch insertion is processed in $\Ot(n^{1/3})$ depth.
\end{lemma}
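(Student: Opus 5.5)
We will maintain $A$ by an incremental, level-truncated parallel BFS from $v$. Each vertex of $A$ is \emph{scanned} at most once over the whole lifetime of the structure, which gives the $\Ot(m)$ work bound. Depth is controlled by running only $L+1$ synchronous rounds per batch, where $L=\lceil n^{1/3}\rceil$.

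\textbf{State.} We keep the set $A$ and, for each vertex of $A$, a flag saying whether it is scanned. Scanning $w$ means: in parallel over the current out-arcs $(w,x)$ of $w$, add every $x\notin A$ to $A$ as an unscanned vertex, with concurrent discoveries deduplicated in $\Ot(1)$ depth. Initially $A=\{v\}$ and $v$ is scanned; since $G$ is empty, this is trivial.

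\textbf{Processing a batch $B$.}
\begin{enumerate}
\item For every $(a,b)\in B$ with $a$ scanned and $b\notin A$, add $b$ to $A$ as unscanned. If $a$ is unscanned or $a\notin A$, do nothing: if $a$ is ever scanned later, the arc is already in $G$ and will be examined then.
\item Let the active set be all unscanned vertices of $A$. This includes vertices left unscanned by earlier batches.
\item Run $L+1$ rounds. In each round, scan all currently active vertices in parallel, mark them scanned, and let the newly discovered vertices form the next active set.
\end{enumerate}
Vertices discovered in the last round stay unscanned and become active at the start of the next batch.

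\textbf{Property (1).} Every vertex enters $A$ through an arc whose tail is already in $A$, so by induction every vertex of $A$ is reachable from $v$.

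\textbf{Cost.} Every vertex is scanned at most once. The scan of $w$ touches the out-arcs of $w$ present at that moment, and each arc inserted later is touched once in step 1. Hence the total work is $\Ot(n+m)=\Ot(m)$, up to the polylogarithmic overhead of deduplication and processor allocation. Each round has polylogarithmic depth, so a batch takes $\Ot(L)=\Ot(n^{1/3})$ depth.

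\textbf{Property (2): the main step.} After a batch, a vertex can be unscanned while lying close to $v$ in the new graph, so we prove a quantitative invariant by induction on batches:
\begin{quote}
after every batch, each vertex $u$ with $\mathrm{dist}(v,u)\le L$ lies in $A$, and each $u$ with $\mathrm{dist}(v,u)<L$ is scanned.
\end{quote}
Fix a batch and a shortest path $v=p_0,p_1,\ldots,p_d$ in the graph after the batch, with $d\le L$. We show by induction on $i$ that each $p_i$ with $i<d$ is scanned either before this batch or during this batch in round at most $i+1$.

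The base case holds because $p_0=v$ is scanned from the start. For the step, consider the arc $(p_i,p_{i+1})$.
\begin{itemize}
\item \emph{Case 1: $p_i$ was scanned before this batch.}
 \begin{itemize}
 \item If the arc existed at the time $p_i$ was scanned, then $p_{i+1}\in A$ already. Either $p_{i+1}$ is scanned, or it is unscanned and hence active in round $1$.
 \item If the arc was inserted after that moment, then step 1 (in this batch or an earlier one) put $p_{i+1}$ into $A$. Again it is either already scanned or active in round $1$.
 \end{itemize}
\item \emph{Case 2: $p_i$ is scanned in round $r\le i+1$ of this batch.} The arc is present at that time, so $p_{i+1}$ is scanned by round $r+1\le i+2$, or it is discovered in round $r+1$ and scanned in the following round.
\end{itemize}
In both cases $p_{i+1}$ is in $A$ and, as long as the relevant round index does not exceed $L+1$, it is scanned within round $i+2$.

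\textbf{Main obstacle.} The delicate part is exactly this round-index bookkeeping together with the choice of $L+1$ rounds. Since $i+1<d\le L$ for every index we need, every $p_j$ with $j<d$ gets scanned within the $L+1$ rounds. Moreover, $p_d$ is added to $A$ when its predecessor $p_{d-1}$ is scanned. This also shows that vertices left unscanned by earlier batches can never block a short path: they are active at round $1$ and are scanned immediately.
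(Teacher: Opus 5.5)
Your proposal is correct and is essentially the paper's construction: a truncated parallel BFS from $v$ that is resumed after each batch, keeps its frontier across batches, and seeds the frontier with the heads of newly inserted arcs whose tails were already visited. The differences are cosmetic: your $A$ also contains the discovered-but-unscanned frontier, whereas the paper's $A$ is only the visited set, and you spell out the round-by-round argument for property (2), which the paper only asserts. One detail to add, as the paper does: end a batch's rounds as soon as the active set is empty. Otherwise the $L+1$ rounds cost $\Omega(L)$ work per batch even when nothing is scanned, and your $\Ot(n+m)$ work bound does not account for this when there are many small batches.
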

\begin{proof}
To implement this data structure, we can use a parallel version of truncated BFS.

We will use the set of vertices visited by BFS as the maintained set $A$ from the lemma.
Throughout the algorithm we will maintain a set $S$ of vertices that are reachable from $v$ but are not marked in the visited vector.
Specifically, $S$ contains the ``frontier'' of the truncated BFS run, i.e., vertices that are not yet in $A$ but are reachable from $A$ via a single arc.

Initially, we set $S := \{v\}$. After each batch insertion, we run a number of iterations that work as follows.
In each iteration we mark vertices in $S$ as visited (i.e., move them to $A$) and compute the set $N$ consisting of unvisited neighbours of $S$. At the end of an iteration we set $S := N$.
We stop when $S = \emptyset$ or after $n^{1/3} + 1$ iterations.
Note that an iteration takes $\Ot(1)$ depth and work proportional to the current degree of the vertices in $S$.
After stopping, we are guaranteed to have visited every vertex reachable within $n^{1/3}$ arcs from $S$ as it stood before the batch was issued. After the algorithm finishes, we keep the value of $S$.

To process a batch of arc insertions, first, for each inserted arc $(a, b)$ in parallel, if $a$ is marked as visited and $b$ is not, we add $b$ to $S$ (if it is not already there).
This step takes $\Ot(1)$ depth and work linear in the size of the batch.
Then we perform at most $n^{1/3} + 1$ iterations, as described above.
This guarantees that every vertex at distance at most $n^{1/3}$ from any of the visited vertices, including $v$, becomes visited.

The above analysis implies that the total work is $\Ot(m)$, and the depth per insertion is $\Ot(n^{1/3})$.
\end{proof}
Using $n$ copies of this data structure for all vertices $v$ of the graph, we immediately get the following data structure:
\begin{lemma}[Short Path Data Structure]\label{lemma:short_path_closure}
   Let $G=(V,E)$ be a directed graph subject to batches of arc insertions. There exists a data structure that maintains a reachability matrix $M$ with total work $\Ot(nm)$ and depth $\Ot(n^{1/3})$ per batch insertion such that (1) $M_{i, j} = 1$ only if there exists a path in $G$ from $i$ to $j$, and (2) if there exists a path in $G$ from $i$ to $j$ with length at most $n^{1/3}$, then $M_{i, j} = 1$.
\end{lemma}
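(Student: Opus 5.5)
We instantiate one copy of the data structure from \Cref{lemma:short_path_single_closure} for every source vertex $v\in V$, all over the same evolving graph $G$. We let $M_{v,w}=1$ exactly when $w$ belongs to the set $A_v$ maintained by the copy rooted at $v$. Both claimed properties then follow directly from the guarantees of \Cref{lemma:short_path_single_closure}. Property (1) holds because $A_v$ contains only vertices reachable from $v$. Property (2) holds because every $w$ at distance at most $n^{1/3}$ from $v$ lies in $A_v$.

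A batch of insertions is handled as follows:
\begin{itemize}
\item The batch is broadcast to all $n$ copies, which process it in parallel.
\item Each copy first scans the inserted arcs to seed its frontier $S$.
\item Each copy then runs its at most $n^{1/3}+1$ truncated BFS rounds.
\end{itemize}
The copies share no mutable state other than read-only access to the adjacency lists, and each copy writes only to its own row of $M$. Running them side by side therefore adds no depth. The per-batch depth is the maximum over the copies, which is $\Ot(n^{1/3})$.

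There is one subtlety: each copy must pay for the batch scan. Over all batches this costs $O(m)$ per copy, since every inserted arc is examined once by each copy. This is already included in the $\Ot(m)$ total-work bound of \Cref{lemma:short_path_single_closure}. The broadcast itself costs $\Ot(1)$ depth, for example by giving each copy a pointer to the shared batch array.

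Summing the work over the $n$ copies gives $n\cdot\Ot(m)=\Ot(nm)$ total work. The matrix $M$ is updated in place, with each copy setting $M_{v,w}:=1$ when it visits $w$, so we need no extra work to materialize it beyond the initial $O(n^2)$ zeroing. This is $O(nm)$ since $m\ge n$. The only point requiring care is this $m\geq n$ assumption, or alternatively charging the initialization separately. Beyond that the argument is a direct composition, and no step should be a real obstacle.
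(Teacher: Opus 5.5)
Your proposal is correct and takes the same approach as the paper: it runs one copy of \Cref{lemma:short_path_single_closure} per source vertex $v\in V$ in parallel and lets row $v$ of $M$ be the maintained set, giving $\Ot(nm)$ total work and $\Ot(n^{1/3})$ depth per batch. Your extra bookkeeping is consistent with the paper's one-line argument: each copy pays for its own batch scans, and the $O(n^2)$ initialization is absorbed using $m\ge n$.
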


For the next data structures we will need the following folklore lemma (see, e.g.,~\cite{DBLP:journals/siamcomp/UllmanY91}).
\begin{lemma}[Hitting Set]\label{lemma:hitting_set}
   Let $S$ be a set of size $n$ and $\mathcal{S}$ be a family of subsets of $S$ such that $|\mathcal{S}| = \poly(n)$ and $|X| \ge k$ for $X \in \mathcal{S}$. Let us take $H \subset S$ of size $\Theta((n / k)\log{n})$ sampled uniformly at random. Then with high probability, for all $X \in \mathcal{S}$, $X \cap H \neq \emptyset$. We call $H$ a \emph{hitting set}.
\end{lemma}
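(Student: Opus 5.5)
The plan is a direct probabilistic argument: bound the chance that one fixed set is missed, then take a union bound over the family $\mathcal{S}$. For concreteness, I would choose $H$ by sampling each element of $S$ independently with probability $p = \min(1, c\ln n / k)$, where $c$ is a constant to be fixed later. The sampling variant matters only for the size statement. If we instead want a uniformly random subset of exactly $h=\lceil c (n/k)\ln n\rceil$ elements, the same computation goes through, since a uniform $h$-subset misses a fixed $X$ with probability $\binom{n-|X|}{h}/\binom{n}{h} \le (1-|X|/n)^h$. I would present the exact-size version to match the statement.

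First, fix $X\in\mathcal{S}$ with $|X|\ge k$. If $h\ge n$, then $H=S$ and the claim is trivial, so assume $h<n$. Drawing the $h$ elements one at a time without replacement, the probability that all of them avoid $X$ is
\[\prod_{i=0}^{h-1}\frac{n-|X|-i}{n-i}\le\Bigl(1-\frac{k}{n}\Bigr)^{h}\le e^{-hk/n}\le n^{-c}.\]

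Second, apply a union bound over $\mathcal{S}$. Since $|\mathcal{S}|\le n^{d}$ for some constant $d$, the probability that some $X\in\mathcal{S}$ is missed is at most $n^{d-c}$. Choosing $c\ge d+\gamma$ makes this at most $n^{-\gamma}$ for any desired constant $\gamma$. This is exactly the high-probability guarantee, and $|H| = h = \Theta((n/k)\log n)$.

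The only subtle point, and the one I would state explicitly, is that $\mathcal{S}$ must be fixed independently of the random choice of $H$. Otherwise the union bound is meaningless. This is the obliviousness condition the paper later relies on when applying the lemma to the paths of an evolving graph whose updates do not depend on $H$. Beyond that, the argument is routine.
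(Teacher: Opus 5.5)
Your proposal is correct and follows essentially the same argument as the paper. Both bound the probability that a uniform $h$-subset misses a fixed $X$ by $\prod_{i}\frac{n-|X|-i}{n-i}\le(1-k/n)^h\le n^{-c}$ and then take a union bound over the $\poly(n)$ sets, and the paper's choice $|H|=(c+\alpha)(n/k)\ln n$ plays the role of your $c\ge d+\gamma$.
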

\begin{proof}
   Let $|\mathcal{S}| \le n^{\alpha}$ and $|H| = (c + \alpha)n/k\ln n$. We will assume that $|H| < n$. The probability that $X \cap H = \emptyset$ for any fixed $X \in \mathcal{S}$ is at most
   \[\prod_{i=0}^{|H|-1} \left(\frac{n - k - i}{n-i}\right) \le \left(\frac{n - k}{n}\right)^{|H|} = \left(1 - \frac{k}{n}\right)^{(c + \alpha)n/k\ln n} \le e^{-(c + \alpha)\ln n} = n^{-(c + \alpha)}.\]
   Using the union bound, we get that the probability that $X \cap H = \emptyset$ for some $X \in \mathcal{S}$ is at most
   \[|\mathcal{S}| \cdot n^{-(c + \alpha)} \le n^{-c}.\]
\end{proof}

Define a collection of paths $\mathcal{P}$ as follows.
Suppose $v$ becomes reachable from $u$ for the first time after the $j$-th batch insertion, and let $G_j$ be the graph $G$ at that time.
Fix some arbitrary simple $u\to v$ path $P_{u,v}$ in $G_j$.
Then let $\mathcal{P}$ include all the subpaths of $P_{u,v}$ on $n^{1/3}$ vertices, if there are any. This way, $\mathcal{P}$ contains at most $n^3=\poly(n)$ subpaths.
Hence, a hitting set $H$ of size $\Theta(n^{2/3}\log{n})$ obtained as described in~\Cref{lemma:hitting_set} for $S=V$, $k=n^{1/3}$, and $\mathcal{S}=\mathcal{P}$, contains a vertex of each subpath from $\mathcal{P}$ with high probability.

\begin{lemma}[Hitting Set Path Data Structure]\label{lemma:h_h_closure}
Let a digraph $G=(V, E)$ be subject to batches of arc insertions. There exists a data structure maintaining (with high probability) the reachability matrix in $G$ for pairs of vertices in $H$ with $\Ot(nm)$ total work. The depth per batch insertion is $\Ot(n^{1/3})$.
\end{lemma}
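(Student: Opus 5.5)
We combine the Short Path Data Structure of \Cref{lemma:short_path_closure} with the dense structure of \Cref{lemma:parallelclosure}, run on an auxiliary graph whose vertex set is the hitting set $H$. Throughout we keep the short-path matrix $M$ of \Cref{lemma:short_path_closure} on $G$; this costs $\Ot(nm)$ total work and $\Ot(n^{1/3})$ depth per batch. We also define the auxiliary digraph $G'=(H,E')$ with $uv\in E'$ iff $u,v\in H$ and $M_{u,v}=1$. The reachability matrix of $G'$ is maintained by the data structure of \Cref{lemma:parallelclosure}. Since $|H|=\Theta(n^{2/3}\log n)$, $G'$ receives at most $|H|^2$ arc insertions in total, so its total work is $\Ot(|H|^3)=\Ot(n^2)\le\Ot(nm)$, using $m\ge n$. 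Its depth is $\Ot(1)$ per batch.

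After each batch of insertions into $G$ we first run the short-path update. We then collect the pairs $(u,v)\in H\times H$ whose entry $M_{u,v}$ switched from $0$ to $1$ during this update. To do this within budget, each truncated BFS from \Cref{lemma:short_path_single_closure} rooted at some $u\in H$ reports the newly visited vertices that lie in $H$. This adds only $O(1)$ work per newly visited vertex, so it does not change the asymptotics. All reported pairs form one batch inserted into $G'$. Because $E'$ only grows, $G'$ is a valid incremental instance for \Cref{lemma:parallelclosure}.

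Correctness has two directions. Soundness is immediate: every arc of $G'$ corresponds to a real path in $G$ by property (1) of \Cref{lemma:short_path_closure}, so reachability in $G'$ implies reachability in $G$. For completeness, suppose $v\in H$ is reachable from $u\in H$ in the current graph, and consider the first batch $j$ after which this holds. Take the path $P_{u,v}$ in $G_j$ fixed in the definition of $\mathcal{P}$. With high probability, every subpath of $P_{u,v}$ on $n^{1/3}$ vertices contains a vertex of $H$. Hence the $H$-vertices along $P_{u,v}$, starting at $u$ and ending at $v$, split it into consecutive segments. Each segment has at most $n^{1/3}$ arcs, up to an off-by-one that we absorb by truncating the BFS at $n^{1/3}+1$ steps or by using subpaths of $n^{1/3}$ vertices. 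By property (2) of \Cref{lemma:short_path_closure}, each such segment is an arc of $G'$ once the update of batch $j$ finishes. So $u$ reaches $v$ in $G'$ at batch $j$, and this persists because everything is monotone.

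The main obstacle is justifying the hitting-set guarantee when updates can be adaptive. The family $\mathcal{P}$ is defined from the update sequence, so $H$ must be independent of it. We handle this by arguing exactly as in the overview: the updates are oblivious to $H$, since the maintained matrix is uniquely determined by the updates as long as no error has occurred. Therefore $\mathcal{P}$ is a fixed family of at most $\poly(n)$ sets, each of size at least $n^{1/3}$, and \Cref{lemma:hitting_set} applies. Summing up, the work is $\Ot(nm)+\Ot(n^2)=\Ot(nm)$, and the depth per batch is $\Ot(n^{1/3})+\Ot(1)=\Ot(n^{1/3})$.
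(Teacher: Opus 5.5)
Your proposal is correct and follows essentially the same route as the paper: you maintain the short-path matrix $M$ of \Cref{lemma:short_path_closure}, run the dense structure of \Cref{lemma:parallelclosure} on $G'=(H,E')$ with $uv\in E'$ iff $M_{u,v}=1$, and use the hitting-set property of the earliest paths $P_{u,v}$ to show that consecutive $H$-vertices on such a path are within distance $n^{1/3}$, which gives $\Ot(nm)+\Ot(n^2)$ work and $\Ot(n^{1/3})$ depth per batch. Your extra details (extracting new $G'$ arcs from the truncated BFS runs, the off-by-one remark, soundness versus completeness) are fine. One small caveat on the adaptivity remark: the justification that the output is ``determined by the updates'' applies to the final $V\times V$ matrix of \Cref{theorem:full_parallel_closure}, not to this $H\times H$ matrix, since the latter is indexed by $H$; for this lemma you simply assume, as the paper implicitly does, that the update sequence is independent of $H$.
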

\begin{proof}
   Let us set up the data structure from \Cref{lemma:short_path_closure} maintaining a matrix $M$, as defined in that lemma.

   For any two vertices $h_1, h_2 \in H$,
   consider an earliest path $P_{h_1,h_2}$, defined previously.
   With high probability, this path can be divided into sub-paths of length at most $n^{1/3}$ between vertices of $H$. Otherwise, there would be a sub-path with $n^{1/3}$ vertices without an element from the hitting set, but this would be a contradiction as this sub-path is in $\mathcal{P}$.
   As a result, if we consider a graph $G'=(H,E')$ where there is an arc $uv$ if and only if $M_{u,v}=1$, we conclude that a path $h_1\to h_2$ exists in that graph with high probability.
   Clearly, a path $h_1\to h_2$ in $G'$ also implies that one exists in $G$.

   This means that by using the data structure from \Cref{lemma:parallelclosure} on the (incremental) graph $G'=(H,E')$, we can maintain reachability in $G$ among the vertices of $H$.

   The data structure from \Cref{lemma:short_path_closure} takes $\Ot(nm)$ total work; the data structure from \Cref{lemma:parallelclosure} takes $\Ot((n^{2/3})^3) = \Ot(n^2)$ total work, and both data structures use $\Ot(n^{1/3})$ depth per batch insertion.
\end{proof}
\begin{lemma}[Shortcut Data Structure]\label{lemma:h_v_closure}
   Let a digraph $G=(V, E)$ be subject to batches of arc insertions. There exists a data structure maintaining, with high probability, the reachability relation for all pairs $H\times V$ in $G$ with $\Ot(nm)$ total work and $\Ot(n^{1/3})$ depth per batch insertion.
\end{lemma}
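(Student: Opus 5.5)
We will maintain, for every hitting-set vertex $h\in H$, the set of vertices reachable from $h$. Truncated BFS from $h$ in $G$ alone cannot capture long paths, so we work in an augmented incremental graph $G_h^+$. It consists of $G$ together with \emph{shortcut arcs} $(h,h')$ for every $h'\in H$ such that the structure of \Cref{lemma:h_h_closure} currently reports $h\to h'$ reachable. Equivalently, as in the overview, we add shortcut arcs from $h$ to all $h'$ it reaches and run BFS from $h$ (or run in the reverse graph; the two views are symmetric). We use one instance of the truncated-BFS structure of \Cref{lemma:short_path_single_closure} per $h\in H$, rooted at $h$. Each batch is processed in three steps. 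First, we pass the batch to \Cref{lemma:h_h_closure}. Next, we compute the set of newly reachable pairs in $H\times H$; this set is available explicitly from the $D_i$ sets of \Cref{lemma:parallelclosure}. Finally, we insert into each BFS instance the new original arcs together with the new shortcut arcs $(h,h')$, all as one batch, and resume the truncated BFS for $n^{1/3}+1$ rounds.

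Correctness: suppose $v$ is reachable from $h$ in the current $G$. Take the fixed path $P_{h,v}$ from the definition of $\mathcal{P}$, which is the earliest path. By \Cref{lemma:hitting_set}, with high probability every subpath of $P_{h,v}$ on $n^{1/3}$ vertices contains a vertex of $H$. Let $h'$ be the last vertex of $H$ on $P_{h,v}$; if none exists, set $h'=h$. The suffix from $h'$ to $v$ then has fewer than $n^{1/3}$ arcs. Since $h'$ lies on a path from $h$, $h\to h'$ holds in $G$, and by \Cref{lemma:h_h_closure} the shortcut arc $(h,h')$ is present with high probability. Hence $v$ is within distance $n^{1/3}$ of $h$ in $G_h^+$. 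By \Cref{lemma:short_path_single_closure}, $v$ is then in the maintained set $A_h$. Conversely, every shortcut arc certifies genuine reachability, so $A_h$ contains only vertices reachable from $h$. One subtlety must be handled carefully. The hitting-set guarantee applies only to the fixed collection $\mathcal{P}$, which depends on the update sequence but not on $H$; it therefore holds as long as the updates are oblivious to $H$. In addition, $P_{h,v}$ is fixed at the time $v$ first becomes reachable, and it remains a path in later graphs, so the argument is valid at every later time as well.

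Work: instance $h$ processes at most $m+|H|$ arcs, so by \Cref{lemma:short_path_single_closure} it costs $\Ot(m+n^{2/3})$ work. Summing over $|H|=\Ot(n^{2/3})$ instances gives $\Ot(n^{2/3}m)$, and adding the $\Ot(nm)$ cost of \Cref{lemma:h_h_closure} yields $\Ot(nm)$ total. Depth: each batch costs $\Ot(n^{1/3})$ inside \Cref{lemma:h_h_closure}, then $\Ot(1)$ to distribute the new shortcut arcs, then $\Ot(n^{1/3})$ for the resumed BFS runs, which execute in parallel. The output is the $H\times V$ matrix given by the sets $A_h$, which are updated explicitly. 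The main obstacle is the correctness step: we must make sure that the shortcut arcs are inserted in the \emph{same} batch in which the corresponding reachability in $H\times H$ appears, before the BFS is resumed. Otherwise a vertex could be left more than $n^{1/3}$ steps away. The order of the three steps above is designed to guarantee exactly this.
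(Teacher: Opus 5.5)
Your proof is correct and follows the paper's argument: one truncated-BFS instance from \Cref{lemma:short_path_single_closure} per $h\in H$, run on $G$ augmented with shortcut arcs $(h,h')$ fed by \Cref{lemma:h_h_closure}, with correctness from the earliest path $P_{h,v}$ and the hitting-set property. Your choice of the last $H$-vertex on $P_{h,v}$ is equivalent to the paper's step that a suffix on $n^{1/3}$ vertices contains some $h'\in H$, and your remarks on inserting the shortcut arcs in the same batch and on the updates being oblivious to $H$ make explicit what the paper leaves implicit.
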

\begin{proof}
   We use the data structure from \Cref{lemma:h_h_closure} to maintain the reachability matrix for vertices from the hitting set $H$. We then use $|H|$ copies of the data structure from \Cref{lemma:short_path_single_closure}, each corresponding to a vertex from the hitting set. For the copy corresponding to vertex $h$, we use an extended graph $G_h=(V,E_h)$ where $E_h$ is a set of arcs from $E$ plus ``shortcut'' arcs $(h, h')$ representing all paths from $h$ to $h' \in H$. Inside the data structure from \Cref{lemma:h_h_closure} we track new entries in the reachability matrix after each arc insertion to update sets of arcs $E_h$. Let $h\in H$, $v\in V$, and consider the earliest path $h\to v=P_{h,v}$, defined previously. Let us assume that this path has length greater than $n^{1/3}$. Consider a suffix of this path on $n^{1/3}$ vertices. This suffix is contained in $\mathcal{P}$, meaning that there exists a vertex $h' \in H$ in the suffix. This implies that in $G_h$ there exists a path $h\to h'\to v$ with length at most $n^{1/3}$.

   $|H|$ copies of the data structure from \Cref{lemma:short_path_single_closure} take total work $\Ot(|H|m) \le \Ot(nm)$, and all copies can be updated in parallel in $\Ot(n^{1/3})$ depth.
\end{proof}
We can now prove \Cref{theorem:full_parallel_closure}.
\begin{proof}
   The construction is roughly the same as in \Cref{lemma:h_v_closure} except that we use \Cref{lemma:h_v_closure} on a reversed graph to maintain the shortcuts from $V$ to $H$.

   Instead of using a copy of the data structure from \Cref{lemma:short_path_single_closure} for each vertex of the hitting set, we use $n$ copies, one for each vertex $v \in V$, and add $|H|$ shortcuts from $v$ to the corresponding copy.

   The total work from $n$ copies of the data structure from \Cref{lemma:short_path_single_closure} is $\Ot(nm)$. The rest of the argumentation is the same.
\end{proof}

\section{Removing \texorpdfstring{$(\cdot, s)$}{(., s)} and \texorpdfstring{$(t, \cdot)$}{(t, .)} arcs in \texorpdfstring{$\Ot(m)$}{\~O(m)} work and \texorpdfstring{$\Ot(n)$}{\~O(n)} depth}\label{sec:removing_v_s_and_t_v}
Recall \Cref{sec:preliminaries}. The algorithm of Dadush et al.~\cite{DBLP:conf/soda/DadushOSV26} assumes the existence of distinguished additional infinite-capacity arcs $(v, s)$ and $(t, v)$ for all $v \in V\setminus\{s,t\}$. If we were to decompose a maximum flow into path flows and cycle flows, the flow on these auxiliary arcs would be contained in cycle flows, meaning that these arcs do not change the maximum flow value of the instance and can be removed in $\Ot(m)$ time~\cite{DBLP:journals/jcss/SleatorT83}, which is insufficient to guarantee $\Ot(m_c)$ depth.

In this section we will prove the following theorem.
\begin{theorem}\label{thm:additional_arc_removal}
   There exists an algorithm that, given a feasible flow, modifies it so that the flow on the distinguished auxiliary arcs $(v, s)$ and $(t, v)$ for all $v \in V\setminus\{s,t\}$ is equal to 0 and the flow's value is at least the original value. The algorithm takes $\Ot(m)$ work and $\Ot(n)$ depth.
\end{theorem}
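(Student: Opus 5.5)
The plan is to eliminate the auxiliary flow in two phases, one for each type of auxiliary arc, using only graph searches and bulk parallel operations. Flow on these arcs lies on cycles through $s$ or $t$ in any path--cycle decomposition. The first step is therefore to make $f$ \emph{acyclic on its support}, at least locally around $s$ and $t$. The observation to exploit is this: if $g_v := f_{(v,s)}>0$, then the flow entering $s$ from $v$ can be cancelled by walking backwards from $v$ along flow-carrying arcs until reaching $s$. Conservation guarantees that such a walk terminates at $s$ or closes a cycle. Symmetrically, flow on $(t,v)$ can be cancelled by walking forward from $v$ until returning to $t$.

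\textbf{Phase 1: incoming flow at $s$.} Consider the support graph $D$ of $f$ with all auxiliary $(t,\cdot)$ arcs temporarily treated as ordinary arcs. Instead of cancelling cycles one at a time, I would cancel flow \emph{globally}. Reduce the flow on every $(v,s)$ arc to zero. This creates a deficit $g_v$ at each such $v$ and an equal surplus at $s$. The task is then to route a flow carrying the surplus at $s$ back to the deficits at the vertices $v$, using only \emph{reversals} of positive-flow arcs, that is, by decreasing flow on existing arcs. This is a flow-decomposition problem on the support DAG once cycles are removed. To remove cycles in parallel, I would compute strongly connected components of $D$ (randomized parallel SCC in $\Ot(m)$ work and $\Ot(n)$ depth, or reachability-based methods). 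Within each SCC, flow circulation can be cancelled by an Euler-tour-like argument. This is the step I am least sure how to do within depth $\Ot(n)$; I would first try to show that cycles not through $s$ or $t$ can simply be left alone, so that only the acyclic structure near $s$ matters.

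\textbf{Phase 1, continued: the cancelling is a backward topological sweep.} Assuming the relevant portion of $D$ reachable backwards from the vertices $v$ is acyclic, process vertices in reverse topological order by levels, giving $O(n)$ levels and $\Ot(1)$ depth per level with parallel prefix sums. Each vertex with a deficit $d_w$ satisfies its inflow $\ge d_w$ by conservation. It splits $d_w$ greedily among its incoming positive-flow arcs via a prefix sum, decreasing each arc's flow and passing deficits to the tails. A deficit that reaches $s$ cancels against the surplus. Each arc is handled once, which gives $\Ot(m)$ work, and the value of the flow is unchanged.

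\textbf{Phase 2: outgoing flow at $t$.} Symmetrically, zero the $(t,v)$ arcs, which creates surplus at each such $v$. Sweep forward topologically, decreasing outgoing flows, until the surplus reaches $t$; this part is acyclic by the same assumption. Decreasing flows never violates capacities, and Phase 1 never increases flow on $(t,\cdot)$ arcs, so Phase 2 does not reintroduce flow on $(\cdot,s)$ arcs. A deficit could instead reach $t$ or a surplus reach $s$, in which case the value only drops by an amount that must be argued to be zero. I expect to show this by arguing that the sweep prefers ending at $s$ (respectively $t$), or to accept a decrease and compensate. Guaranteeing value at least the original value, together with the cycle removal in depth $\Ot(n)$, is the main obstacle.
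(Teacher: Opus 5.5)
Your Phase~1 and Phase~2 sweeps work only under the assumption you flag yourself: that the part of the support of $f$ traversed backward from the vertices $v$ (forward, in Phase~2) is acyclic. That assumption is where the proof breaks. The theorem concerns an arbitrary feasible flow, whose support may contain cycles among vertices of $V\setminus\{s,t\}$, and the sweep cannot avoid them. Take flow $1$ on $(s,a)$, $10$ on $(a,b)$, $9$ on $(b,a)$, $1$ on $(b,v)$ and $1$ on $(v,s)$. The excess created at $v$ must be pushed back through $b$ and $a$, which lie on the cycle $a\to b\to a$. There is no topological order, and a greedy split at $a$ that charges $(b,a)$ just sends the excess around the cycle. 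So such cycles cannot be ``left alone''. Computing SCCs does not help either: inside an SCC the flow is not a circulation, since flow enters and leaves it, so an Euler-tour argument has nothing to cancel without first decomposing the flow. Your plan therefore needs a cycle-cancelling or path-decomposition routine with $\Ot(m)$ work and $\Ot(n)$ depth. The standard tool is the Sleator--Tarjan dynamic-tree method, which is sequential, and the paper notes that even the closely related preflow-to-flow conversion had no known $\Ot(n)$-depth algorithm. As it stands, the proposal reduces the theorem to this unsolved subproblem. Your second worry, about the value, is not a real obstacle. Your signs are reversed: zeroing $(v,s)$ raises $\ex(v)$ and lowers $\ex(s)$. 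With the correct signs, whatever the Phase~1 sweep absorbs at $t$, or the Phase~2 sweep absorbs at $s$, only increases the value.

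The idea you are missing is to eliminate vertices instead of tracing paths. The paper first splits each $(v,s)$ into $p_v=(v,\hat{v})$ and $q_v=(\hat{v},t)$. It then contracts the vertices of $V\setminus\{s,t\}$ one at a time. At the contracted vertex, incoming and outgoing flow are paired off by merging their prefix sums in $\Ot(1)$ depth. Each matched amount $x$ on $(u,v),(v,w)$ is moved to a fresh arc $(u,w)$, or simply deleted when $u=w$. Cycles are never detected explicitly; they vanish as two-cycle cancellations. Once only $s$, $t$ and $\hat V$ remain, an invariant guarantees that the only flow leaving $\hat{v}$ is on $q_v$, so zeroing all arcs at $\hat V$ keeps the value at least the original. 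The contractions are then reverted in reverse order, moving back $\min(x,f_c)$ rather than $x$. This yields a flow that is componentwise at most the input, hence feasible, and carries no flow on the former $(v,s)$ arcs. Each step contracts a vertex of minimum positive-flow degree, and contractions never increase the number of positive-flow arcs, so a harmonic sum gives $\Ot(m)$ work. The $n-2$ contractions give $\Ot(n)$ depth. The $(t,v)$ arcs are handled by running the procedure on the reversed flow with $s$ and $t$ swapped, followed by one more unswapped run.
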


\subsection{Description of the algorithm and the proof of correctness}
We will construct a simpler algorithm, similar to the algorithm described in \Cref{thm:additional_arc_removal}, except that it only guarantees that the flow on the distinguished arcs $(v, s)$ is zero. By reversing the flow direction, using this algorithm with the roles of $s$ and $t$ temporarily swapped, and reversing the direction of the resulting flow back, we can remove the flow from the distinguished arcs $(t, v)$. To then remove the resulting flow from the distinguished arcs $(v, s)$, we can run the algorithm again, this time without any changes to the flow, giving us \Cref{thm:additional_arc_removal}. As shown below, each invocation of the simpler algorithm only decreases the flow on arcs of its input instance, so the second invocation cannot reintroduce flow on the distinguished arcs $(t,v)$ whose flow was set to zero by the first invocation.

At the beginning of the simpler algorithm, we set the flow on every self-loop to 0 and remove all self-loops from the working instance. This does not change any vertex excess or the flow value. At the end, the input self-loops are restored with flow 0.

Let us define \emph{vertex contraction} for a vertex $v \in V \backslash \{s, t\}$ as follows. Assume we are given two lists: the list of incoming arcs with their respective flow values and the list of outgoing arcs with their flow values. Because $\ex(v) = 0$, the sums of the flows in the first list and in the second list are the same. Let us assume that the first elements of those lists are arc objects $a = (u, v)$ with flow $f_a$ for the first list and $b = (v, w)$ with flow $f_b$ for the second list, and let $x = \min\left(f_a, f_b\right)$. If $u\neq w$, we remove $x$ units of flow from arcs $a$ and $b$ and add them to a fresh arc object $c = (u, w)$. We call such an operation a \emph{small contraction}. If $u=w$, we instead remove $x$ units of flow from $a$ and $b$ without creating a self-loop; we call this operation a \emph{two-cycle cancellation}. After updating the flow values in the lists, at least one of the arcs $a$ and $b$ will have zero flow and thus can be removed from the list.

For each small contraction we also remember a \emph{small contraction entry}, which stores the arc $c$ itself, the amount of flow we have added to that arc, and the two arcs $a$ and $b$ from which $c$ was composed. With such information we are able to revert the small contraction using the corresponding entry. We do not create an entry for a two-cycle cancellation. We repeat the process until both lists are empty. When comparing flows before and after a small contraction, we regard the fresh arc $c$ as having flow 0 before its creation.

After this process ends (that is, both of the lists are empty), vertex $v$ has no incident arcs and can be removed from the instance, finalizing the vertex contraction.

\begin{lemma}\label{lem:contraction_excess}
   After each small contraction, two-cycle cancellation, or vertex contraction, the resulting flow $f'$ satisfies $\ex_{f'}(v) = 0$ for $v \in V \backslash \{s, t\}$ and has the same value as the original flow.
\end{lemma}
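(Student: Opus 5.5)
The plan is a local excess computation. A small contraction, a two-cycle cancellation, or a vertex contraction changes the flow on only a few arcs, so only the endpoints of those arcs need checking. I would first treat the two elementary operations and then argue that a vertex contraction is a finite sequence of them, ending with the removal of an isolated vertex.

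\textbf{Small contraction.} Take incoming arc $a=(u,v)$, outgoing arc $b=(v,w)$ with $u\neq w$, and $x=\min(f_a,f_b)$. The change in $\ex(\cdot)$ at each affected vertex is as follows.
\begin{itemize}
\item At $v$: inflow drops by $x$ (from $a$) and outflow drops by $x$ (from $b$), so the net change is $0$.
\item At $u$: outflow on $a$ drops by $x$ and outflow on the fresh arc $c=(u,w)$ rises by $x$, so the net change is $0$.
\item At $w$: inflow on $b$ drops by $x$ and inflow on $c$ rises by $x$, so the net change is $0$.
\end{itemize}
Every other vertex is untouched. This holds even when $u$ or $w$ is $s$ or $t$, so $\ex(s)$ and $\ex(t)$ are unchanged. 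Hence the value $-\ex_f(s)=\ex_f(t)$ is preserved, and conservation is preserved at every $v\in V\setminus\{s,t\}$.

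\textbf{Two-cycle cancellation.} Here $u=w$, and $x$ units are removed from both $a=(u,v)$ and $b=(v,u)$. The excess of $v$ loses $x$ of inflow and $x$ of outflow. The excess of $u$ loses $x$ of outflow and $x$ of inflow. No other vertex is affected, so again every excess, including those of $s$ and $t$, is unchanged.

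There is one subtlety: $u$ could coincide with $v$ only through a self-loop. Self-loops were removed at the start, so $u\neq v$ and $w\neq v$, and the bookkeeping above is not double counting.

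\textbf{Vertex contraction.} A vertex contraction is a sequence of the two elementary operations. By induction, each step preserves all excesses, and in particular keeps $\ex(v)=0$. This invariant is also what makes the process well defined: the two lists keep equal sums, so they become empty simultaneously.

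When both lists are empty, $v$ is isolated, and deleting it removes no flow from any arc incident to another vertex. So the excesses of the remaining vertices and the value are unchanged. Since $v\notin\{s,t\}$, removing it does not affect the value's definition either.

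The main obstacle, if any, is being careful about the degenerate cases: coincidences among $u$, $w$, $s$ and $t$, and the convention that $c$ has flow $0$ before it is created. I would state these explicitly.
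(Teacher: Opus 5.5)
Your proof is correct and follows the same route as the paper: a small contraction leaves the excesses of $u$, $v$, $w$ unchanged, a two-cycle cancellation removes a circulation, and a vertex contraction is just a sequence of these operations. Your explicit handling of the degenerate cases (self-loops, endpoints coinciding with $s$ or $t$, and deleting the isolated vertex at the end) adds detail that the paper leaves implicit, but the argument is the same.
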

\begin{proof}
   Every time we remove $x$ units of flow from arcs $(u, v), (v, w)$ and add $x$ units of flow to $(u, w)$ in a small contraction, the excesses of $u$, $v$ and $w$ do not change. A two-cycle cancellation removes a circulation, so it also preserves all excesses. Every vertex contraction is a list of small contractions and two-cycle cancellations.
\end{proof}

\begin{lemma}
   After each small contraction, two-cycle cancellation, or vertex contraction, the number of arcs with positive flow does not increase.
\end{lemma}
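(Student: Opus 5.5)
The plan is a case analysis on the single elementary operation, followed by induction over the operations that make up a vertex contraction. The quantity to track is the number of arc objects carrying positive flow. Self-loops were removed at the start, so every arc considered has distinct endpoints. Each operation changes flow only on the arcs it touches.

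\textbf{Small contraction.} Let $a=(u,v)$ and $b=(v,w)$ with $u\neq w$, and $x=\min(f_a,f_b)$. Both arcs are at the heads of the lists, so I assume they carry positive flow; zero-flow entries can be discarded from the lists beforehand without affecting anything. The operation lowers $f_a$ and $f_b$ by $x$, so at least one of them drops to $0$. The fresh arc object $c=(u,w)$ goes from flow $0$ (by the convention fixed above) to flow $x$. Thus at most one positive-flow arc is gained and at least one is lost, and the count does not increase. If both $f_a$ and $f_b$ hit zero, the count strictly decreases. No other arc changes, so this settles the case. The point to state carefully is that $c$ is a new object and is never merged with an existing parallel arc, so it contributes exactly one to the count.

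\textbf{Two-cycle cancellation.} Here $u=w$ and no arc is created. Flow on $a$ and $b$ only decreases, and at least one of them reaches $0$, so the count drops by at least one.

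\textbf{Vertex contraction.} This is a finite sequence of the two operations above, which terminates because each step empties at least one list entry. Applying the two cases inductively step by step gives monotonicity throughout. At the end, $v$ has no incident positive-flow arcs. Removing $v$ deletes only arcs with zero flow, since every incident arc was emptied from the lists, so the final removal does not change the count either.

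I expect no real obstacle. The only subtlety is the bookkeeping: the list entries must be positive-flow arcs (or zero-flow entries are skipped), and fresh arcs must be counted as distinct objects, consistent with the multigraph convention in \Cref{sec:preliminaries}.
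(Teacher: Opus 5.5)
Your proposal is correct and follows the same argument as the paper: a small contraction adds at most one new positive-flow arc while emptying at least one of $a,b$, a two-cycle cancellation creates nothing and empties at least one arc, and a vertex contraction is a sequence of these. Your extra bookkeeping is consistent with the paper's conventions and makes explicit what the paper leaves implicit; this covers positive-flow list entries, fresh arcs as distinct objects, and removing $v$ deleting only zero-flow arcs.
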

\begin{proof}
   For each small contraction we can possibly create a new arc $(u, w)$, but we remove all flow from either arc $(u, v)$ or $(v, w)$ (or both), removing at least one arc. A two-cycle cancellation creates no arc and also empties at least one arc. Thus, the number of positive-flow arcs cannot increase. The same is true for a vertex contraction.
\end{proof}

\begin{definition}[Contraction reversal]
   Let us assume that we are given some flow $f$ and a small contraction entry
   which says that $x$ units of flow were moved from $a$ and $b$ to $c$. Reverting the small contraction is defined as removing $x' = \min\left(x, f_c\right)$ units of flow from $f_c$ and adding $x'$ units of flow to $f_a$ and $f_b$.

   Reverting a vertex contraction is defined as reverting all of its small contractions in reverse chronological order. Two-cycle cancellations have no entries and are not reverted.
\end{definition}

Note that in the definition above we transfer $x'$ units of flow instead of $x$ because, in the algorithm, we will revert the contraction on some flow different to the result of the contraction.

\begin{lemma}\label{lem:reverting_contraction}
   Reverting a small or vertex contraction results in a flow that satisfies $\ex(v) = 0$ for $v \in V \backslash \{s, t\}$ and has the same value as the original flow.
\end{lemma}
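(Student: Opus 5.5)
Reverting a small contraction is a local operation: if the entry says $x$ units were moved from $a=(u,v)$ and $b=(v,w)$ to $c=(u,w)$, the reversal subtracts $x'=\min(x,f_c)$ from $f_c$ and adds $x'$ to both $f_a$ and $f_b$. The plan is to check that this preserves every excess, and then to extend the claim to vertex contractions by induction on the number of small contractions reverted. Since $u\neq w$ for every small contraction (two-cycle cancellations create no entries), the three vertices $u,v,w$ are distinct, and I will handle each of them in turn.

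At $u$: the outgoing flow on $c$ drops by $x'$ and the outgoing flow on $a$ rises by $x'$, so $\ex(u)$ is unchanged. At $w$: the incoming flow on $c$ drops by $x'$ and the incoming flow on $b$ rises by $x'$, so $\ex(w)$ is unchanged. At $v$: it gains $x'$ incoming on $a$ and $x'$ outgoing on $b$, so $\ex(v)$ is unchanged. No other vertex is incident to $a$, $b$ or $c$.

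It follows that every vertex excess is preserved. Hence $\ex(v)=0$ continues to hold for $v\in V\setminus\{s,t\}$, and the value $-\ex(s)=\ex(t)$ is unchanged. This holds even if some of $u,w$ equal $s$ or $t$, because the argument is about excesses of individual vertices, not about conservation.

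The one point to watch is that the reversal is applied to a flow that may differ from the one produced by the contraction. This is why $x'$ is capped at $f_c$: the cap keeps $f_c$ nonnegative. The excess computation above, however, does not depend on the value of $x'$. Capacity feasibility is not part of this lemma's claim, so I will not address it here.

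For a vertex contraction, the reversal is a sequence of small-contraction reversals performed in reverse chronological order. Each step preserves all excesses by the argument above, so by induction on the number of steps the composition does as well. A subtlety is that a reversal may reintroduce an arc incident to the removed vertex $v$. This is harmless: the excess argument covers $v$ like any other vertex, and with $v$ restored its excess remains zero.
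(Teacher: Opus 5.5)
Your proof is correct and is essentially the paper's argument: the paper says the proof is the same as for Lemma~\ref{lem:contraction_excess}, namely that each small-contraction reversal changes every vertex excess by zero and a vertex-contraction reversal is a sequence of such reversals. Your claim that $u,v,w$ are distinct is not needed, and $u\neq w$ alone does not give it; distinctness instead follows from the absence of self-loops. In any case, adding $x'$ to $a$ and $b$ while removing $x'$ from $c$ has zero net effect on every vertex's excess whether or not the vertices are distinct.
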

\begin{proof}
   The proof is the same as for \Cref{lem:contraction_excess}. The net change in each excess is $0$.
\end{proof}

\begin{lemma}\label{lem:contraction_invariant}
   Let us assume that a small contraction or a two-cycle cancellation applied to a flow $f$ results in a flow $g$.
   Let us take any flow $g' \le g$. Reverting the small contraction using $g'$ as the flow instead of $g$, or doing nothing for a two-cycle cancellation, produces a flow $f'$ that satisfies $f' \le f$.
\end{lemma}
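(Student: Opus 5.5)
The plan is to compare $f'$ and $f$ coordinatewise, splitting the arcs into those touched by the operation and those untouched.

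\emph{Two-cycle cancellation.} The operation only decreases flow: it removes $x$ units from $a$ and $b$ and changes nothing else, so $g\le f$. Since nothing is reverted, $f'=g'\le g\le f$.

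\emph{Small contraction.} Let the small contraction move $x$ units from $a=(u,v)$ and $b=(v,w)$ to the fresh arc $c=(u,w)$. Then
\[
g_a=f_a-x,\qquad g_b=f_b-x,\qquad g_c=x,
\]
and $f_c=0$ under the convention that $c$ has flow $0$ before its creation. Every other arc $e$ satisfies $g_e=f_e$. Reverting with $g'$ uses $x'=\min(x,g'_c)$ and gives
\[
f'_a=g'_a+x',\qquad f'_b=g'_b+x',\qquad f'_c=g'_c-x',\qquad f'_e=g'_e \text{ for all other } e.
\]
Each coordinate is then checked separately.

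For an untouched arc $e$, we have $f'_e=g'_e\le g_e=f_e$.

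For arc $a$, we have $f'_a=g'_a+x'\le g_a+x=f_a$, and the same argument applies to $b$.

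For arc $c$, we have $f'_c=g'_c-\min(x,g'_c)$. If $g'_c\le x$, this equals $0=f_c$. Otherwise $g'_c>x$; but $g'_c\le g_c=x$, so this case cannot occur. Hence $f'_c=0=f_c$.

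We also need $f'\ge 0$, so that the reverted vector is a valid flow vector. This follows because $x'\le g'_c$ and the flows on $a$ and $b$ only increase.

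The only delicate point is the bookkeeping for the fresh arc $c$: we must use the convention $f_c=0$ and observe that $x'=g'_c$. The rest is routine monotonicity.
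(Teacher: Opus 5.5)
Your proof is correct and is essentially the paper's coordinatewise argument: the two-cycle case uses $g\le f$, and for a small contraction you compare the arcs $a$, $b$, the fresh arc $c$, and the untouched arcs separately. The only difference is cosmetic: you observe up front that $g'_c\le g_c=x$, so $x'=g'_c$ always. The paper instead splits into the case $x\le g'_c$, where the reversal applies the same change to $g'$ as to $g$, and the case $x>g'_c$, where less flow is added back and $f'_c=0=f_c$.
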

\begin{proof}
   For a two-cycle cancellation, $g\le f$, so doing nothing gives $f'=g'\le g\le f$. Now, let us take a small contraction entry that moves $x$ units of flow from arcs $a$ and $b$ to the fresh arc $c$. If $x \le g'_c$, then $x' = x$ when we revert the contraction, meaning the change to both flows, $g'$ and $g$, is the same, so $f' \le f$.

   In the case where $x > g'_c$, let us look at the changes made by reverting the small contraction for both $g$ and $g'$. For $g$, we increase the flow on arcs $a$ and $b$ by $y = \min\left(x, g_c\right)$ and decrease the flow on arc $c$ by $y$. For $g'$, reverting the contraction adds
   \[y' = \min\left(x, g'_c\right) = g'_c \le y\]
   units of flow to arcs $a$ and $b$, and sets the flow on arc $c$ to 0. Because in $f'$ we have added back at most as much flow as in $f$, and $f'_c = 0 = f_c$, this implies that after those changes we have $f' \le f$.
\end{proof}

By cleverly using the vertex contractions, we can remove the flow from the distinguished auxiliary arcs $a_v = (v, s)$. We construct a graph in which, for each vertex $v \in V \backslash \{s, t\}$, we create an additional vertex $\hat{v} \in \hat{V}$. We replace each arc $a_v$ with two fresh arcs $p_v = (v, \hat{v})$ and $q_v = (\hat{v}, t)$, both with the same flow as $a_v$. This temporarily increases the flow's value. We can also assume w.l.o.g. that the input graph has no flow on any arc from $t$ to $s$, since otherwise we could set the flow on every such arc to zero to improve the flow.

We then iteratively take a vertex $v \in V \backslash \{s, t\}$ with the minimum current number of incident positive-flow arcs, contract it and repeat until the only vertices that are left are $s, t$ and $\hat{V}$.

We will now prove that contractions and their reversals maintain the following invariant.

\begin{lemma}\label{lem:invariant_additional_vertices_outputs}
   Small contractions and their small contraction reversals preserve the invariant that, for every $\hat{v} \in \hat{V}$, the flow on arc $q_v$ is never modified and $f_e = 0$ for every other arc object $e$ with $\operatorname{tail}(e) = \hat{v}$.
\end{lemma}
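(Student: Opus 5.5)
We prove the invariant by induction over the sequence of operations: first the initial construction, then each small contraction (inside a vertex contraction), and then each reversal. Initially, $\hat{v}$ has exactly two incident arc objects, $p_v=(v,\hat{v})$ and $q_v=(\hat{v},t)$. The only arc leaving $\hat{v}$ is therefore $q_v$, so the invariant holds at the start.

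For small contractions, the key observation is that $\hat{v}$ is never contracted, since contractions are applied only to vertices in $V\setminus\{s,t\}$. A small contraction at a vertex $w$ modifies only arcs incident to $w$: it takes $a=(u,w)$ and $b=(w,x)$ and creates $c=(u,x)$. We must show that no arc leaving any $\hat{v}$ is modified, except that a fresh arc with tail $\hat{v}$ might be created and must then carry zero flow. By the induction hypothesis, the only positive-flow arc with tail $\hat{v}$ is $q_v$, whose head is $t\neq w$. Hence no positive-flow incoming arc $a$ of $w$ has tail $\hat{v}$, and since the incoming list only contains arcs with positive flow, $u\neq\hat{v}$ for every $a$ used. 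So the fresh arc $c$ never has tail $\hat{v}$, and $q_v$ is never touched because $w\notin\{\hat{v},t\}$. The same argument covers two-cycle cancellations, which only decrease flow on arcs incident to $w$. One point needs care: the lists contain only arcs with positive flow, and zero-flow arcs with tail $\hat{v}$ are simply ignored. If the lists might include zero-flow arcs, $x=\min(f_a,f_b)=0$ and we skip creating $c$, which we adopt as a convention.

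For reversals, reverting an entry $(c;a,b;x)$ moves $x'\le f_c$ units from $c$ back to $a$ and $b$. By the previous paragraph, none of $a$, $b$, $c$ has tail $\hat{v}$: $c$ was created with tail $u\neq\hat{v}$, $a$ has tail $u$, and $b$ has tail $w\neq\hat{v}$. None of them equals $q_v$ either, because $q_v$ was never the $a$ or $b$ of any contraction: its tail $\hat{v}$ is never contracted, and its head is $t$. Hence reversals also leave $q_v$ unchanged and never add flow to an arc leaving $\hat{v}$.

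The main obstacle is making the claim ``no arc ever used as $a$ in a contraction has tail $\hat{v}$'' airtight throughout the interleaving of contractions and later reversals, which act on a possibly different flow $g'\le g$. We handle this with a stronger static statement: every arc object appearing in any contraction entry has a tail outside $\hat{V}$ and differs from every $q_v$. This holds at creation time by the argument above, and it never changes afterwards because entries are immutable. The invariant then follows for both kinds of operations.
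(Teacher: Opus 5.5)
Your proof is correct and uses the same key facts as the paper's. A small contraction at $w$ could only create a positive-flow arc out of $\hat{v}$ from a positive-flow arc $(\hat{v}, w)$ with $w \neq t$. A reversal only adds flow to the recorded arcs $a=(u,w)$ and $b=(w,x)$, whose tails were already shown at contraction time to lie outside $\hat{V}$. You phrase this as forward induction with an immutable property of contraction entries rather than the paper's first-violation contradiction, and you also spell out why $q_v$ itself is never modified, which the paper leaves implicit.
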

\begin{proof}
   Let us assume otherwise. If a positive-flow arc $(\hat{v}, x)$ was the result of some small contraction, then this means that before that contraction there were positive-flow arcs $(\hat{v}, w), (w, x)$ for some $w \in V \backslash \{s, t\}$, but the existence of the arc $(\hat{v}, w)$ contradicts the invariant.

   Let us assume that the arc $(\hat{v}, x)$ was the result of some small contraction reversal. The first case is that the small contraction reversal turned some arc $(\hat{v}, z)$ for some $z$ into arcs $(\hat{v}, x), (x, z)$. This would imply that the original small contraction belongs to the vertex contraction of $x \in V \backslash \{s, t\}$. Because of the invariant and because $(\hat{v}, x)$ had positive flow at the time of the contraction, we get $x = t$, which contradicts $x \in V \backslash \{s, t\}$.
   The second case is that the small contraction reversal turned an arc $(z, x)$ for some $z$ into arcs $(z, \hat{v}), (\hat{v}, x)$, but this implies that the small contraction belongs to the vertex contraction of $\hat{v}$, which again is a contradiction.
\end{proof}

Because the invariant is true before contracting vertices, \Cref{lem:invariant_additional_vertices_outputs} implies that it is also true afterward. Because the graph at the end has vertex set $\{s,t\}\cup\hat{V}$, the invariant implies that the only arcs that can be in the graph are $(s, t)$, $(t, s)$, $(s, \hat{v})$, $q_v=(\hat{v}, t)$, and $(t, \hat{v})$ for all $\hat{v} \in \hat{V}$.

Let us set the flow on all arcs incident to vertices in $\hat{V}$ to 0. This change preserves $\ex(v) = 0$ for $v \notin \{s, t\}$. Let $q$ be the value of the original flow after potentially removing the flow on arcs from $t$ to $s$, and let $C = \sum_{v \in V\setminus\{s,t\}} f_{q_v}$, where these flow values are taken immediately before this change. By \Cref{lem:invariant_additional_vertices_outputs}, $C$ is unchanged by the contractions and is exactly the amount by which replacing the arcs $a_v$ increased the flow value. Thus, the flow immediately before removing the arcs incident to $\hat{V}$ has value $q+C$.

Let $B = \sum_{e:\,\operatorname{tail}(e)=t,\,\operatorname{head}(e)\in\hat{V}} f_e$ immediately before this removal. Removing the arcs incident to $\hat{V}$ changes the excess of $t$ by $-C+B$, so the resulting flow has value $q+B\ge q$, which is at least the value of the input flow. Let $f'^{(n - 2)}$ be the flow after this change, where $n - 2$ is the number of vertex contractions performed by the algorithm, and let $f^{(i)}$ be the original flow after $i$ contractions performed by the algorithm.

We can now revert each vertex contraction one by one. We start with flow $f'^{(n - 2)}$, revert the last ($(n - 2)$-th) contraction to get the flow $f'^{(n - 3)}$, then revert the second-to-last ($(n - 3)$-th) contraction to get the flow $f'^{(n - 4)}$ and so on, until we get the flow $f'^{(0)}$ after reverting all the vertex contractions. Because $f'^{(n - 2)} \le f^{(n - 2)}$, applying \Cref{lem:contraction_invariant} $n - 2$ times gives $f'^{(0)} \le f^{(0)}$. Because $f^{(0)} \le u$, we get that $f'^{(0)}$ is a feasible flow.

\begin{lemma}
   The flow $f'^{(0)}$ does not use the additional arcs $q_v$ or any arc object with head in $\hat{V}$.
\end{lemma}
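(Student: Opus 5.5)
The plan is to reuse \Cref{lem:contraction_invariant} and the monotonicity $f'^{(i)}\le f^{(i)}$, together with the explicit form of $f'^{(n-2)}$. I would prove by backward induction on $i=n-2,\dots,0$ a slightly stronger claim: in $f'^{(i)}$, every arc object incident to a vertex of $\hat V$ carries zero flow. Applied at $i=0$, this covers the arcs $q_v$ and every arc object with head in $\hat V$, in particular the arcs $p_v$. The base case $i=n-2$ holds by construction, since we explicitly set the flow on all arcs incident to $\hat V$ to $0$.

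For the inductive step, consider reverting a single small contraction entry that moves $x'$ units from $c=(u,w)$ back to $a=(u,y)$ and $b=(y,w)$, where $y$ is the contracted vertex. We have $y\in V\setminus\{s,t\}$, because only such vertices are contracted, so $y\notin\hat V$. The reversal only increases flow on $a$ and $b$, so new flow near $\hat V$ can appear only if $u\in\hat V$ or $w\in\hat V$. In that case $c$ itself is incident to $\hat V$. By the induction hypothesis, applied to the current flow at the moment of this reversal, $c$ has zero flow. Then $x'=\min(x,f_c)=0$ and nothing changes.

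The inductive hypothesis has to hold at the granularity of individual small contraction reversals, not just whole vertex contractions. So I would phrase the induction over the sequence of all small contraction entries taken in reverse chronological order. Two-cycle cancellations are not reverted, so they do nothing. Arcs that are not touched keep their flow, including $q_v$, whose flow is already $0$. Hence the invariant ``all arcs incident to $\hat V$ have zero flow'' is preserved by every elementary step, and it holds for $f'^{(0)}$.

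The main subtlety I expect is bookkeeping: checking that a reversal can only add flow to the two arcs $a,b$ recorded in its entry, and only by transferring it from $c$. This follows from the definition of contraction reversal. With that in place, the zero flow on $c$ forces $x'=0$ whenever $c$ touches $\hat V$. Fresh arcs $c$ created during the contractions remain arc objects of the instance, so the statement ``every arc object with head in $\hat V$'' is handled uniformly. \Cref{lem:invariant_additional_vertices_outputs} is not needed for this argument, although it is consistent with it.
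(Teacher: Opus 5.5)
Your argument is correct, but it takes a different route from the paper.

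\textbf{The paper's route.} It first invokes \Cref{lem:invariant_additional_vertices_outputs}. This is a structural fact about the contraction process: small contractions and their reversals never modify $q_v$ and never put flow on any other arc leaving $\hat v$. Together with $f'^{(n-2)}_{q_v}=0$, this gives zero flow on every arc leaving $\hat V$ in $f'^{(0)}$. The paper then handles the incoming arcs of $\hat v$ separately, using $\ex(\hat v)=0$ (preserved by \Cref{lem:reverting_contraction}) and nonnegativity.

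\textbf{Your route.} You maintain a single invariant along the reversal sequence, namely that every arc touching $\hat V$ carries zero flow. You use only two facts: the cap $x'=\min(x,f_c)$, and that no vertex of $\hat V$ is ever contracted. Whenever $a$ or $b$ touches $\hat V$, so does $c$, which forces $x'=0$.

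\textbf{Comparison.} Your version treats incoming and outgoing arcs symmetrically. It needs neither flow conservation nor any reasoning about which entries the forward contractions can produce. It is therefore more self-contained and does not depend on the rather terse proof of \Cref{lem:invariant_additional_vertices_outputs}. The paper's proof is shorter only because that lemma is needed anyway, to describe the final graph and to show that $C$ is unchanged.

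\textbf{One point to state explicitly.} Your base case should also cover arc objects incident to $\hat V$ that were dropped from the working instance. Examples are $p_v$, or a fresh arc $(u,\hat v)$ whose tail $u$ was later contracted. These carry zero flow in $f'^{(n-2)}$ even if the reset step only touches the final working graph. The reason is that every arc incident to a contracted vertex is emptied by its contraction, and later contractions never refill it. Your induction then applies to them uniformly.
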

\begin{proof}
   By \Cref{lem:invariant_additional_vertices_outputs} and the facts that the invariant is satisfied in $f'^{(n - 2)}$ and that $f'^{(n - 2)}_{q_v} = 0$ for $\hat{v} \in \hat{V}$, we get that $f'^{(0)}_{q_v} = 0$. Because $\ex(\hat{v}) = 0$ and every other arc leaving $\hat{v}$ also has zero flow, we get that $f'^{(0)}_e = 0$ for every arc object $e$ with $\operatorname{head}(e) = \hat{v}$.
\end{proof}

The resulting flow does not use the distinguished arcs $a_v$, does not use newly added arcs and has at least the same value as the original one. Moreover, the inequality $f'^{(0)}\le f^{(0)}$ proved above implies that, on every arc of the original instance, the resulting flow is at most the input flow.

\subsection{Parallel implementation}
In this section we will show how to implement the algorithm in $\Ot(m)$ work and $\Ot(n)$ depth, as in \Cref{thm:additional_arc_removal}.

\begin{lemma}
   Each vertex contraction can be performed in $\Ot(1)$ depth.
\end{lemma}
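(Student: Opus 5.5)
The plan is to view a vertex contraction as a merge of two sorted lists by prefix sums. Fix $v$ and list its positive-flow incoming arcs $a_1,\dots,a_p$ (with flows $f_{a_1},\dots,f_{a_p}$) and its positive-flow outgoing arcs $b_1,\dots,b_q$, in the fixed order the sequential procedure uses. The sequential process always pairs the current heads of the two lists and transfers $\min$ of their residual amounts. Its outcome is therefore determined by the prefix sums
\[
A_i=\sum_{i'\le i} f_{a_{i'}},\qquad B_j=\sum_{j'\le j} f_{b_{j'}},
\]
and $A_p=B_q$ holds because $\ex_f(v)=0$.

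Place all the breakpoints $\{A_i\}\cup\{B_j\}$ on the segment $[0,A_p]$. Each elementary interval between consecutive distinct breakpoints corresponds to exactly one step of the sequential process. That step pairs the $a_i$ whose interval $(A_{i-1},A_i]$ contains the elementary interval with the $b_j$ whose interval $(B_{j-1},B_j]$ contains it, and the amount $x$ moved is the length of the elementary interval. I would prove this correspondence by a short induction on the number of steps: after $r$ steps, the amount already consumed equals the $r$-th breakpoint.

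The parallel implementation then proceeds in five steps.
\begin{enumerate}
\item Compute the prefix sums $A_i$ and $B_j$ by parallel scans in $\Ot(1)$ depth.
\item Merge the two sorted breakpoint sequences with a parallel merge (or a parallel sort) in $\Ot(1)$ depth, and deduplicate equal values.
\item For each elementary interval, find the owning $a_i$ and $b_j$. Each breakpoint can carry a tag of its originating list, and a prefix-maximum scan over the tags gives the current index in each list. Alternatively, a binary search into $A$ and into $B$ works, with $O(\log)$ depth per interval and all intervals in parallel.
\item For each interval in parallel, emit either a small contraction with a fresh arc $c=(\operatorname{tail}(a_i),\operatorname{head}(b_j))$ together with its entry $(c,x,a_i,b_j)$, or, when the tail of $a_i$ equals the head of $b_j$, a two-cycle cancellation with no entry.
\item Zero the flows on all arcs incident to $v$, insert the fresh arcs into the adjacency lists of their endpoints, and delete $v$.
\end{enumerate}
The chronological order of the entries, which reversal needs, is simply the order of the elementary intervals. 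So every entry receives its index directly from the merge.

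The number of steps is at most $p+q$, so the work is linear in the current degree of $v$ up to polylogarithmic factors. I expect the main obstacle to be the bookkeeping of step 5. Many fresh arcs must be inserted concurrently into the adjacency structures of the endpoints, which I would handle with parallel semisort/grouping by endpoint followed by concatenation, in $\Ot(1)$ depth. The induction establishing that the parallel result matches the sequential order of steps is routine.
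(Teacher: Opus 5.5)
Your proposal is correct and takes essentially the same approach as the paper: prefix sums over the incoming and outgoing flow lists, merging (or sorting) and deduplicating the breakpoints, locating the incoming and outgoing arc of each elementary segment by binary search on the prefix sums, and emitting in parallel one small contraction or two-cycle cancellation per segment. Your induction matching segments to sequential steps and your grouping of the fresh arcs by endpoint spell out details that the paper treats as trivial.
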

\begin{proof}
   Let us visualize how we compute the vertex contraction. In \Cref{fig:contraction} we have drawn two lists, one of incoming arcs and one of outgoing arcs, in the following way. Let $x_i$ be the flow values in the first list and $y_i$ be the flow values in the second list. We draw a main interval that consists of a sub-interval of length $x_1$, then a sub-interval of length $x_2$ and so on. We do the same for the other list using $y_i$. Both main intervals have the same total length, because $\ex(v) = 0$.

   The process of a single vertex contraction can be interpreted as follows. We go from left to right on the main intervals until one of the sub-intervals ends. We draw there a vertical dotted line and repeat the process until we get to the end of the main intervals. We also add dotted lines at the beginning and end of the main intervals.

   The small contractions that are the result of the vertex contraction are represented by the segments between the dotted lines. Arcs from the sub-intervals that intersect the segment correspond to arcs that the small contraction combines, and the length of a segment corresponds to the amount of flow moved by the small contraction onto the combined arc.
   \begin{figure}[!ht]
      \centering
      \begin{tikzpicture}[scale=0.7]
      \draw (0,0) -- (12,0);
      \draw (0,0.5) -- (0,-0.5);
      \draw (1, 0.5) node{$y_1$};
      \draw (2,0.5) -- (2,-0.5);
      \draw (4, 0.5) node{$y_2$};
      \draw (6,0.5) -- (6,-0.5);
      \draw (7.5, 0.5) node{$y_3$};
      \draw (10,0.5) -- (10,-0.5);
      \draw (11, 0.5) node{$y_4$};
      \draw (12,0.5) -- (12,-0.5);

      \draw (0,2) -- (12,2);

      \draw (0,1.5) -- (0,2.5);
      \draw (1.5, 2.5) node{$x_1$};
      \draw (3,1.5) -- (3,2.5);
      \draw (5, 2.5) node{$x_2$};
      \draw (7,1.5) -- (7,2.5);
      \draw (8.5, 2.5) node{$x_3$};
      \draw (10,1.5) -- (10,2.5);
      \draw (11, 2.5) node{$x_4$};
      \draw (12,1.5) -- (12,2.5);

      \draw (0, 3) edge[dotted] (0, -1);
      \draw (2, 3) edge[dotted] (2, -1);
      \draw (3, 3) edge[dotted] (3, -1);
      \draw (6, 3) edge[dotted] (6, -1);
      \draw (7, 3) edge[dotted] (7, -1);
      \draw (10, 3) edge[dotted] (10, -1);
      \draw (12, 3) edge[dotted] (12, -1);
      \end{tikzpicture}
      \caption{Visual representation of a vertex contraction}
      \label{fig:contraction}
   \end{figure}
   We can compute the set of positions of the dotted lines by computing the prefix sums for both of the lists~\cite{parallel_scan}, combining them, adding positions at the beginning and end of the main interval, sorting the positions~\cite{parallel_sort} and removing duplicates. Using binary search on the computed prefix sums, for each segment between consecutive positions, we can find the incoming and outgoing arcs to which the segment corresponds. Each small contraction adds flow to a different arc, whereas segments whose incoming and outgoing arcs form a two-cycle are discarded without creating an arc. In total, the operations remove all of the flow from all the arcs incident to the contracted vertex. Those changes can be trivially implemented in $\Ot(1)$ depth.
\end{proof}
\begin{lemma}
   Each vertex contraction reversal can be done in $\Ot(1)$ depth.
\end{lemma}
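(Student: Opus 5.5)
The plan is to undo the parallel contraction step by step. A vertex contraction reversal is a sequence of small contraction reversals applied in reverse chronological order. The key observation is that the small contractions inside one vertex contraction act on pairwise disjoint fresh arcs. Each small contraction entry (one segment between consecutive dotted lines in \Cref{fig:contraction}) created its own fresh arc object $c$. So the amounts $x'=\min(x,f_c)$ removed from the fresh arcs depend only on the current flow on distinct arcs, and they can all be computed simultaneously in $\Ot(1)$ depth. The apparent sequential dependency is harmless for the following reason: reverting one entry only adds flow to the contracted vertex's incident arcs $a,b$ and subtracts flow from its own $c$. It never reads $f_a$ or $f_b$, so the result does not depend on the order.

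Concretely, I would store, for each vertex contraction, the array of its small contraction entries as produced by the segment computation. To revert it, I would first compute $x'_j=\min(x_j,f_{c_j})$ for every entry $j$ in parallel and set $f_{c_j}\gets f_{c_j}-x'_j$. Next, for every incident arc $a$ of the contracted vertex, I would sum the $x'_j$ over the entries $j$ that use $a$ as their incoming or outgoing arc, and add that sum to $f_a$. The entries using a fixed arc $a$ form a contiguous range of segments, since each sub-interval of $x_i$ or $y_i$ is split into consecutive segments. So these sums are segmented prefix sums (a parallel scan~\cite{parallel_scan}), or alternatively a sort by arc followed by a scan. The incoming and outgoing arcs, together with the vertex $v$, are then restored to the instance.

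The main point to check is that this simultaneous update coincides with the sequential definition. I would argue it by commutativity. The $x'_j$ computed sequentially equal those computed simultaneously, because entry $j$ reads only $f_{c_j}$, and no other entry of the same vertex contraction writes $c_j$. An earlier reversal (of a later vertex contraction) could have created or modified $c_j$, but those reversals are all completed before this one starts. The additions to the arcs $a$ and $b$ are then commutative sums. Hence the result is identical, and the whole reversal uses $\Ot(1)$ depth and work linear in the number of entries.
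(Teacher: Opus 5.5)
Your proposal is correct and takes essentially the same approach as the paper. In both, the key point is that each small contraction entry of a vertex contraction owns a distinct fresh arc $c$. Hence all values $x'=\min(x,f_c)$ can be computed independently, and the resulting increments to the incident arcs can be applied in $\Ot(1)$ depth. You give more implementation detail than the paper: the explicit order-independence argument and the segmented prefix sums for the increments. The paper simply asserts that $d$ parallel increments/decrements take $\Ot(d)$ work and $\Ot(1)$ depth.
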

\begin{proof}
   Each small contraction reversal only adds some flow to some arcs and removes the same amount of flow from some arc. Two-cycle cancellations have no entries to reverse. Because each remaining small contraction reversal for a given vertex contraction decomposes a different arc, we can compute each $x'$ independently. Performing a set of $d$ increments/decrements in parallel can be done in $\Ot(d)$ work and $\Ot(1)$ depth.
\end{proof}

\begin{corollary}
   The algorithm runs in $\Ot(n)$ depth.
\end{corollary}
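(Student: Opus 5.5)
The algorithm has two sequential phases: the $n-2$ vertex contractions followed by their $n-2$ reversals in reverse order. I will bound the depth of each phase separately and add the bounds. By the two preceding lemmas, each single vertex contraction and each single vertex contraction reversal takes $\Ot(1)$ depth. There are exactly $n-2$ contractions, one per vertex of $V\setminus\{s,t\}$, and the same number of reversals. So once I show that the overhead between consecutive steps is $\Ot(1)$ depth, the total depth is $\Ot(n)$.

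The step needing care is the glue around each contraction: selecting the next vertex, building its incoming and outgoing lists, and handling the preprocessing and postprocessing.

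\begin{enumerate}
\item \emph{Preprocessing.} Removing self-loops, zeroing flow on $t\to s$ arcs, and splitting each $a_v$ into $p_v,q_v$ are all independent per-arc operations. They take $\Ot(m)$ work and $\Ot(1)$ depth.
\item \emph{Adjacency lists.} I will maintain, for each remaining vertex, its list of incident positive-flow arcs. Each small contraction removes emptied arcs from these lists and appends the fresh arc $(u,w)$ to the lists of $u$ and $w$. Because all fresh arcs produced by one vertex contraction are distinct, these updates form a batch that can be applied with parallel sorting and grouping (or a parallel BST per vertex) in $\Ot(1)$ depth.
\item \emph{Selecting the next vertex.} The rule picks a vertex of minimum current positive-flow degree. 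I will keep all vertices of $V\setminus\{s,t\}$ in a parallel BST keyed by degree. A contraction changes the degrees only of neighbours of the contracted vertex, so the batch of key updates costs $\Ot(1)$ depth.
\item \emph{Final steps.} Zeroing the arcs incident to $\hat V$ is an independent per-arc operation, so it takes $\Ot(1)$ depth.
\end{enumerate}

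Summing, the contraction phase has $n-2$ rounds of $\Ot(1)$ depth each. The reversal phase has $n-2$ reversals, each of $\Ot(1)$ depth by the previous lemma. Hence the total depth is $\Ot(n)$.

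The main obstacle I expect is keeping the per-round bookkeeping, namely the degree BST and the per-vertex arc lists, at polylogarithmic depth. Here I will use that each contraction touches only the $O(\deg v)$ arcs incident to the contracted vertex $v$. All the list and BST updates in a round are therefore batched operations of that size, which standard parallel BST batch operations handle in $\Ot(1)$ depth. This argument covers depth only, which is all the corollary claims; the $\Ot(m)$ work bound would be argued separately.
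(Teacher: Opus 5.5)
Your proposal is correct and follows the paper's argument: $O(n)$ vertex contractions and reversals, each taking $\Ot(1)$ depth by the two preceding lemmas. The per-round bookkeeping you add (the degree-keyed parallel BST with batched key updates) is not in the paper's one-line proof of this corollary; the paper instead covers it in its work lemma for contractions.
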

\begin{proof}
   There are $O(n)$ contractions and contraction reversals; each one takes $\Ot(1)$ depth.
\end{proof}

\begin{lemma}
   All vertex contractions and vertex contraction reversals can be done in $\Ot(m)$ total work.
\end{lemma}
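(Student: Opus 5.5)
The plan is to charge the work of each vertex contraction to the number of positive-flow arcs incident to the contracted vertex, and then bound the sum of these degrees using the rule that we always contract a vertex of minimum current positive-flow degree.

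\textbf{Step 1: cost of one contraction.} By the implementation in the two preceding lemmas, contracting $v$ costs work $\Ot(d_v)$, where $d_v$ is the number of positive-flow arcs incident to $v$ at that moment. This covers the prefix sums, sorting, binary searches and creation of arcs. The number of segments, and hence of small contraction entries, is at most the number of sub-interval endpoints, i.e.\ $O(d_v)$. Reverting a vertex contraction processes exactly its entries, so it also costs $\Ot(d_v)$. Zero-flow arcs are discarded beforehand (a one-time $\Ot(m)$ filter), so they never contribute to any $d_v$. It therefore suffices to show $\sum_v d_v = O(m)$, summed over the $n-2$ contractions, together with an $\Ot(m)$ bound on maintaining the minimum-degree order.

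\textbf{Step 2: bounding $\sum_v d_v$.} By the lemma that contractions do not increase the number of positive-flow arcs, the current graph always has at most $m' = O(m)$ positive-flow arcs, where $m'$ counts the arcs after the $p_v,q_v$ split. At each step the graph is a multigraph with at most $m'$ arcs. Suppose $r$ vertices of $V\setminus\{s,t\}$ remain uncontracted. The minimum degree among them is then at most $2m'/r$, so the chosen $v$ has $d_v \le 2m'/r$. This gives $\sum_v d_v \le \sum_{r=1}^{n} 2m'/r = O(m\log n) = \Ot(m)$.

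The one subtlety is that vertices $s$, $t$ and $\hat V$ are never contracted. Since the minimum is taken only over the candidates in $V\setminus\{s,t\}$, the sum of their degrees is at most $2m'$, which is all the averaging argument needs. Parallel arcs count separately, and this is consistent, since each arc object costs work when it is processed.

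\textbf{Step 3: maintaining the minimum (the main obstacle).} We need to repeatedly extract the minimum-degree vertex while degrees change. I would keep the vertices in a parallel BST (or a bucket structure) keyed by current positive-flow degree. A contraction of $v$ changes the degrees only of the $O(d_v)$ endpoints of its incident and newly created arcs, so it triggers $O(d_v)$ key updates. These can be applied as one batch in $\Ot(d_v)$ work and $\Ot(1)$ depth, and by Step 2 they total $\Ot(m)$ work.

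Degree bookkeeping needs to count positive-flow arc objects. New fresh arcs $c$ increase degrees, while emptied arcs decrease them, and both are computed within the contraction. Reversals run after all contractions and need no priority structure. Summing contractions and reversals then gives $\Ot(m)$ total work.
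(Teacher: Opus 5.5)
Your proposal is correct and follows essentially the same route as the paper. You charge each contraction and its reversal work proportional to the current positive-flow degree $d_v$, keep the candidate vertices in a parallel BST keyed by degree with batched key updates, and use the averaging bound $d_v \le 2\hat m/r$ over the $r$ remaining candidates to get a harmonic sum of $O(m\log n)$. The only cosmetic difference is that the paper writes the per-contraction cost as $\Ot(d_v+1)$ to absorb the constant overhead when $d_v=0$; this adds only $\Ot(n)\le\Ot(m)$ overall.
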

\begin{proof}
   For a vertex $v$ that has $d_v$ incident positive-flow arcs, the contraction takes $\Ot(d_v+1)$ work. Reverting such a contraction also takes $\Ot(d_v+1)$ work. We maintain the vertices that can be removed in a parallel BST~\cite{BlellochFS16,GilMV91,PaulVW83}, keyed by their current pairs $(d_v,v)$. A contraction changes $O(d_v)$ arc incidences; after aggregating these changes by endpoint, we can update the affected keys in one batch. Thus, maintaining the BST and choosing its minimum takes $\Ot(d_v+1)$ work and $\Ot(1)$ depth per contraction. Because $\sum_{v \in V \cup \hat{V}}d_v = 2\hat{m} = O(m)$, where $\hat{m}$ is the current number of positive-flow arcs in the constructed instance, we know that if at a given moment there are $\hat{n}$ vertices that we can remove, then the smallest $d_v$ among these is at most $2\hat{m}/\hat{n}$. If this were not the case, then we could lower bound the sum of $d_v$ by more than $2\hat{m}$.

   For the first contraction, the algorithm can choose to remove one of $|V \backslash \{s, t\}| = n - 2$ vertices; for the next one, it can choose one of $n - 3$ vertices; and so on, until only one vertex is available for the last contraction. We can bound the sum of $d_v$ over the chosen vertices by
   \[\frac{2\hat{m}}{n - 2} + \frac{2\hat{m}}{n - 3} + \dots + \frac{2\hat{m}}{1} = O(m)O(\log n) = \Ot(m).\]
   The total work of vertex contractions and vertex contraction reversals can be bounded by
   \[\sum \Ot(d_v+1) \le \Ot\left(\sum d_v+n\right) \le \Ot(m).\]
\end{proof}

\section*{Acknowledgements and AI use}
We thank the anonymous ESA reviewers for their helpful comments.

The main \emph{vertex contraction} idea in the algorithm behind~\Cref{thm:additional_arc_removal} was devised by ChatGPT 5.5 Extended when queried for the known $\Ot(n)$-depth parallel algorithms for converting a \emph{preflow}~\cite{DBLP:journals/jacm/GoldbergT88} into a feasible flow. The simpler problem studied in this section can be reduced to preflow conversion.
Based on the vertex contraction idea, the authors devised an $\Ot(m)$-work algorithm described in~\Cref{sec:removing_v_s_and_t_v}.
To the best of our knowledge, an $\Ot(n)$-depth preflow coversion algorithm has not been described in the prior literature.

Codex with ChatGPT 5.6 Sol (Extra High) was used at the final stage of manuscript preparation for proofreading, grammar checks, and stylistic improvements.

\bibliographystyle{alpha}
\bibliography{references}
\clearpage
\appendix

\section{Omitted proofs}\label{sec:omitted}

\begin{lemma}\label{lem:basic_flow}
   Suppose a feasible flow satisfies item~\ref{basic_forest}~of~\Cref{def:basic_flow}. Then it also satisfies item~\ref{basic_convex}~of~\Cref{def:basic_flow}.
\end{lemma}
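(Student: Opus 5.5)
We argue by contradiction. Suppose $f$ satisfies item~\ref{basic_forest} of~\Cref{def:basic_flow}, but $f=\alpha f_1+(1-\alpha)f_2$ for some $\alpha\in(0,1)$ and two distinct feasible flows $f_1,f_2$. Put $g=f_1-f$. Then $g\neq 0$, and $f_2-f=-\tfrac{\alpha}{1-\alpha}g$. So both $f+g$ and $f-\lambda g$ are feasible for $\lambda=\alpha/(1-\alpha)>0$, and by convexity so is $f+\mu g$ for every $\mu\in[-\lambda,1]$. The plan is to show that the support of $g$ must contain an undirected cycle, or an undirected $s$--$t$ path, formed entirely of arcs with $0<f_e<u_e$. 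That contradicts the forest condition.

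\textbf{Step 1: $g$ lives on free arcs.} If $f_e=0$, then both $f_e+g_e\ge 0$ and $f_e-\lambda g_e\ge 0$, which forces $g_e=0$. If $f_e=u_e<\infty$, then both $f_e+g_e\le u_e$ and $f_e-\lambda g_e\le u_e$, which again forces $g_e=0$. Hence every arc with $g_e\neq0$ satisfies $0<f_e<u_e$; call this set $F$. For infinite-capacity arcs the condition $f_e<u_e$ holds automatically whenever $f_e$ is finite, so they need no separate treatment.

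\textbf{Step 2: conservation of $g$.} Since $f_1$ and $f$ are both feasible flows, $\ex_g(v)=0$ for every $v\notin\{s,t\}$. We also have $\ex_g(s)=-\ex_g(t)$, because the excesses of any vector sum to zero over all vertices.

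\textbf{Step 3: extracting a cycle or an $s$--$t$ path.} Consider the undirected multigraph on the support $\supp(g)\subseteq F$, and let $C$ be one of its connected components containing at least one arc. Suppose $C$ is a tree. Then it has a leaf $v$, and exactly one support arc is incident to $v$, carrying nonzero $g$. This gives $\ex_g(v)\neq0$, so $v\in\{s,t\}$. A finite tree with at least one arc has at least two leaves, so both $s$ and $t$ are leaves of $C$, and they lie in the same tree. That contradicts the requirement that $s$ and $t$ be in different trees of the forest on $F$, since $C$ is a subgraph of that forest. If instead $C$ is not a tree, then $C$ contains an undirected cycle. Note that a pair $e,\reverse e$ both in the support already forms a 2-cycle in the multiset. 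In either case the forest on $F$ would contain this cycle, a contradiction. Therefore $g=0$, contradicting $f_1\neq f$.

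The main obstacle I anticipate is the bookkeeping in Step 3 for multi-arcs and reverse pairs. One must make sure the undirected multiset view correctly counts parallel arcs and $e,\reverse e$ pairs as cycles, and that ``leaf'' is taken with respect to arc multiplicity. Self-loops in the support would count as cycles directly. Step 1 is where feasibility of both endpoints of the segment is essential, so I will state the segment $[-\lambda,1]$ explicitly.
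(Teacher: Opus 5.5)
Your proof is correct and follows essentially the same route as the paper's. The difference $g$ vanishes on arcs with $f_e\in\{0,u_e\}$, so its support is a subforest of the forest on arcs with $0<f_e<u_e$. Every leaf of that subforest must be $s$ or $t$, which places $s$ and $t$ in the same tree, a contradiction. The only differences are cosmetic: your parametrization via $g=f_1-f$, and your separate non-tree case, which is vacuous because $C$ already lies inside a forest.
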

\begin{proof}
   Assume the contrary: that we have $f = (1 - \alpha)f' + \alpha f''$ for some feasible flows $f' \neq f''$ and $\alpha\in(0,1)$, and the set of arcs $e$ such that $0 < f_e < u_e$ forms a forest without a path between $s$ and~$t$.

   Consider some $e\in E$. Observe that if $f_e = u_e$ or $f_e = 0$ then $f'_e = f''_e$. Indeed, $f_e = u_e$ can only hold if $f'_e = f''_e = u_e$ and $f_e = 0$ is true only if $f'_e = f''_e = 0$. Consequently, the multiset of undirected arcs obtained from the arcs~$e$ such that $f'_e \neq f''_e$ is a subforest of the forest formed by the arcs satisfying $0 < f_e < u_e$. Since $f'\neq f''$, this subforest has at least one non-trivial component, that is, one containing at least two vertices.

   Now let us consider a leaf $v\in V$ of a non-trivial component of this subforest. $v$ has exactly one arc $e$ for which $f'_e \neq f''_e$, and it is either incoming or outgoing.
   Because the flows $f'$ and $f''$ are equal on all other arcs incident to $v$, we have $\ex_{f'}(v)\neq \ex_{f''}(v)$.
   As a result, one of these excess values is not zero, so $v\in\{s,t\}$, as otherwise either $f'$ or $f''$ would not be a feasible flow.
   We thus conclude that all the leaves in the subforest are in $\{s,t\}$.
   This means that the only non-trivial component in the subforest is a path from $s$ to $t$.
   This contradicts the assumption that there is no such path in the original forest.
\end{proof}

\section{Counterexample to the algorithm of \cite{DBLP:journals/jpdc/PeretzF22}}\label{sec:error}
Our counterexample will show that \cite[Algorithm~1]{DBLP:journals/jpdc/PeretzF22} can perform $\Theta(n^2)$ iterations for some instances, meaning its depth is not $O(n)$.

The counterexample, illustrated in \Cref{figure:peretz}, consists of vertices $s, v, t, k_1, \dots, k_n, l_1, \dots, l_n$. We add infinite-capacity arcs $(s, k_i)$ and $(l_i, v)$ for $i = 1, \dots, n$, an infinite-capacity arc $(v, t)$, and $n^2$ unit-capacity arcs $(k_i, l_j)$ for $i \in \{1, \dots, n\}$ and $j \in \{1, \dots, n\}$. This instance has a maximum flow of $n^2$.
\begin{figure}[h!]
   \begin{center}
      \begin{tikzpicture}[style=tikzfig, scale=0.998]
      	\begin{pgfonlayer}{nodelayer}
      		\node [style=peretz] (0) at (-4, -0.5) {$s$};
      		\node [style=peretz] (1) at (0, 4) {$k_1$};
      		\node [style=peretz] (2) at (0, 1) {$k_2$};
      		\node [style=peretz] (3) at (0, -2) {$k_{n-1}$};
      		\node [style=peretz] (4) at (0, -5) {$k_n$};
      		\node [style=none] (dots) at (0, -0.5) {$\cdots$};
      		\node [style=peretz] (5) at (4, 4) {$l_1$};
      		\node [style=peretz] (6) at (4, 1) {$l_2$};
      		\node [style=none] (dots) at (4, -0.5) {$\cdots$};
      		\node [style=peretz] (7) at (4, -2) {$l_{n-1}$};
      		\node [style=peretz] (8) at (4, -5) {$l_n$};
      		\node [style=peretz] (9) at (8, -0.5) {$v$};
      		\node [style=peretz] (10) at (12, -0.5) {$t$};
      	\end{pgfonlayer}
      	\begin{pgfonlayer}{edgelayer}
      		\draw [style=default diedge] (4) to (7);
      		\draw [style=default diedge] (3) to (7);
      		\draw [style=default diedge] (2) to (7);
      		\draw [style=default diedge] (1) to (7);
      		\draw [style=default diedge] (4) -- node[below]{$1$} (8);
      		\draw [style=default diedge] (3) to (8);
      		\draw [style=default diedge] (2) to (8);
      		\draw [style=default diedge] (1) to (8);
      		\draw [style=default diedge] (2) to (6);
      		\draw [style=default diedge] (2) to (5);
      		\draw [style=default diedge] (1) to (6);
      		\draw [style=default diedge] (1) -- node[above]{$1$} (5);
      		\draw [style=default diedge] (3) to (6);
      		\draw [style=default diedge] (4) to (6);
      		\draw [style=default diedge] (3) to (5);
      		\draw [style=default diedge] (4) to (5);
      		\draw [style=default diedge] (0) -- node[above, sloped]{$\infty$} (1);
      		\draw [style=default diedge] (0) -- node[above, sloped]{$\infty$} (2);
      		\draw [style=default diedge] (0) -- node[above, sloped]{$\infty$} (3);
      		\draw [style=default diedge] (0) -- node[above, sloped]{$\infty$} (4);
      		\draw [style=default diedge] (5) -- node[above, sloped]{$\infty$} (9);
      		\draw [style=default diedge] (6) -- node[above, sloped]{$\infty$} (9);
      		\draw [style=default diedge] (7) -- node[above, sloped]{$\infty$} (9);
      		\draw [style=default diedge] (8) -- node[above, sloped]{$\infty$} (9);
      		\draw [style=default diedge] (9) -- node[above, sloped]{$\infty$} (10);
      	\end{pgfonlayer}
      \end{tikzpicture}
   \end{center}
   \caption{Counterexample}\label{figure:peretz}
\end{figure}

\cite[Algorithm~1]{DBLP:journals/jpdc/PeretzF22} first splits the sink $t$ into multiple sinks $t_1,\ldots,t_k$, one for each of the incoming arcs of~$t$.
This step in our instance has no particular effect.

Then, the algorithm proceeds in iterations.
The algorithm then, in each iteration, constructs a directed subtree $T$ of the residual network rooted at $s$  and computes a flow in $T$ from $s$ to the sinks $t_1,\ldots,t_k$, maximizing the sum of flows sent to these sinks. In our instance there is only one sink vertex, so the flow is pushed via a single path each time.

Since the path from $s$ to $t$ has to go through an arc $(k_i, l_j)$ with capacity 1, the algorithm will push exactly 1 unit of flow.
In the residual network, pushing 1 unit of flow through an arc with capacity 1 reverses it, maintaining the invariant that all arcs $(k_i, l_j)$ will either have capacity 1 or 0. This means that every iteration can push at most 1 unit of flow, which in turn implies that the algorithm has to perform at least $n^2$ iterations to find the maximum flow.

\end{document}